\documentclass[11pt,a4paper]{article}
\usepackage[T1]{fontenc}
\usepackage{lmodern}
\usepackage[margin=1in]{geometry}
\usepackage{amsmath,amssymb,amsthm,mathtools}
\usepackage{booktabs,array}
\usepackage{placeins}
\usepackage{needspace}
\usepackage{microtype}
\usepackage[hidelinks]{hyperref}
\hypersetup{pdftitle={Robust self-testing of nonmaximal entanglement from a reduced inner product game}}
\numberwithin{equation}{section}
\newtheorem{theorem}{Theorem}[section]

\newtheorem{lemma}[theorem]{Lemma}
\newtheorem{corollary}[theorem]{Corollary}
\theoremstyle{definition}

\theoremstyle{remark}

\DeclareMathOperator{\Tr}{Tr}
\DeclareMathOperator{\diag}{diag}
\newcommand{\C}{\mathbb C}
\newcommand{\R}{\mathbb R}

\newcommand{\ket}[1]{\lvert #1\rangle}
\newcommand{\bra}[1]{\langle #1\rvert}
\newcommand{\norm}[1]{\lVert #1\rVert}
\newcommand{\ip}[2]{\langle #1,#2\rangle}
\newcommand{\psistar}{\psi_\star}
\allowdisplaybreaks[2]
\title{Robust self-testing of nonmaximal entanglement\\from a reduced inner product game}
\author{
  Ranyiliu Chen\\
  {\small Quantum Science Center of Guangdong--Hong Kong--Macao Greater Bay Area}\\
  {\small\href{mailto:chenranyiliu@quantumsc.cn}{\texttt{chenranyiliu@quantumsc.cn}}}
}
\date{}
\begin{document}
\maketitle
\begin{abstract}
We reduce Lalonde's pseudo-telepathy inner product game from six to five dimensions, keeping its four Alice questions and three Bob questions and reducing each answer alphabet to five. We prove that the reduced game self-tests its nonmaximally entangled state and all 35 measurement effects. The self-test is also robust: for a strategy winning with probability at least $1-\varepsilon$ we obtain $d_{\mathrm{ext}}\le\min\{1,2^{21}\sqrt{\varepsilon}\}$, where $d_{\mathrm{ext}}=\sqrt{1-\Xi}$ and $\Xi$ is the optimal squared fidelity with the reference state under local extraction channels. A compactness argument then gives simultaneous qualitative robustness for all measurement actions on the state under common local isometries. As a consequence of the self-test, no maximally entangled state of any finite dimension can win the game perfectly.
\end{abstract}

\section*{\normalsize{Statement on the Use of Artificial Intelligence}}

The author(s) formulated the research question. Several frontier LLMs was used under human direction to generate exploratory proofs, prepare and revise the manuscript. The author(s) then refined the logical flow of the proof, guided the revision, and verified the statements. The author(s) take full responsibility for the content and correctness of this article.

\section{Introduction}
\label{sec:introduction}

In a bipartite nonlocal game, two players receive questions and return answers without communicating. The game exhibits \emph{pseudo-telepathy} if an entangled strategy wins with certainty but no classical strategy does~\cite{BrassardBroadbentTapp}. For such games with finite-dimensional perfect strategies, a basic question was whether a maximally entangled state--one whose nonzero Schmidt coefficients are all equal--always suffices for perfect play.

For general Bell inequalities, maximally entangled states need not attain the largest possible violations, as shown by Junge and Palazuelos and by Vidick and Wehner~\cite{JungePalazuelos,VidickWehner}. Whether they suffice for perfect winning proved harder to settle. Man\v{c}inska established structural conditions under which they do~\cite{Mancinska}. More recently, Renner et al. constructed nonmaximally entangled states supporting pseudo-telepathy in every local dimension at least three~\cite{RennerEtAl}. A game with such a strategy may also admit perfect play using maximal entanglement.

Lalonde resolved the general sufficiency question with the inner product game $G_{4,3,6,6}$, which has four questions for Alice, three for Bob, and six answers for each player~\cite{Lalonde}. His six-dimensional strategy wins perfectly, whereas a computer-assisted algebraic certificate rules out perfect play with a maximally entangled state of any finite dimension. Numerical evidence led him to conjecture that every perfect strategy contains the specific nonmaximally entangled state~\cite[Sections 4.1 and 5]{Lalonde}.

This stronger conclusion is a \emph{self-test}: it determines the shared state up to local isometries and an additional shared system. These freedoms reflect changes of local coordinates and extra degrees of freedom invisible to the observed probabilities~\cite{MayersYao,SupicBowles}. A strategy self-test also identifies the measurement actions under the same isometries. Robust self-testing extends this conclusion to nearly optimal play with a vanishing error~\cite{McKagueYangScarani}.

We reduce Lalonde's game to $G_{4,3,5,5}$ and prove a self-test of the full reduced strategy (Theorem~\ref{thm:strategy-exact}). The reduction removes one of the two low-amplitude coordinates and combines the measurement outcomes that coincide after compression. Specifically, the reduced game has classical value $11/12$, quantum value one. The reference non-maximally entangled state is
\begin{equation}
 \ket{\psistar}=\frac{2\ket{00}+2\ket{11}+2\ket{22}+2\ket{33}+\ket{44}}{\sqrt{17}}.
 \label{eq:target}
\end{equation}
It follows from self-testing that every perfect strategy produces the entire ideal correlation table (Corollary~\ref{cor:perfect-correlation}), and that no maximally entangled state of any finite dimension admits perfect play (Corollary~\ref{cor:nonmaximal}).

The full strategy self-test is also robust (Theorem~\ref{thm:strategy-robust}). The state estimate is quantitative, while robustness of the measurement actions follows from compactness. Specifically, for state extraction we obtain
\[
 d_{\mathrm{ext}}(\rho,\psistar)\le\min\{1,2^{21}\sqrt\varepsilon\}
 \qquad\text{whenever}\qquad
 \omega(G_{4,3,5,5};\rho,M,N)\ge1-\varepsilon
\]
(Theorem~\ref{thm:state-robust}), where $d_{\mathrm{ext}}$ is the square root of one minus the best squared fidelity achievable by local quantum channels.

These results connect two lines of work on self-testing. For nonmaximal entanglement, Bamps and Pironio proved robust self-testing of entangled two-qubit pure states using a family of Bell inequalities~\cite{BampsPironio}, and Coladangelo, Goh, and Scarani constructed correlations self-testing every finite-dimensional bipartite entangled pure state~\cite{ColadangeloGohScarani}. Cui et al. obtained nonmaximally entangled self-tests from games whose optimal winning probability is less than one~\cite{CuiEtAl}. For pseudo-telepathy, the magic square game self-tests two maximally entangled qubit pairs~\cite{WuEtAl}, and Coladangelo and Stark developed algebraic criteria for robust self-testing in games defined by systems of linear equations~\cite{ColadangeloStark}. Here, the zero-probability constraints of a single fixed game determine both unequal Schmidt coefficients and every labeled measurement effect.

The key step in our proof is the amplitude ratio. Bob's first measurement separates five orthogonal subspaces, and the forbidden answer pairs allow Alice to coherently record the same five labels. Relations among four components identify them, through local unitaries, with a common residual vector. Binary comparisons obtained by grouping outcomes of the prescribed measurements then link the remaining component to these four. Their equal-outcome and opposite-outcome conditions force it to have half the common amplitude. They also fix the off-diagonal measurement blocks, so pair sums and differences recover every individual effect.

We prove the explicit state bound by estimating these relations directly on the state vector with bounded operators. This keeps the estimates uniform even when individual outcomes have small probability. For measurement robustness, we extend the exact identification to commuting-operator representations, where Alice's and Bob's operators commute on a common Hilbert space. Compactness then supplies a uniform modulus for the fixed extraction maps. The connection between these representations and self-testing is developed in~\cite{PaddockEtAl,Zhao}. The results apply to mixed states and positive operator-valued measurements (POVMs). For measurement robustness, this extension follows from the assumption-lifting theorem of~\cite[Theorem 4.1(a)]{BaptistaEtAl}, since the reference state has full Schmidt rank.

Section~\ref{sec:setting} defines the game and states the results. Section~\ref{sec:robust-state} constructs the extraction maps and proves the explicit state bound. Section~\ref{sec:measurements} identifies the measurements and explains their qualitative robustness. Appendix~\ref{app:game} gives the game data, its reduction from Lalonde's game, and the support relations. Appendix~\ref{app:robust-transport} contains the quantitative state estimates. Appendix~\ref{sec:proof} proves the exact self-test, and Appendix~\ref{app:robust-measurements} establishes measurement robustness using the fixed extraction circuit.

\section{The game and the main results}
\label{sec:setting}

We use a five-dimensional reduction of Lalonde's inner product game~\cite{Lalonde}. All indices start at zero. Alice's questions are $x\in\{0,1,2,3\}$, Bob's are $y\in\{0,1,2\}$, and their answers are $a,b\in\{0,\ldots,4\}$. Each of the twelve question pairs has probability $1/12$. Let $D=\diag(2,2,2,2,1)$, and let $u_{xa},v_{yb}\in\R^5$ be the normalized vectors in Tables~\ref{tab:alice} and~\ref{tab:bob} of Appendix~\ref{app:game}. Following Lalonde's inner product prescription, the winning predicate is
\begin{equation}
 \mathsf w(x,y,a,b)=
 \begin{cases}
  1,&u_{xa}^{\mathsf T}Dv_{yb}\ne0,\\
  0,&u_{xa}^{\mathsf T}Dv_{yb}=0.
 \end{cases}
 \label{eq:game}
\end{equation}
We abbreviate this game to $G_5$. Appendix~\ref{app:game} derives the bases by compressing the original six-dimensional measurements, deleting zero effects, and combining coincident outcomes. It also proves that~\eqref{eq:game} accepts a new answer pair exactly when at least one pair of its old representatives was accepted.

A quantum strategy consists of finite-dimensional complex Hilbert spaces $\mathcal H_A,\mathcal H_B$, a density operator $\rho$ on their tensor product, and local POVMs $\{M_{a|x}\}_{a=0}^4$, $\{N_{b|y}\}_{b=0}^4$. Each measurement consists of positive operators summing to the identity. Its conditional probabilities and winning probability are
\begin{align}
 p(a,b|x,y)&=\Tr\bigl[\rho(M_{a|x}\otimes N_{b|y})\bigr],\nonumber\\
 \omega(G_5;\rho,M,N)&=\frac1{12}\sum_{x,y,a,b}\mathsf w(x,y,a,b)p(a,b|x,y).
 \label{eq:score}
\end{align}
Measuring $\ket{\psistar}$ in the real orthonormal bases $u_x,v_y$ gives
\begin{equation}
 p_\star(a,b|x,y)=\frac{|u_{xa}^{\mathsf T}Dv_{yb}|^2}{17},
 \label{eq:ideal-probability}
\end{equation}
and hence wins perfectly, as in the original inner product game strategy of~\cite[Proposition 3.1]{Lalonde}. The ideal measurement effects 
$A^\star_{xa}=u_{xa}u_{xa}^{\dagger},~B^\star_{yb}=v_{yb}v_{yb}^{\dagger}$ are real rank-one projections. We suppress identity factors when their action is clear. The local reduced-state support is the range of the corresponding reduced density operator.

To quantify state extraction, define
\begin{equation}
\begin{split}
 \Xi(\rho,\psistar)
 &=\max_{\Lambda_A,\Lambda_B}
    \bra{\psistar}(\Lambda_A\otimes\Lambda_B)(\rho)\ket{\psistar},\\
 d_{\mathrm{ext}}(\rho,\psistar)&=\sqrt{1-\Xi(\rho,\psistar)}.
\end{split}
\label{eq:extraction-distance}
\end{equation}
The maximum ranges over local quantum channels from the two device spaces to $\C^5$ and exists by compactness for fixed finite input dimensions. We use the squared-fidelity convention, so $\Xi$ is the optimal squared fidelity with the pure target. The channels may introduce and discard local auxiliary systems.

The first result identifies a quantitative state extraction.

\begin{theorem}[Robustness of the state]
\label{thm:state-robust}
For every finite-dimensional quantum strategy for $G_5$ and $0\le\varepsilon\le1$, if $\omega(G_5;\rho,M,N)\ge1-\varepsilon$, then $d_{\mathrm{ext}}(\rho,\psistar)
 \le\min\{1,2^{21}\sqrt\varepsilon\}.$
\end{theorem}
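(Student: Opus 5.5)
We plan to follow the outline sketched in the introduction, doing every estimate directly on a state vector. First we purify $\rho$ to a vector $\ket\psi\in\mathcal H_A\otimes\mathcal H_B\otimes\mathcal H_P$, giving the purifying system to Bob; any local channels acting on the purified Bob space still induce legitimate channels on $\mathcal H_B$ after tracing out $\mathcal H_P$. Next we apply Naimark dilation to replace the POVMs by projective measurements on enlarged local spaces. This leaves all correlations unchanged, and the extraction channels can absorb the added ancillas. For $\omega\ge1-\varepsilon$, each forbidden pair $(a,b)$ for a question pair $(x,y)$ then satisfies $\norm{(M_{a|x}\otimes N_{b|y})\psi}^2\le12\varepsilon$. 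Summing over the boundedly many forbidden pairs converts every support relation of Appendix~\ref{app:game} into an approximate operator identity on $\psi$, of the form $M_{a|x}\psi\approx\sum_{b\in S}M_{a|x}N_{b|y}\psi$ with error $O(\sqrt\varepsilon)$. We never divide by outcome probabilities, so the errors stay uniform even when some outcomes are rare.

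The extraction circuit runs as follows. Bob's first measurement $N_{\cdot|0}$ splits $\psi$ into five components $\psi_k=N_{k|0}\psi$. The forbidden pairs between Bob's $y=0$ and a suitable Alice question (or a grouping of her outcomes) make Alice's projectors approximately mirror these labels, so that $\psi_k\approx P_k\otimes N_{k|0}\psi$ with $P_k$ projections built from Alice's effects. Each party then coherently writes $k$ into a local register $\C^5$ by the isometries $V_A=\sum_k\ket k\otimes P_k$ (after approximate orthogonalisation) and $V_B=\sum_k\ket k\otimes N_{k|0}$. To align the junk, we use the binary comparisons from grouped outcomes of the other questions. For $k,l\in\{0,1,2,3\}$ these give local unitaries $U^A_{kl},U^B_{kl}$ (reflection products such as $2M-1$) with $U_{kl}\psi_l\approx\psi_k$. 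Conjugating by them maps every component to the common residual vector $\phi=\psi_0/\norm{\psi_0}$. This yields $(V_A\otimes V_B)\psi\approx\sum_k c_k\ket{kk}\otimes\phi_k$, with $\phi_k\approx\phi$ for $k\le3$.

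The crucial step, and we expect it to be the main obstacle, is determining the amplitude and the junk of the fifth component $\psi_4$. The exact proof uses the equal-outcome and opposite-outcome conditions of binary comparisons linking $k=4$ to the other four. These give two linear relations, roughly $\norm{\psi_4}^2-\tfrac14\norm{\psi_k}^2\approx0$ and $\langle\psi_4$-transport$,\psi_k\rangle\approx\tfrac12\norm{\psi_k}\norm{\psi_4}$. The difficulty is making them robust. We plan to write each relation as an inner-product identity between vectors of the form $X\psi$ with $\norm X\le1$. Each application of a support relation then costs $O(\sqrt\varepsilon)$, and Cauchy--Schwarz combines the relations. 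This gives $\bigl|\norm{\psi_k}^2-\tfrac4{17}\bigr|=O(\sqrt\varepsilon)$, $\bigl|\norm{\psi_4}^2-\tfrac1{17}\bigr|=O(\sqrt\varepsilon)$, and $\norm{\tilde\psi_4-\tfrac12\tilde\psi_0}=O(\sqrt\varepsilon)$ for the transported vectors. The normalisation $\sum\norm{\psi_k}^2=1$ closes the system.

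Finally, we take $\Lambda_A,\Lambda_B$ to be the isometries followed by the correcting unitaries, with the junk traced out. Writing $(V_A\otimes V_B)\psi=\ket{\psistar}\otimes\phi+\delta$ with $\norm\delta\le C\sqrt\varepsilon$ gives $\Xi\ge(1-\norm\delta)^2\ge 1-2C\sqrt\varepsilon$, and hence $d_{\mathrm{ext}}\le\sqrt{2C}\,\varepsilon^{1/4}$. This is only a fourth-root bound, so we must refine it. We will instead bound $1-\Xi$ by the squared norm of the component of $(V_A\otimes V_B)\psi$ orthogonal to $\ket{\psistar}\otimes\phi'$ for an optimised junk vector $\phi'$, which is $O(\varepsilon)$ provided all error vectors are tracked as vectors rather than through their overlaps. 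Tallying the constants (about $20$ chained steps, each costing a factor of order $2$) should produce $2^{21}$. The $\min\{1,\cdot\}$ in the statement holds trivially.
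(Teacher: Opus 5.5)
\textbf{Gap: the fifth-branch step only gives an $\varepsilon^{1/4}$ rate.}

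Your outline follows the paper's architecture. You purify, Naimark-dilate the POVMs, and record Bob's $y=0$ label coherently on both sides. You then align the four high branches with reflections, fix the fifth branch with the binary tests, and finish with $1-\Xi\le\norm{[(I-\Psi_\star)\otimes I]\xi}^2$. The genuine gap is the step you yourself flag as the main obstacle.

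You plan to extract the fifth-branch information as scalar relations, $\norm{\psi_4}^2\approx\frac14\norm{\psi_k}^2$ plus an overlap relation, each with error $O(\sqrt\varepsilon)$, and to combine them by Cauchy--Schwarz. But
$\norm{\tilde\psi_4-\tfrac12\tilde\psi_0}^2=\norm{\tilde\psi_4}^2-\mathrm{Re}\langle\tilde\psi_4,\tilde\psi_0\rangle+\tfrac14\norm{\tilde\psi_0}^2$.
So $O(\sqrt\varepsilon)$ errors in these scalars give only $\norm{\tilde\psi_4-\tfrac12\tilde\psi_0}=O(\varepsilon^{1/4})$. Your own final step then yields $d_{\mathrm{ext}}=O(\varepsilon^{1/4})$, not $2^{21}\sqrt\varepsilon$. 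Your closing proviso, that errors be tracked as vectors rather than through overlaps, is exactly what this step violates. The step is also unspecified at a more basic level. You never say which local unitaries produce the ``transported'' $\tilde\psi_4$, nor how the binary tests would yield your two scalar relations.

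\textbf{What is needed: a vector-level argument, as in the paper.}

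First, the fifth transports are the binary reflections themselves: $T_4^A=I-2A_{22}$ and $T_4^B=-V_h(I-2B_{12})V_h^{\dagger}$. Here $V_h=[D_hE(F)E(G)]^{\dagger}$ is a chart built from $F,G$ and phases on the $Q_i$. It carries Bob's high line $h$ to coordinate zero while nearly fixing $Q_4\Omega$.

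Second, using the high-component relations, the four conditional vectors $w_{xy}=U_xV_yE_xC_y\Omega$ are shown, with error linear in $\delta$, to equal $\pm\frac12\eta_0+Q_4\Omega$ with signs $-,-,+,-$.

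Third, the exact orthogonal decompositions \eqref{eq:rt:low-binary-minus}--\eqref{eq:rt:low-binary-plus} bound the residual vectors $(\mathsf R-\mathsf C)w_{21}$, $(\mathsf R'-\mathsf C)w_{31}$, $(\mathsf R+\mathsf D)w_{22}$, $(\mathsf R'-\mathsf D)w_{32}$ by $2\delta$.

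Fourth, Lemma~\ref{lem:robust-two-subspaces} converts these residuals into $\norm{RC\phi+h}\le2a+b$, which is linear in $\delta$. It uses the identity $2(RC\phi+h)=R(e_+-e_-)$, $R^2=I$, cross-party commutation, and the high-sign flip $J_A=I-2U$.

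Once $\norm{\xi-(\sum_{i<4}\ket{ii}+\tfrac12\ket{44})\otimes\eta_0\otimes\ket{\mathrm g}}=O(\delta)$, the comparison vector lies in the range of $\Psi_\star\otimes I$ whatever $\norm{\eta_0}$ is. You therefore need neither the amplitudes $4/17$, $1/17$, nor the normalization $\sum\norm{\psi_k}^2=1$. You also avoid dividing by $\norm{\psi_0}$, a division your plan claims to avoid but in fact uses.

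\textbf{Further points that need the paper's devices.}

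No single Alice question, even with grouped outcomes, resolves Bob's five labels. You need $x=0$ followed by $x=1$, and products such as $VU$ are not projections away from perfect score. Rather than ``approximate orthogonalisation'', the paper keeps the inconsistent branches $A_{10}U,A_{10}W$ in orthogonal flags, so the map is an exact isometry with label error $2\delta$.

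That $F,G$ approximately swap the $Q_i$ does not follow from the coarse agreements alone. It needs the product relations from Table~\ref{tab:high} and the approximate anticommutation \eqref{eq:rt:high-anticommutation} behind Lemma~\ref{lem:rt:high-endpoints}.

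Finally, a channel on $\mathcal H_B\otimes\mathcal H_P$ does not in general induce one on $\mathcal H_B$. Your circuit is legitimate only because it acts trivially on the purifying system.
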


We remark that the constant $2^{21}$ is not optimized.

At perfect score, we identify the state and every labeled measurement effect.

\begin{theorem}[Exact self-test]
\label{thm:strategy-exact}
Every perfect finite-dimensional strategy for $G_5$ admits local isometries
\[
 W_A:\mathcal H_A\longrightarrow\C^5\otimes\mathcal K_A,
 \qquad
 W_B:\mathcal H_B\longrightarrow\C^5\otimes\mathcal K_B.
\]
For every purification $\Omega\in\mathcal H_A\otimes\mathcal H_B\otimes\mathcal H_E$ of $\rho$, there is a normalized residual vector $\chi\in\mathcal K_A\otimes\mathcal K_B\otimes\mathcal H_E$ such that the following identities hold for all $x,a,y,b$, with $\mathcal W=W_A\otimes W_B\otimes I_E$ and $\Phi=\ket{\psistar}\otimes\chi$:
\begin{equation}
\begin{aligned}
 \mathcal W\Omega&=\Phi,\\
 \mathcal W(M_{a|x}\otimes I)\Omega&=A^\star_{xa}\Phi,\\
 \mathcal W(I\otimes N_{b|y})\Omega&=B^\star_{yb}\Phi,\\
 \mathcal W(M_{a|x}\otimes N_{b|y})\Omega&=A^\star_{xa}B^\star_{yb}\Phi
\end{aligned}
\label{eq:strategy-exact-actions}
\end{equation}
\end{theorem}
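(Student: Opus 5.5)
We will follow the structure sketched in the introduction: first reduce to projective measurements on a pure state with full-support marginals, then build the local isometries from Bob's first measurement and Alice's coherent copy of its labels, and finally identify every effect using the zero-probability constraints.

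\textbf{Reduction to projective measurements on a pure state.} Pass to the purification $\Omega$ and view $E$ as part of Bob's side. Restrict $\mathcal H_A,\mathcal H_B$ to the supports of the local reduced states. On these supports, perfect play forces $M_{a|x}N_{b|y}\Omega=0$ whenever $\mathsf w(x,y,a,b)=0$. We then show that the restricted effects are projections. The argument combines the forbidden pairs from Appendix~\ref{app:game} with the observation that each Bob outcome $b$ of question $y$ forbids a set of Alice answers that, for some $x$, leaves exactly one allowed outcome. This gives $N_{b|y}\Omega=M_{a|x}\Omega$-type relations, and positivity together with $\sum_b N_{b|y}=I$ then forces idempotence on the support. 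The same relations let every operator be transported across the cut: each $N_{b|y}\Omega$ equals a polynomial in Alice's operators applied to $\Omega$, and vice versa.

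\textbf{Constructing the isometries.} Bob's question $y=0$ splits $\Omega=\sum_{k=0}^4\Omega_k$ with $\Omega_k=N_{k|0}\Omega$, and these five components are orthogonal. Using the transported relations, Alice obtains projections $P_k$ with $P_k\Omega=\Omega_k$; these can be taken to be $M_{a|x}$ for the appropriate $x$, or sums of them. For $k=1,2,3$, relations among the components give local partial isometries $U_k\otimes V_k$, built from products of measurement operators such as reflections $2M-I$ formed by grouping outcomes, that map $\Omega_0$ to $\Omega_k$. This yields $\Omega_k=(U_k\otimes V_k)\Omega_0$ with $\|\Omega_k\|=\|\Omega_0\|$.

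Define $W_A\phi=\sum_k\ket k\otimes U_k^\dagger P_k\phi$ and $W_B$ analogously; these are the standard SWAP-type isometries. With this definition, $\mathcal W\Omega=\sum_k\ket{kk}\otimes\tilde\chi_k$, where $\tilde\chi_k=\tilde\chi$ for $k\le3$.

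\textbf{The main obstacle: the fifth component.} The hard step is component $4$: we must show that $\Omega_4$ is locally unitarily equivalent to $\Omega_0$ with exactly half the amplitude. Here we use binary comparisons. We group outcomes of a question $x$ (respectively $y$) into two classes whose ideal projections act on $\mathrm{span}\{\ket j,\ket 4\}$ as rank-one projections onto vectors $(\ket j\pm c\ket 4)$. The equal-outcome and opposite-outcome constraints then give linear relations of the form $\langle\Omega_j,X\Omega_4\rangle$. The ratio $2{:}1$ appears because $u^{\mathsf T}Dv=0$ exactly for these weighted combinations. From these relations we derive, by a Cauchy--Schwarz saturation argument, that $\Omega_4=\tfrac12(U_4\otimes V_4)\Omega_0$, and hence $\Phi=\ket{\psistar}\otimes\chi$ with $\chi$ normalized.

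\textbf{Measurement identities.} The same comparisons determine the off-diagonal blocks $P_jM_{a|x}P_k\Omega$ in the extracted frame. Combining pair sums and differences with the block decomposition shows $\mathcal W M_{a|x}\Omega=A^\star_{xa}\Phi$, and likewise for Bob. The joint identity then follows by applying Bob's identity to $M_{a|x}\Omega$, using commutation and the transport relations. Finally, undoing the support restriction costs nothing, since the isometries can be extended arbitrarily off the support.
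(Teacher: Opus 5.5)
Your overall architecture (matched labels, SWAP-type isometries, effects recovered from pair sums and differences) is the paper's. However, the two load-bearing steps do not go through as written.

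\textbf{Projectivity.} Your projectivity step rests on a false claim: it is not true that each Bob outcome has, at some Alice question, exactly one accepted answer. For $(y,b)=(0,0)$ the accepted Alice answers at $x=0,1,2,3$ are $\{0,1\},\{1,2\},\{2,3\},\{0,1\}$. For $(y,b)=(1,2)$ they are $\{0,3,4\},\{0,2,3\},\{0,2,4\},\{1,2,4\}$. Even a unique accepted answer would give only the one-sided relation $N_{b|y}\Omega=M_{a|x}N_{b|y}\Omega$, not $N_{b|y}\Omega=M_{a|x}\Omega$.

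What works is two-sided \emph{coarse} agreement (Table~\ref{tab:coarse}) fed into~\eqref{eq:exact-agreement}, which makes the coarse sums projections. Intersecting these gives $P_i,Q_i$; note that $P_0=UV$ and $P_2=WV$ are products across two questions, not sums of one question's effects. Crucially, no agreement relation isolates the individual effects in the pairs $\{2,3\}$ at $x=2,3$ and $y=1,2$. Their projectivity emerges only at the end, by applying~\eqref{eq:exact-agreement} to the restricted binary POVMs on conditional vectors with faithful marginals on the pair spaces. That already presupposes the identification of the first four components. So the order ``all effects projective first, then build isometries'' cannot be carried out.

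\textbf{The two identification steps lack their mechanisms.} For $k=1,2,3$, the grouped reflections exchange label subspaces because of the product identities from Table~\ref{tab:high}, $(P_1+P_2)(rs-r's')=0$ and $(P_0+P_3)(rs+r's')=0$. Together with $r'=Z_1r$ and $s'=Z_2s$, these force $\{r,Z_2\}=\{s,Z_1\}=0$. You only assert the conclusion.

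For the fifth component, Bob's pair projections are onto $h\pm e_4$ and $k\pm e_4$, not $\ket{j}\pm c\ket{4}$. The factor $1/2$ enters through $\langle e_0,h\rangle=\langle e_2,h\rangle=\langle e_2,k\rangle=-\tfrac12$ and $\langle e_0,k\rangle=\tfrac12$. These give conditional vectors $z_-,z_-,z_+,z_-$ with $z_\pm=\pm\eta/2+\phi$. Cauchy--Schwarz saturation can at most turn the perfect binary correlations into vector relations. It does not by itself produce the unitaries $U_4,V_4$ or the norm ratio.

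The missing step is algebraic:
\begin{itemize}
\item The three agreements and one disagreement give $Rz_-=Cz_-$, $R'z_-=Cz_-$, $Rz_+=-D'z_+$ and $R'z_-=D'z_-$.
\item Faithfulness on the pair spaces gives $R=R'$ and $C=D'$.
\item With $J_X=(-I_{\mathcal K_X})\oplus I_{\mathcal L_X}$ we have $J_Az_+=J_Bz_+=z_-$, so cross-party commutation yields $\{R,J_A\}=\{C,J_B\}=0$.
\item Hence $R$ and $C$ are off-diagonal with unitary blocks $U_A,U_B$.
\item The high-Alice, low-Bob component of $Rz_-=Cz_-$ then gives $(U_A\otimes(-U_B)\otimes I_E)\phi=\eta/2$.
\end{itemize}

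\textbf{Leaving the supports.} This step is not free. The theorem concerns the original effects, so you need $(I-s)M_{a|x}s=0$ for the local support projection $s$. This follows once $sM_{a|x}s$ is known to be a projection, via $((I-s)Es)^\dagger((I-s)Es)=sE^2s-(sEs)^2\le sEs-(sEs)^2=0$. Extending the isometries arbitrarily off the support does not address this possible leakage.
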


The full strategy self-test also admits a qualitative robustness estimate.

\begin{theorem}[Qualitative robustness of the full strategy]
\label{thm:strategy-robust}
There is a function $r:[0,1]\to[0,2]$, independent of the local dimensions, such that $r(0)=0$ and $r(\varepsilon)\to0$ as $\varepsilon\downarrow0$, with the following property. Every finite-dimensional strategy satisfying $\omega(G_5;\rho,M,N)\ge1-\varepsilon$ admits local isometries $W_A,W_B$ as in Theorem~\ref{thm:strategy-exact}. For every purification $\Omega$ of $\rho$, there is a normalized residual vector $\chi$ such that, with $\mathcal W=W_A\otimes W_B\otimes I_E$ and $\Phi=\ket{\psistar}\otimes\chi$,
\begin{equation}
\begin{aligned}
 \norm{\mathcal W\Omega-\Phi}&\le r(\varepsilon),\\
 \norm{\mathcal W(M_{a|x}\otimes I)\Omega-A^\star_{xa}\Phi}&\le r(\varepsilon),\\
 \norm{\mathcal W(I\otimes N_{b|y})\Omega-B^\star_{yb}\Phi}&\le r(\varepsilon),\\
 \norm{\mathcal W(M_{a|x}\otimes N_{b|y})\Omega-A^\star_{xa}B^\star_{yb}\Phi}&\le r(\varepsilon)
\end{aligned}
\label{eq:strategy-robust-actions}
\end{equation}
simultaneously for all labels.
\end{theorem}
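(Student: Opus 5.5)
The proof goes by compactness and contradiction, with the exact self-test (Theorem~\ref{thm:strategy-exact}) as the rigidity input. The isometries $W_A,W_B$ will be a \emph{fixed} extraction circuit built from the measurement operators: Bob's $y=0$ POVM gives the five labels, Alice coherently copies them using the forbidden answer pairs, and the local unitaries and grouped binary comparisons from the exact proof are implemented as controlled operations. This circuit is a fixed noncommutative polynomial expression (after Naimark dilation, or square roots of effects) in $\{M_{a|x}\},\{N_{b|y}\}$, with coefficients independent of the dimensions. Each left-hand side in \eqref{eq:strategy-robust-actions} is therefore a fixed $*$-polynomial of the operators applied to $\Omega$. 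For the vector $\chi$ I take the normalized partial overlap $\chi\propto(\bra{\psistar}\otimes I)\mathcal W\Omega$, which is the optimal choice up to a factor of two. The quantity to control is then
\[
 \delta(\rho,M,N)=\max\{\text{all left sides of \eqref{eq:strategy-robust-actions}}\},
\]
and I define $r(\varepsilon)=\min\{2,\sup\delta\}$, where the supremum runs over all finite-dimensional strategies with $\omega\ge1-\varepsilon$. This $r$ is monotone and, by Theorem~\ref{thm:strategy-exact}, satisfies $r(0)=0$ once we know the circuit produces exactly $\Phi$ at perfect score. The task is to show $r(\varepsilon)\to0$.

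Suppose $r(\varepsilon)\not\to0$. Then there are strategies $(\rho_n,M^n,N^n)$ with $\varepsilon_n\to0$ and $\delta_n\ge c>0$, in unbounded dimensions. I pass to a limit in the commuting-operator framework of~\cite{PaddockEtAl,Zhao}. Each strategy defines a state on the universal $C^*$-algebra generated by two commuting families of five-outcome POVMs (the maximal tensor product of the free POVM algebras). By weak-$*$ compactness of the state space, a subsequence converges to a state $\tau$ with perfect score. Because $\delta$ is a continuous function of finitely many moments $\tau(p^*q)$ for fixed polynomials $p,q$, the GNS limit satisfies $\delta\ge c$, provided $\delta$ is written in terms of norms of polynomial vectors and overlaps with $\Phi$. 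Two points need care. The square roots and dilations must be expressed within the algebra, which continuous functional calculus handles, or one can first reduce to projective measurements. The norm of the $\chi$-projection is also continuous in moments.

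The main obstacle is extending Theorem~\ref{thm:strategy-exact} to perfect commuting-operator strategies, so that the limit must satisfy $\delta=0$, a contradiction. I would rerun the exact proof using only the zero-probability relations $M_{a|x}N_{b|y}\Omega=0$ for losing pairs, together with commutation. The steps are: the block decomposition from Bob's $y=0$ outcomes; Alice's coherent copy of the labels; the four relations identifying the large components with a common residual vector; and the binary comparisons that force the fifth amplitude to be $1/2$ and fix the off-diagonal blocks. None of these steps uses a tensor-product structure, only vectors in one Hilbert space and commuting operators. The isometries are then built on the GNS space in the same way. For POVMs rather than projections, I first invoke the assumption-lifting theorem~\cite[Theorem 4.1(a)]{BaptistaEtAl}, which applies because $\psistar$ has full Schmidt rank. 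Mixed states are handled through the purification $\Omega$, which is why $\mathcal W$ carries $I_E$. Finally, independence of $r$ from the dimensions is built into the definition as a supremum, and the bound $r\le2$ holds trivially because all vectors involved have norm at most one.
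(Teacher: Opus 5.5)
Your architecture is the paper's: a fixed polynomial extraction circuit, states on the universal $C^*$-algebra of two commuting projective measurement families, weak-$*$ compactness, exact zero-loss identities in every commuting representation, and the Baptista et al.\ lifting theorem for POVMs and mixed states. The gap is in the step you call the main obstacle.

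\textbf{Faithfulness in commuting representations.} You say the exact proof can be rerun using only the zero-probability relations and cross-party commutation, with nothing beyond vectors in one Hilbert space. But the proof of Theorem~\ref{thm:strategy-exact} uses faithfulness of the local reduced states on their supports at almost every step:
\begin{itemize}
\item to get $E^2=E$ from~\eqref{eq:exact-agreement}; otherwise you only obtain $(E-E^2)\Omega=0$;
\item to promote vector identities to operator identities, such as $Rz_-=R'z_-$ to $R=R'$, and~\eqref{eq:measurement-anticommutation} to $\{R,J_A\}=0$;
\item to make the reverse-order transfer of words across $\Omega$, which defines Bob's matrix units, well defined.
\end{itemize}
In the GNS representation of a limit state there is no reduced density operator, and $\Omega$ is cyclic for $\pi(\mathfrak A)$ but in general not separating for $\pi(\mathfrak A_A)''$. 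So none of these promotions is available as stated. The paper supplies the missing ingredient in Lemma~\ref{lem:rm:perfect-commuting}. Let $s_X$ be the support projection of the vector functional on the von Neumann algebra $\pi(\mathfrak A_X)''$; this functional is faithful on the corner $s_X\pi(\mathfrak A_X)''s_X$. Compress the generators to these corners and rerun the classification there. For the low-component step, faithfulness on the pair corners comes from $\norm{Tz_\pm}^2=\tfrac14\norm{T\Omega_0}^2+\norm{T\Omega_4}^2$. Finally,~\eqref{eq:no-leakage} shows that the original generators preserve $s_X$, so the circuit evaluated on the original generators agrees with its evaluation on the compressed effects.

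\textbf{The fixed circuit at zero loss.} The contradiction needs $\delta(\tau)=0$ for the \emph{same} fixed circuit in the limit representation. Your phrase ``the isometries are then built on the GNS space in the same way'' is ambiguous on this point. Theorem~\ref{thm:strategy-exact} gives only strategy-dependent isometries built from the occupied subspaces, and exact self-testing does not by itself imply robustness. So you must check that your controlled unitaries realize exactly the canonical identifications at zero loss. The paper does this in Lemma~\ref{lem:rm:fixed-circuit}. It checks that the sign in $T_3=-fg$ cancels the negative block, and that Bob's chart $V_h$ sends $h$ to $e_0$ while fixing the low subspace. It also checks that $T_4^B$ produces the block $-U_B$ demanded by~\eqref{eq:fifth-amplitude}.

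Two smaller points. The paper puts the infidelity and all squared effect errors into one positive element $L$, so the objective is linear and weak-$*$ continuous. Your normalized-$\chi$ functional is continuous at the limit only because $p(\tau)=1$ there. Also, once you invoke the lifting theorem, the isometries are no longer your circuit and only single-effect actions are controlled. The joint actions then need the separate triangle inequality~\eqref{eq:rm:joint-effects}.
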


The conclusion concerns arbitrary mixed states and POVM effects, and the purification environment is untouched~\cite{BaptistaEtAl}. The proof of Theorem~\ref{thm:strategy-robust} gives a dimension-independent modulus without an explicit rate for the measurement actions.

Exact self-testing immediately implies the uniqueness of perfect correlation and the necessity of nonmaximal entanglement.

\begin{corollary}[Unique perfect correlation]
\label{cor:perfect-correlation}
Every perfect finite-dimensional strategy for $G_5$ realizes $p(a,b|x,y)=p_\star(a,b|x,y)$.
\end{corollary}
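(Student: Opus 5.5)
The plan is to deduce the corollary directly from Theorem~\ref{thm:strategy-exact}, by writing each probability as an inner product of two vectors that the isometry $\mathcal W$ maps to known vectors. Take a perfect finite-dimensional strategy $(\rho,M,N)$ and fix any purification $\Omega\in\mathcal H_A\otimes\mathcal H_B\otimes\mathcal H_E$ of $\rho$. Then
\[
p(a,b|x,y)=\Tr\bigl[\rho(M_{a|x}\otimes N_{b|y})\bigr]=\bra{\Omega}(M_{a|x}\otimes N_{b|y}\otimes I_E)\ket{\Omega}.
\]
This is the inner product of $\Omega$ with $(M_{a|x}\otimes N_{b|y})\Omega$. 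Since $\mathcal W=W_A\otimes W_B\otimes I_E$ is an isometry, it preserves inner products, so
\[
p(a,b|x,y)=\ip{\mathcal W\Omega}{\mathcal W(M_{a|x}\otimes N_{b|y})\Omega}.
\]

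Next, substitute the first and last identities of \eqref{eq:strategy-exact-actions}. With $\Phi=\ket{\psistar}\otimes\chi$, this gives
\[
p(a,b|x,y)=\ip{\Phi}{A^\star_{xa}B^\star_{yb}\Phi}=\bra{\psistar}A^\star_{xa}\otimes B^\star_{yb}\ket{\psistar}\,\norm{\chi}^2 .
\]
Because $\chi$ is normalized, the factor $\norm{\chi}^2$ equals one.

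The last step is to evaluate this matrix element. Since $A^\star_{xa}=u_{xa}u_{xa}^\dagger$ and $B^\star_{yb}=v_{yb}v_{yb}^\dagger$ are rank-one projections, the matrix element is $|(u_{xa}\otimes v_{yb})^\dagger\ket{\psistar}|^2$. By \eqref{eq:target}, $\ket{\psistar}=\sum_i D_{ii}\ket{ii}/\sqrt{17}$, so
\[
(u_{xa}\otimes v_{yb})^\dagger\ket{\psistar}=\frac{u_{xa}^{\mathsf T}Dv_{yb}}{\sqrt{17}},
\]
using that the vectors are real. Squaring gives $p_\star(a,b|x,y)$ as in \eqref{eq:ideal-probability}.

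None of these steps is a genuine obstacle once Theorem~\ref{thm:strategy-exact} is available. The only points needing care are that the isometry preserves inner products, so no unitarity of $\mathcal W$ is required, and that the environment factor $I_E$ does not affect the expectation value.
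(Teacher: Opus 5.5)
Your proposal is correct and is essentially the paper's proof, which takes the inner product of the last identity in \eqref{eq:strategy-exact-actions} with $\mathcal W\Omega=\Phi$. You additionally write out two steps the paper leaves implicit: the isometry step, and the evaluation $\bra{\psistar}A^\star_{xa}\otimes B^\star_{yb}\ket{\psistar}=|u_{xa}^{\mathsf T}Dv_{yb}|^2/17$, for which the paper relies on \eqref{eq:ideal-probability}.
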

\begin{proof}
Take the inner product of the last identity in~\eqref{eq:strategy-exact-actions} with $\mathcal W\Omega=\Phi$.
\end{proof}

\begin{corollary}[Pseudo-telepathy and the necessity of nonmaximal entanglement]
\label{cor:nonmaximal}
The game $G_5$ has classical value $11/12$ and quantum value one. No maximally entangled state of any finite dimension admits a perfect strategy for $G_5$.
\end{corollary}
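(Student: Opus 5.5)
The corollary has three claims: the classical value is $11/12$, the quantum value is one, and no finite-dimensional maximally entangled state wins perfectly. I will treat them in that order.

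\emph{Quantum value.} This is immediate from~\eqref{eq:ideal-probability}. Measuring $\ket{\psistar}$ in the bases $u_x,v_y$ gives zero probability to every pair with $u_{xa}^{\mathsf T}Dv_{yb}=0$. By~\eqref{eq:game} those are exactly the losing pairs, so $\omega=1$. Since no strategy can exceed one, the quantum value is one.

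\emph{Classical value.} First, $11/12$ is an upper bound. Theorem~\ref{thm:strategy-exact} already shows the classical value is below one: a deterministic strategy is a finite-dimensional strategy on a product state. If it won perfectly, the theorem would give $\mathcal W\Omega=\ket{\psistar}\otimes\chi$ with $\Omega$ a product across the $A$/$B$ cut (with $E$ attached to one side). But $\ket{\psistar}$ has Schmidt rank five, and a local isometry cannot raise Schmidt rank, so this is impossible. A deterministic strategy wins each of the twelve question pairs with probability $0$ or $1$. Hence failing to be perfect means it loses at least one question pair, so it scores at most $11/12$. Shared randomness only mixes deterministic strategies, so the same bound holds for all classical strategies.

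Second, $11/12$ is achieved. I would exhibit an explicit deterministic assignment $a(x)$, $b(y)$ from Tables~\ref{tab:alice} and~\ref{tab:bob} that wins on eleven of the twelve question pairs. The natural way to find it is to pick, for each $y$, a Bob answer $b$, and then check that each Alice question $x$ has some answer $a$ with $u_{xa}^{\mathsf T}Dv_{yb}\neq0$ for all three chosen $b$'s except at a single question pair. This is a finite check on the tables and is the only step needing the explicit data. I expect it to be the main obstacle, only in the sense that it requires inspecting the table entries (or citing the support relations in Appendix~\ref{app:game}).

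\emph{No maximally entangled state.} Suppose a maximally entangled state $\rho=\ket{\Psi}\bra{\Psi}$ of some finite dimension wins perfectly. Then $\Omega=\Psi$ needs no environment. Theorem~\ref{thm:strategy-exact} gives $(W_A\otimes W_B)\Psi=\ket{\psistar}\otimes\chi$. Local isometries preserve the nonzero Schmidt coefficients, so the nonzero eigenvalues of the reduced state of $\Psi$ on Alice's side equal those of $\Tr_{B}$ of $\ket{\psistar}\bra{\psistar}\otimes\ket{\chi}\bra{\chi}$.

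That reduced operator is $\sigma\otimes\tau$, where $\sigma=\diag(4,4,4,4,1)/17$ and $\tau$ is the Alice-side reduced state of $\chi$. Take $\lambda$ to be the largest eigenvalue of $\tau$. Then $4\lambda/17$ and $\lambda/17$ are both nonzero eigenvalues of $\sigma\otimes\tau$, and they are unequal. This contradicts the flat spectrum of a maximally entangled $\Psi$.
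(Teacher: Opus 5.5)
Your arguments for the quantum value, for the upper bound $11/12$, and for the maximally entangled case follow the paper's proof and are correct. For the upper bound you observe that a perfect deterministic strategy is a one-dimensional perfect strategy, and local isometries cannot map it onto a Schmidt-rank-five factor. For the maximally entangled case you show that the reduced spectrum $\sigma\otimes\tau$ contains the unequal nonzero eigenvalues $4\lambda/17$ and $\lambda/17$.

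\textbf{The gap is the other half of the classical value.} You never exhibit a deterministic strategy that wins on eleven of the twelve question pairs. Your upper-bound argument does not show that one exists. A priori every deterministic assignment could lose on two or more question pairs, which would make the classical value $10/12$ or less. Describing a search procedure is not a substitute for an explicit witness, so the equality ``classical value $=11/12$'' is not yet proved.

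The paper closes this gap with Alice's answers $(0,1,1,2)$ at $x=0,1,2,3$ and Bob's answers $(2,4,4)$ at $y=0,1,2$. These correspond to Alice's vectors $e_0-e_1$, $e_0-e_2$, $e_1+e_2$, $e_2-e_4$ and Bob's vectors $e_2$, $h'$, $k'$. Using the unnormalized Alice vectors, the weighted inner products $u^{\mathsf T}Dv$ in columns $y=0,1,2$ are:
\begin{itemize}
\item $0,2,-2$ in row $x=0$;
\item $-2,2,-2$ in row $x=1$;
\item $2,-2,2$ in row $x=2$;
\item $2,-1,1$ in row $x=3$.
\end{itemize}
Normalization does not affect which entries vanish. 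This assignment therefore loses only at $(x,y)=(0,0)$. Together with your upper bound, it gives classical value exactly $11/12$.
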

\begin{proof}
The reference strategy proves the quantum claim. A deterministic classical perfect strategy would be a one-dimensional perfect quantum strategy, which cannot contain the Schmidt-rank-five factor in Theorem~\ref{thm:strategy-exact}. Shared randomness cannot improve on the best deterministic strategy. Because there are twelve equiprobable question pairs, the classical value is at most $11/12$. Alice's answers $(0,1,1,2)$ and Bob's answers $(2,4,4)$ attain this bound; substitution in~\eqref{eq:game} shows that they lose only on $(x,y)=(0,0)$.

Suppose a pure maximally entangled input admitted perfect play. Its extracted state would be $\ket{\psistar}\otimes\chi$ for a pure residual vector $\chi$. Local isometries preserve the nonzero Schmidt coefficients. The input has a flat nonzero Schmidt spectrum, whereas every nonzero squared Schmidt coefficient of $\chi$ contributes output coefficients in the ratio $4:1$, by~\eqref{eq:target}. This is a contradiction.
\end{proof}

Section~\ref{sec:robust-state} proves the explicit state bound directly on unnormalized branch vectors, without assuming a lower bound on any outcome probability. Section~\ref{sec:measurements} explains the exact measurement identification and its qualitative robustness.

\section{Local extraction and the explicit state bound}
\label{sec:robust-state}

The target state has four equal Schmidt amplitudes and a fifth that is half as large. We prove Theorem~\ref{thm:state-robust} by recovering this structure from the game's forbidden outcomes. The players first record five matching labels in local registers. They then align the corresponding state components by local unitaries. We explain these two steps before combining their errors; the explicit operators and their estimates are given in Appendix~\ref{app:robust-transport}.

First suppose that the measurements are projective. Let $\Omega$ be a normalized purification of the shared state, and write the measurement projections as $A_{xa}$ and $B_{yb}$. A winning probability at least $1-\varepsilon$ means that
\[
 \sum_{\mathsf w(x,y,a,b)=0}\norm{A_{xa}B_{yb}\Omega}^{2}
 \le12\varepsilon=\delta^2,
 \qquad \delta=\sqrt{12\varepsilon}.
\]
Thus every collection of forbidden outcomes gives a vector estimate of order $\delta$. All the comparisons below use unnormalized components of $\Omega$; small outcome probabilities cause no difficulty.

Bob's measurement at $y=0$ supplies the five labels directly: write $Q_i=B_{0i}$. At perfect score, the coarse agreements in Table~\ref{tab:coarse} show how Alice recovers the same label through two successive coarse projective measurements. The first groups the projectors of question $x=0$ according to $\{0,1\}$, $\{2,3\}$, and $\{4\}$. These same three sets give the corresponding possibilities for Bob's label. In either of the first two groups, she then uses question $x=1$, grouping its answers as $\{1,2\}$, $\{3,4\}$, and $\{0\}$. These groups distinguish Bob's label sets $\{0,2\}$, $\{1,3\}$, and $\{4\}$. The two coarse labels together specify one of $0,1,2,3$; the remaining consistent branch gives label $4$.

Alice implements this sequence coherently, storing its outcomes in auxiliary registers. The two inconsistent sequences have zero weight at perfect score, but are retained at nonzero error so that the map is an isometry. Each player then applies a local unitary controlled by the recorded label. Denote these unitaries by $T_i^A,T_i^B$, with $T_0^A=T_0^B=I$, and the resulting isometries by $\mathcal V_A,\mathcal V_B$. Appendix~\ref{app:sequential-labels} gives the construction, with the resulting maps in~\eqref{eq:rs:isometries}. They are defined on the full device spaces and are isometries at every score.

Let $\xi=(\mathcal V_A\otimes\mathcal V_B\otimes I_E)\Omega$ be the output, with the two label registers placed first, and let
\[
 \eta_i=T_i^AT_i^BQ_i\Omega,
 \qquad \eta_0=Q_0\Omega.
\]
The coarse agreements ensure that Alice and Bob record nearly the same label. Lemma~\ref{lem:rs:labels} gives the precise estimate
\begin{equation}
 \left\|\xi-\sum_{i=0}^{4}\ket{ii}\otimes\eta_i\otimes\ket{\mathrm g}\right\|
 \le2\delta.
 \label{eq:rs:label-error}
\end{equation}
Here $\ket{\mathrm g}$ is the fixed auxiliary flag for consistent sequences. The bound includes both mismatched labels and the two inconsistent branches. It remains to compare the five residual vectors $\eta_i$.

For the first four branches, the forbidden outcomes identify local reflections that exchange labels $0\leftrightarrow1$, $2\leftrightarrow3$, and $0\leftrightarrow2$, $1\leftrightarrow3$. Their products connect all four labels. Using these reflections as the controlled unitaries therefore carries the four components to a common residual space. At perfect score they coincide; Appendix~\ref{app:robust-transport} proves the quantitative version
\begin{equation}
 \norm{\eta_i-\eta_0}\le10\,000\delta
 \qquad(1\le i\le3).
 \label{eq:rs:high-branch-bounds}
\end{equation}

The fifth branch is fixed by the binary tests at $x=2,3$ and $y=1,2$, using answers $2,3$ on each side. These tests couple the fifth component to one of the first four. To see why they impose the factor $1/2$, first consider perfect play. After the already established identifications of the first four components and suitable local changes of coordinates, the four conditional vectors are
\[
 -\tfrac12\eta_0+Q_4\Omega,\qquad
 -\tfrac12\eta_0+Q_4\Omega,\qquad
 +\tfrac12\eta_0+Q_4\Omega,\qquad
 -\tfrac12\eta_0+Q_4\Omega
\]
in the order $(x,y)=(2,1),(3,1),(2,2),(3,2)$. The coefficients $\pm1/2$ come from the high parts of Bob's two pair projections. These are already determined by the relations among the first four components: their coordinate vectors have entries $\pm1/2$, as shown in~\eqref{eq:rt:high-pair-projections}. Three tests require agreement of the binary outcomes, while the test at $(x,y)=(2,2)$ requires disagreement. Together, these conditions imply that the product of the two local binary reflections carries the fifth component to minus one half of the common component, by Lemma~\ref{lem:robust-two-subspaces}. Absorbing the minus sign into Bob's fifth controlled unitary gives $\eta_4=\eta_0/2$.

The same argument is stable when the score is below one. Appendix~\ref{app:robust-transport} bounds the errors in the coordinate changes and in the four binary tests, then applies the quantitative form of this product identity. It obtains
\begin{equation}
 \norm{\eta_4-\tfrac12\eta_0}\le400\,200\delta.
 \label{eq:rs:branch-bounds}
\end{equation}

We can now combine the estimates. By~\eqref{eq:rs:label-error}--\eqref{eq:rs:branch-bounds},
\begin{equation}
\begin{split}
 \left\|\xi-
 \left(\sum_{i=0}^{3}\ket{ii}+\tfrac12\ket{44}\right)
 \otimes\eta_0\otimes\ket{\mathrm g}\right\|
 &\le(2+3\cdot10\,000+400\,200)\delta\\
 &=430\,202\delta.
\end{split}
 \label{eq:rs:total-error}
\end{equation}
The comparison vector has the target state as its first tensor factor. Projecting $\xi$ onto the orthogonal complement of that target therefore gives
\begin{equation}
 \left\|\bigl[(I-\ket{\psistar}\!\bra{\psistar})\otimes I\bigr]\xi\right\|^2
 \le12(430\,202)^2\varepsilon\le2^{42}\varepsilon.
 \label{eq:rs:infidelity}
\end{equation}
Discarding the residual systems and flags yields a state with squared fidelity at least $1-2^{42}\varepsilon$ with $\ket{\psistar}$. This proves the extraction bound for projective strategies, including mixed shared states because the purification environment was untouched.

For general POVMs, dilate all measurements on each party using a common embedding. For Alice, the map
\[
 \zeta\longmapsto\sum_{a=0}^{4}\sqrt{M_{a|x}}\,\zeta\otimes\ket a
\]
is an isometry for every question $x$. Extend it to a unitary $\mathcal U_x$ acting on inputs $\zeta\otimes\ket0$, and set
\[
 \widehat A_{xa}=\mathcal U_x^{\dagger}(I\otimes\ket a\bra a)\mathcal U_x,
 \qquad J_A\zeta=\zeta\otimes\ket0.
\]
Then $J_A^{\dagger}\widehat A_{xa}J_A=M_{a|x}$ for every $x,a$, with the same $J_A$ for all questions. Construct Bob's dilation in the same way. The embedded state has the original correlation and score, so the projective bound applies. Composing the extraction maps with these embeddings gives local channels on the original devices. Their extraction error is at most $2^{21}\sqrt\varepsilon$, and is always at most one, proving Theorem~\ref{thm:state-robust}.

\section{Identification and robustness of the measurements}
\label{sec:measurements}

The state estimate in Section~\ref{sec:robust-state} leaves two tasks: identify every labeled measurement effect at perfect score, and control all effect actions under the same isometries at near-perfect score. We first classify perfect strategies on their local reduced-state supports. We then show that the fixed extraction circuit realizes this classification. Finally, we extend its exact identities to commuting representations and use compactness to obtain a dimension-independent robustness modulus.

\subsection{Exact identification of the labeled effects}
\label{sec:measurement-exact}

The reconstruction rests on a simple observation. Two effects $E_-,E_+$ are determined by their sum $S=E_-+E_+$ and difference $R=E_+-E_-$:
\begin{equation}
 E_-=(S-R)/2,\qquad E_+=(S+R)/2.
 \label{eq:binary-recovery}
\end{equation}
The coarse agreements determine the pair sums, while the remaining forbidden outcomes determine the differences. Keeping their signs fixes the answer labels as well as the measurement bases.

For a perfect strategy, first compress the POVM effects to the local reduced-state supports and write them as $A_{xa},B_{yb}$. On these supports, a local operator annihilating the state must vanish. To see how the forbidden outcomes constrain the effects, suppose opposite-party effects $0\leq E,F\leq I$ have zero probability of mismatched outcomes. Since they commute,
\begin{equation}
 \begin{split}
  0&=\langle\Omega,[E(I-F)+(I-E)F]\Omega\rangle\\
   &=\norm{(E-F)\Omega}^{2}
      +\langle\Omega,(E-E^2+F-F^2)\Omega\rangle.
 \end{split}
 \label{eq:exact-agreement}
\end{equation}
Every term is nonnegative. Faithfulness of the local reduced states therefore gives $E^2=E$, $F^2=F$, and $E\Omega=F\Omega$. Thus perfect agreement yields both projectivity and an identity on the state, even when the original measurements are POVMs.

Applying this identity to the coarse agreements gives five matching coordinate subspaces. The relations among the first four make their pair differences purely off-diagonal, with unitary blocks identifying them with common residual spaces $\mathcal K_A,\mathcal K_B$. Lemma~\ref{lem:four-components} carries out this step and gives
\begin{equation}
 \Omega=\sum_{i=0}^{3}\ket{ii}\eta+\phi,
 \label{eq:measurement-four-components}
\end{equation}
where $\eta\in\mathcal K_A\otimes\mathcal K_B\otimes\mathcal H_E$ and $\phi\in\mathcal L_A\otimes\mathcal L_B\otimes\mathcal H_E$ occupies the fifth subspaces. The lemma also determines Alice's effects within the first four coordinates. Bob's coarse agreements determine his singleton effects and the two remaining pair sums, as recorded in~\eqref{eq:bob-pair-supports}.

Only the pairs with answers $2,3$ at $x=2,3$ and $y=1,2$ remain. The calculation for these pairs fixes both the missing effects and the amplitude ratio. Identify each pair's high coordinate with $\mathcal K_X$, keeping $\mathcal L_X$ fixed. In these common pair coordinates, write the second-minus-first differences as $R,R'$ for Alice and $C,D'$ for Bob. Projecting~\eqref{eq:measurement-four-components} onto the four pair supports gives $z_-,z_-,z_+,z_-$ at $(x,y)=(2,1),(3,1),(2,2),(3,2)$, where $z_\pm=\pm\eta/2+\phi$. The coefficients follow from the overlaps of $e_0,e_2$ with Bob's lines $h,k$ in Table~\ref{tab:bob}.

The local marginals of $z_\pm$ are direct sums of one quarter of the marginal of $\eta$ and the marginal of $\phi$, hence are faithful on the pair spaces. Apply~\eqref{eq:exact-agreement} to the three agreement tests and one disagreement test in Table~\ref{tab:low}, complementing one binary effect in the latter. The four differences are reflections, and
\begin{equation}
 \begin{aligned}
  Rz_-&=Cz_-,& R'z_-&=Cz_-,\\
  Rz_+&=-D'z_+,& R'z_-&=D'z_-.
 \end{aligned}
 \label{eq:four-binary-relations}
\end{equation}
The first two relations and faithfulness give $R=R'$; the first and fourth then give $C=D'$. Let $J_X=(-I_{\mathcal K_X})\oplus I_{\mathcal L_X}$. Since $J_Az_+=J_Bz_+=z_-$, cross-party commutation gives
\begin{equation}
 RJ_Az_+=CJ_Az_+=J_ACz_+=-J_ARz_+.
 \label{eq:measurement-anticommutation}
\end{equation}
Faithfulness turns this into $\{R,J_A\}=0$, and the same argument gives $\{C,J_B\}=0$. The diagonal blocks therefore vanish. Self-adjointness and $R^2=C^2=I$ now force
\begin{equation}
 R=R'=\begin{pmatrix}0&U_A\\U_A^{\dagger}&0\end{pmatrix},
 \qquad
 C=D'=\begin{pmatrix}0&U_B\\U_B^{\dagger}&0\end{pmatrix},
 \label{eq:measurement-binary-blocks}
\end{equation}
where $U_X:\mathcal L_X\to\mathcal K_X$ is unitary. Taking the high-Alice, low-Bob component of $Rz_-=Cz_-$ yields
\begin{equation}
 \begin{aligned}
  (U_A\otimes I\otimes I_E)\phi
    &=-\tfrac12(I\otimes U_B^\dagger\otimes I_E)\eta,\\
  (U_A\otimes(-U_B)\otimes I_E)\phi&=\eta/2.
 \end{aligned}
 \label{eq:measurement-amplitude}
\end{equation}
This proves the claimed amplitude ratio and fixes the relative sign. Appendix~\ref{app:exact-low} verifies the pair supports and conditional vectors used in this calculation.

Map the fifth subspace by $U_A$ for Alice and by $-U_B$ for Bob. This extracts the target state and makes the remaining differences positive swaps for Alice and negative swaps for Bob. For example, Alice's pair at $x=2$ has sum $(\ket0\bra0+\ket4\bra4)\otimes I$ and difference $(\ket0\bra4+\ket4\bra0)\otimes I$, so~\eqref{eq:binary-recovery} gives the labeled projections onto $(e_0-e_4)/\sqrt2$ and $(e_0+e_4)/\sqrt2$. The same calculation recovers all 20 Alice and 15 Bob effects. The equalities $R=R'$ and $C=D'$ ensure that one choice of local coordinates works for both questions.

Appendix~\ref{sec:proof} gives the full classification. It also returns from compressed effects to the original POVMs: because every compressed effect is a projection, positivity forces the original effect to preserve the reduced-state support, by~\eqref{eq:no-leakage}. Extending the coordinate identifications to local isometries proves Theorem~\ref{thm:strategy-exact}, including mixed states and joint effect actions.

\subsection{Realizing the classification with the fixed circuit}
\label{sec:measurement-circuit}

The exact classification constructs local coordinates from the occupied subspaces. For robustness we need a single extraction prescription that is also defined away from perfect score. The isometries $\mathcal V_A,\mathcal V_B$ in~\eqref{eq:rs:isometries} provide this prescription: their entries are fixed polynomials in the measurement projections, and they remain isometries at every score.

For projective strategies, Lemma~\ref{lem:rm:fixed-circuit} verifies that, at perfect score, the controlled unitaries implement precisely the coordinate identifications above, including Bob's minus sign on the fifth subspace. Here is how those identifications imply the measurement identities. On either local support, suppress the party index and write $v_i$ for the circuit coefficient carrying coordinate $i$ to coordinate zero. The operators $e_{ij}=v_i^\dagger v_j$ are matrix units and obey $v_ke_{ij}=\delta_{ki}v_j$. Thus, with the fixed auxiliary flag suppressed,
\begin{equation}
 \mathcal V_Xe_{ij}
  =\sum_k\ket k\otimes v_ke_{ij}
  =\ket i\otimes v_j
  =(\ket i\bra j\otimes I)\mathcal V_X.
 \label{eq:measurement-circuit-units}
\end{equation}
Every classified effect has coefficients given by its ideal matrix in this matrix-unit basis. Linearity of~\eqref{eq:measurement-circuit-units} therefore gives $\mathcal V_AA_{xa}=(A^\star_{xa}\otimes I)\mathcal V_A$, and likewise for Bob. The fixed circuit both extracts the state and intertwines all effects with the ideal measurements.

\Needspace{8\baselineskip}
\subsection{A common robustness modulus}
\label{sec:measurement-compactness}

Exact self-testing need not imply robustness~\cite{MancinskaSchmidt}. Moreover, finite-dimensional device spaces have no common dimension bound, so compactness cannot be applied directly to their strategies. We instead work with the state space of the universal algebra of two commuting projective measurement families. This space is compact and contains the limits needed for a dimension-independent argument.

Write $\mathfrak A$ for this universal algebra, with projection generators $a_{xa},b_{yb}$, and set
\[
 h_{\mathrm{loss}}=\frac1{12}\sum_{\mathsf w(x,y,a,b)=0}a_{xa}b_{yb}.
\]
A strategy induces a state $\omega$ on $\mathfrak A$ whose loss is $\omega(h_{\mathrm{loss}})$. The essential step is to prove that the fixed circuit's exact state and effect identities hold in every zero-loss commuting representation, including infinite-dimensional ones. Appendix~\ref{app:rm:commuting} establishes this by repeating the classification on local support corners, where the state is faithful. Thus every zero-loss state in the compact space satisfies the required identities.

\Needspace{9\baselineskip}
The following calculation converts these exact identities into simultaneous error control. Let $\Psi_\star=\ket{\psistar}\bra{\psistar}$ and let $\mathcal V$ be the joint polynomial column of the fixed circuit, so $\mathcal V^\dagger\mathcal V=I$. Ideal operators act on the extracted registers; identities on residual, flag, and purification spaces are suppressed. Define
\begin{align}
 E^A_{xa}&=\mathcal V a_{xa}-(A^\star_{xa}\otimes I)\mathcal V,\nonumber\\
 E^B_{yb}&=\mathcal V b_{yb}-(I\otimes B^\star_{yb}\otimes I)\mathcal V,\nonumber\\
 L&=\mathcal V^\dagger[(I-\Psi_\star)\otimes I]\mathcal V
    +\sum_{x,a}(E^A_{xa})^\dagger E^A_{xa}
    +\sum_{y,b}(E^B_{yb})^\dagger E^B_{yb}.
 \label{eq:rm:error-element}
\end{align}
This is one positive element of $\mathfrak A$, combining the extracted-state infidelity and all 35 squared effect errors. Lemma~\ref{lem:rm:perfect-commuting} gives $\omega(L)=0$ whenever $\omega(h_{\mathrm{loss}})=0$. On the state space $S(\mathfrak A)$, put
\begin{equation}
 g(\varepsilon)=\max\{\omega(L):\omega\in S(\mathfrak A),\ 
                                      \omega(h_{\mathrm{loss}})\leq\varepsilon\}.
 \label{eq:rm:modulus}
\end{equation}
The feasible sets are nonempty and weak-$*$ compact, so the maximum exists. Moreover,
\begin{equation}
 g(0)=0,\qquad \lim_{\varepsilon\downarrow0}g(\varepsilon)=0.
 \label{eq:rm:modulus-limit}
\end{equation}
Otherwise, a family of states with loss tending to zero and $L$-expectation bounded below by a positive constant would have a weak-$*$ convergent subnet. Its limit would have zero loss but positive $L$-expectation, contradicting the exact identities.

For a projective strategy of loss at most $\varepsilon$ and any purification, let $\xi=\mathcal V\Omega$ and $p=\langle\xi,(\Psi_\star\otimes I)\xi\rangle$. When $g(\varepsilon)<1$, we have $p\geq1-g(\varepsilon)>0$ and may choose the normalized vector $\chi=p^{-1/2}(\bra{\psistar}\otimes I)\xi$. With $\Phi=\ket{\psistar}\otimes\chi$, the two contributions to an Alice effect error satisfy
\begin{align}
 \norm{\xi-\Phi}&=\sqrt{2-2\sqrt p}\leq\sqrt{2g(\varepsilon)},
       \label{eq:rm:projective-errors}\\
 \norm{\mathcal V a_{xa}\Omega-(A^\star_{xa}\otimes I)\Phi}
 &\leq\norm{E^A_{xa}\Omega}+\norm{\xi-\Phi}
 \leq(1+\sqrt2)\sqrt{g(\varepsilon)}.
       \label{eq:measurement-common-error}
\end{align}
The same bound holds for every Bob effect. Thus one residual vector and the fixed isometries control the state and all single-effect actions at once. For larger errors the trivial bound two suffices. No dimension enters the definition of $g$.

The extension to mixed states and general POVMs is a direct application of the assumption-lifting theorem of Baptista et al.~\cite[Theorem 4.1(a)]{BaptistaEtAl}. Its hypotheses hold because the target is pure and has full Schmidt rank. The theorem gives common local isometries for the state and all single-effect actions; cross-party commutation then controls their joint actions. Appendix~\ref{app:rm:povm} completes this step, proving Theorem~\ref{thm:strategy-robust} without an explicit measurement-error rate.

\section{Discussion}
\label{sec:discussion}

The five-dimensional reduction retains the feature that makes Lalonde's game a counterexample to the sufficiency of maximal entanglement. Perfect play determines a fixed pure-state factor with a nonuniform Schmidt spectrum, together with every labeled measurement effect. The proof explains this rigidity through four equal components and one family of binary comparisons fixing the remaining amplitude. One of the original game's two low-amplitude coordinates can therefore be removed while retaining certification of nonmaximal entanglement.

The two robustness conclusions leave different quantitative questions open. The explicit state bound has a large prefactor, while the measurement bound gives convergence without a rate. Obtaining useful quantitative bounds for all measurement actions under the same isometries remains a natural next step. Further reductions of the state dimension or question and answer counts also require separate arguments; we make no minimality claim.

\newpage
\appendix
\section{Game definition and reduction}
\label{app:game}

We first specify the reference bases, then describe their reduction from Lalonde's game. The final subsection collects the support relations used in both the quantitative extraction and the exact measurement classification.

\subsection{The five-dimensional bases}

The game in Section~\ref{sec:setting} uses the following seven bases, obtained from Lalonde's six-dimensional bases by the reduction below. The question and answer labels in the tables are those used throughout the proof.

Let $e_0,\ldots,e_4$ be the standard basis of $\R^5$ and set
\begin{equation}
  D=\diag(2,2,2,2,1).
  \label{eq:D}
\end{equation}
The following unit vectors lie in the first four coordinates:
\begin{equation}
\begin{aligned}
  c_0&=(-e_0-e_1-e_2+e_3)/2,&c_1&=(e_0+e_1-e_2+e_3)/2,\\
  h&=(-e_0+e_1-e_2-e_3)/2,&h'&=(e_0-e_1-e_2-e_3)/2,\\
  d_0&=(-e_0+e_1-e_2+e_3)/2,&d_1&=(e_0+e_1+e_2+e_3)/2,\\
  k&=(e_0+e_1-e_2-e_3)/2,&k'&=(-e_0+e_1+e_2-e_3)/2.
\end{aligned}
\label{eq:lines}
\end{equation}
The two lists $(c_0,c_1,h,h')$ and $(d_0,d_1,k,k')$ are orthonormal bases of $\operatorname{span}\{e_0,e_1,e_2,e_3\}$. Tables~\ref{tab:alice} and~\ref{tab:bob} specify the game. Normalize each vector in a row separately, and denote the resulting vectors by $u_{xa}$ and $v_{yb}$. Each row is then an orthonormal basis of $\R^5$.

\begin{table}[htbp]
\centering
\renewcommand{\arraystretch}{1.22}
\begin{tabular}{c@{\quad}ccccc}
\toprule
$x\backslash a$&0&1&2&3&4\\
\midrule
0&$e_0-e_1$&$e_0+e_1$&$e_2-e_3$&$e_2+e_3$&$e_4$\\
1&$e_4$&$e_0-e_2$&$e_0+e_2$&$e_1-e_3$&$e_1+e_3$\\
2&$e_1-e_2$&$e_1+e_2$&$e_0-e_4$&$e_0+e_4$&$e_3$\\
3&$e_0-e_3$&$e_0+e_3$&$e_2-e_4$&$e_2+e_4$&$e_1$\\
\bottomrule
\end{tabular}
\caption{Alice's four bases. Answer labels follow the columns. Each displayed vector is normalized individually.}
\label{tab:alice}
\end{table}

\begin{table}[htbp]
\centering
\renewcommand{\arraystretch}{1.22}
\begin{tabular}{c@{\quad}ccccc}
\toprule
$y\backslash b$&0&1&2&3&4\\
\midrule
0&$e_0$&$e_1$&$e_2$&$e_3$&$e_4$\\
1&$c_0$&$c_1$&$h+e_4$&$h-e_4$&$h'$\\
2&$d_0$&$d_1$&$k+e_4$&$k-e_4$&$k'$\\
\bottomrule
\end{tabular}
\caption{Bob's three bases, with the same normalization convention. The vectors in the last two rows are defined in~\eqref{eq:lines}.}
\label{tab:bob}
\end{table}

\subsection{Reduction of Lalonde's six-dimensional game}
\label{sec:reduction}

Write $G_6$ for Lalonde's game $G_{4,3,6,6}$, with the questions and answers in~\cite[Appendix A]{Lalonde} reindexed from zero. To make the relation to $G_5$ explicit, let $f_0,\ldots,f_5$ be the coordinates used there. The original Schmidt matrix is $S_6=\diag(1,1,2,2,2,2)$. Define a real orthogonal matrix $O$ by its columns:
\begin{equation}
 O=\frac1{\sqrt2}\bigl(f_4-f_5,\ f_4+f_5,\ f_2-f_3,\ f_2+f_3,\ f_0-f_1,\ f_0+f_1\bigr).
 \label{eq:red:coordinates}
\end{equation}
This is Bob's first basis matrix in the original game. Applying $O^{\mathsf T}$ on both sides gives
\begin{equation}
 D_6=O^{\mathsf T}S_6O=\diag(2,2,2,2,1,1),\qquad
 \ket{\psi_6}=\frac{2\sum_{i=0}^3\ket{ii}+\ket{44}+\ket{55}}{\sqrt{18}}.
 \label{eq:red:state-six}
\end{equation}
We use $e_0,\ldots,e_5$ for these new coordinates, identifying their first five directions with the five-dimensional coordinates above.

In these coordinates, Lalonde's seven bases have the following representatives, in their original answer order. Normalize every displayed vector separately; signs of individual vectors do not affect the measurement projections or the predicate:
\begin{equation}
\begin{aligned}
 \mathcal U^{(6)}_0&=(e_0-e_1,e_0+e_1,e_2-e_3,e_2+e_3,e_4,e_5),\\
 \mathcal U^{(6)}_1&=(e_4,e_5,e_0-e_2,e_0+e_2,e_1-e_3,e_1+e_3),\\
 \mathcal U^{(6)}_2&=(e_1-e_2,e_1+e_2,e_0-e_4,e_0+e_4,e_3-e_5,e_3+e_5),\\
 \mathcal U^{(6)}_3&=(e_0-e_3,e_0+e_3,e_2-e_4,e_2+e_4,e_1-e_5,e_1+e_5),\\
 \mathcal V^{(6)}_0&=(e_0,e_1,e_2,e_3,e_4,e_5),\\
 \mathcal V^{(6)}_1&=(c_0,c_1,h+e_4,h-e_4,h'+e_5,h'-e_5),\\
 \mathcal V^{(6)}_2&=(d_0,d_1,k+e_4,k-e_4,k'+e_5,k'-e_5).
\end{aligned}
\label{eq:red:bases-six}
\end{equation}
Here the vectors from~\eqref{eq:lines} are embedded in $\R^6$ with zero final coordinate. Equation~\eqref{eq:red:bases-six} follows by multiplying the original basis matrices by $O^{\mathsf T}$, without changing any question or answer order.

Let $P=I-e_5e_5^{\mathsf T}$ and identify $P\R^6$ with $\R^5$. The reduction compresses each reference effect to this subspace. For a unit vector $r$ in the first five coordinates, the rank-one projection $\pi(v)=vv^{\dagger}/\norm v^2$ satisfies
\begin{equation}
 P\pi(e_5)P=0,\qquad
 P\pi(r\pm e_5)P=\tfrac12\pi(r),\qquad
 P\bigl[\pi(r+e_5)+\pi(r-e_5)\bigr]P=\pi(r).
 \label{eq:red:compression}
\end{equation}
Thus an outcome supported on $e_5$ disappears, while a pair $r\pm e_5$ becomes a single outcome supported on $r$. Alice's last pair becomes $e_3$ at $x=2$ and $e_1$ at $x=3$; Bob's last pair becomes $h'$ at $y=1$ and $k'$ at $y=2$. The other directions already lie in $P\R^6$ and are unchanged. These operations give exactly Tables~\ref{tab:alice} and~\ref{tab:bob}.

For completeness, Table~\ref{tab:red:labels} records which old outcomes contribute to each new outcome. Write these sets as $F^A_x(a)$ and $F^B_y(b)$. If $A^{(6)}_{x\alpha},B^{(6)}_{y\beta}$ are the original reference projections in the coordinates above, then
\begin{equation}
 A^\star_{xa}=\left.\sum_{\alpha\in F^A_x(a)}PA^{(6)}_{x\alpha}P\right|_{P\R^6},\qquad
 B^\star_{yb}=\left.\sum_{\beta\in F^B_y(b)}PB^{(6)}_{y\beta}P\right|_{P\R^6}.
 \label{eq:red:effect-map}
\end{equation}
\begin{table}[htbp]
\centering
\renewcommand{\arraystretch}{1.15}
\begin{tabular}{cc@{\quad}ccccc}
\toprule
Party&Question&New 0&New 1&New 2&New 3&New 4\\
\midrule
Alice&0&$\{0\}$&$\{1\}$&$\{2\}$&$\{3\}$&$\{4\}$\\
Alice&1&$\{0\}$&$\{2\}$&$\{3\}$&$\{4\}$&$\{5\}$\\
Alice&2,3&$\{0\}$&$\{1\}$&$\{2\}$&$\{3\}$&$\{4,5\}$\\
Bob&0&$\{0\}$&$\{1\}$&$\{2\}$&$\{3\}$&$\{4\}$\\
Bob&1,2&$\{0\}$&$\{1\}$&$\{2\}$&$\{3\}$&$\{4,5\}$\\
\bottomrule
\end{tabular}
\caption{Old answer labels contributing to each answer of the reduced game. The omitted old answers are Alice's answer 5 at question 0, her answer 1 at question 1, and Bob's answer 5 at question 0.}
\label{tab:red:labels}
\end{table}

The state is reduced on the same subspace:
\begin{equation}
 (P\otimes P)\ket{\psi_6}=\sqrt{\frac{17}{18}}\ket{\psistar},\qquad
 D=\left.PD_6P\right|_{P\R^6}.
 \label{eq:red:state-map}
\end{equation}
This identity describes the relation between the two reference states. The self-testing extraction maps in Section~\ref{sec:robust-state} are isometries defined on arbitrary strategies for $G_5$.

The winning predicate has an equally explicit reduction. Let $\mathsf w_6$ denote Lalonde's original predicate. Then
\begin{equation}
 \mathsf w(x,y,a,b)=
 \max_{\substack{\alpha\in F^A_x(a)\\\beta\in F^B_y(b)}}\mathsf w_6(x,y,\alpha,\beta).
 \label{eq:red:predicate}
\end{equation}
In words, a pair of new answers is accepted if at least one pair of its old representatives was accepted. To verify~\eqref{eq:red:predicate}, first suppose at least one of the two sets of old labels is a singleton. Its basis vector has no $e_5$ component, so every relevant old weighted inner product is a nonzero scalar multiple of the new one. If both sets are $\{4,5\}$, the old directions are $r\pm e_5$ and $s\pm e_5$, where $r\in\{e_3,e_1\}$ and $s\in\{h',k'\}$. Their unnormalized weighted inner products are $2\ip r s\pm1$. In all four cases $|\ip r s|=1/2$, so some old pair is accepted, and the reduced weighted inner product $2\ip r s\in\{-1,1\}$ is also nonzero. This proves the identity for every question and answer pair.

Equations~\eqref{eq:red:compression}--\eqref{eq:red:predicate} specify the sense in which $G_5$ reduces Lalonde's game: they remove one reference coordinate and reduce each answer alphabet, with all question labels retained. Deleting and identifying answers can change which perfect strategies exist, so the properties of $G_5$ require their own proof. The theorems in Section~\ref{sec:setting} establish these properties for the reduced predicate.

\FloatBarrier
\subsection{Support relations used in the proof}
\label{app:support}

The predicate~\eqref{eq:game} determines all 300 entries of the game. The following tables organize the forbidden outcomes by their role in the proof: matching the five labels, aligning the first four components, and fixing the fifth amplitude. Appendix~\ref{app:robust-transport} uses their quantitative consequences; Appendix~\ref{sec:proof} uses the exact relations to classify the measurements. Every entry follows directly from the reference vectors and their weighted inner products.

For a set of answers $S$ or $T$, write
\[
 A_x(S)=\sum_{a\in S}A_{xa},\qquad B_y(T)=\sum_{b\in T}B_{yb}.
\]
Each row of Table~\ref{tab:coarse} means that an answer pair is forbidden whenever membership in $S$ and $T$ differs. Perfect play therefore gives an agreement relation between $A_x(S)$ and $B_y(T)$. These relations supply the coordinate projections and the two commuting pairs of coarse projections used in the high-component argument.

\begin{table}[htbp]
\centering
\renewcommand{\arraystretch}{1.12}
\begin{tabular}{cccc}
\toprule
$x$&$y$&$S$&$T$\\
\midrule
0&0&$\{0,1\}$&$\{0,1\}$\\
0&0&$\{2,3\}$&$\{2,3\}$\\
0&0&$\{4\}$&$\{4\}$\\
1&0&$\{1,2\}$&$\{0,2\}$\\
1&0&$\{3,4\}$&$\{1,3\}$\\
1&0&$\{0\}$&$\{4\}$\\
2&0&$\{0,1\}$&$\{1,2\}$\\
2&0&$\{2,3\}$&$\{0,4\}$\\
2&0&$\{4\}$&$\{3\}$\\
3&0&$\{0,1\}$&$\{0,3\}$\\
3&0&$\{2,3\}$&$\{2,4\}$\\
3&0&$\{4\}$&$\{1\}$\\
\midrule
0&1&$\{1,2\}$&$\{0,1\}$\\
1&1&$\{0,2,3\}$&$\{0,2,3\}$\\
0&2&$\{0,2\}$&$\{0,4\}$\\
1&2&$\{2,4\}$&$\{0,1\}$\\
\bottomrule
\end{tabular}
\caption{Coarse agreements. Every mismatch between the two displayed membership conditions is rejected.}
\label{tab:coarse}
\end{table}

Table~\ref{tab:high} gives the complete set of accepted Bob answers for each indicated Alice answer. Its complementary forbidden events fix the products of the high-component reflections.

\begin{table}[htbp]
\centering
\renewcommand{\arraystretch}{1.12}
\begin{tabular}{cccc}
\toprule
$x$&$y$&$a$&Accepted $b$\\
\midrule
2&1&0&$\{1,2,3\}$\\
2&1&1&$\{0,4\}$\\
2&2&0&$\{0,2,3\}$\\
2&2&1&$\{1,4\}$\\
3&1&0&$\{0,4\}$\\
3&1&1&$\{1,2,3\}$\\
3&2&0&$\{0,2,3\}$\\
3&2&1&$\{1,4\}$\\
\bottomrule
\end{tabular}
\caption{Support restrictions for Alice's high-component pairs. Answers outside the last column are rejected.}
\label{tab:high}
\end{table}

Table~\ref{tab:low} restricts both parties to the answer pair $\{2,3\}$. Three rows enforce agreement of the binary answers; the remaining row enforces disagreement. These are conditional restrictions within the indicated pair supports, not agreement relations for the full measurements.

\begin{table}[htbp]
\centering
\renewcommand{\arraystretch}{1.2}
\begin{tabular}{cccl}
\toprule
$x$&$y$&Forbidden $(a,b)$ in $\{2,3\}^2$&Relation\\
\midrule
2&1&$(2,3),(3,2)$&Agreement\\
3&1&$(2,3),(3,2)$&Agreement\\
2&2&$(2,2),(3,3)$&Disagreement\\
3&2&$(2,3),(3,2)$&Agreement\\
\bottomrule
\end{tabular}
\caption{The four binary restrictions fixing the fifth Schmidt amplitude.}
\label{tab:low}
\end{table}

One additional support restriction is used in the quantitative argument: at $x=0$ and $y\in\{1,2\}$, Alice's answer $4$ forbids every Bob answer outside $\{2,3\}$. It gives the low-component leakage bound~\eqref{eq:rt:low-bob-pair-leakage}. In the exact classification, Bob's pair supports are identified from the coarse projections in Table~\ref{tab:coarse}.

\FloatBarrier
\Needspace{10\baselineskip}
\section{Quantitative estimates for state extraction}
\label{app:robust-transport}

This appendix proves the three estimates used in Section~\ref{sec:robust-state}: the label error~\eqref{eq:rs:label-error}, the high-branch error~\eqref{eq:rs:high-branch-bounds}, and the fifth-branch error~\eqref{eq:rs:branch-bounds}. We first define the complete extraction maps and control the labels. We then show that the high-component reflections exchange the label subspaces. Finally, we put the four binary tests in common coordinates and use them to compare the fifth branch with one half of the first. All estimates are on unnormalized vectors.

\subsection{The extraction maps and matching labels}
\label{app:sequential-labels}

The construction keeps every branch of Alice's sequential measurement, so the maps are isometries for every projective strategy.

Use the projective measurement notation $A_{xa},B_{yb}$ and the fixed purification $\Omega$ from Section~\ref{sec:robust-state}. The loss assumption is
\begin{equation}
 \sum_{\mathsf w(x,y,a,b)=0}\norm{A_{xa}B_{yb}\Omega}^{2}
 \le12\varepsilon=\delta^2,\qquad\delta=\sqrt{12\varepsilon}.
 \label{eq:rs:loss}
\end{equation}
The sum of the conditional probabilities of any collection of distinct forbidden events is at most $\delta^2$. In particular, every coarse agreement in Table~\ref{tab:coarse} gives
\begin{equation}
 \norm{(P-Q)\Omega}^2
 =\langle\Omega,[P(I-Q)+(I-P)Q]\Omega\rangle\le\delta^2,
 \label{eq:rs:agreement}
\end{equation}
where $P,Q$ are the indicated opposite-party coarse projections. The corresponding reflections differ on $\Omega$ by at most $2\delta$.

Bob's first measurement supplies the five labels: set $Q_i=B_{0i}$. On Alice's side, define
\begin{equation}
  U=A_{00}+A_{01},\qquad W=A_{02}+A_{03},\qquad
  V=A_{11}+A_{12},\qquad T=A_{13}+A_{14}.
  \label{eq:coarse-four}
\end{equation}
Alice's five intended branches and two additional branches are
\begin{equation}
 (K_0,K_1,K_2,K_3,K_4)=(VU,TU,VW,TW,A_{04}),
 \qquad (L_0,L_1)=(A_{10}U,A_{10}W).
 \label{eq:rs:kraus}
\end{equation}
At imperfect score the products need not be projections. They nevertheless satisfy
\begin{equation}
 \sum_{i=0}^{4}K_i^{\dagger}K_i+\sum_{j=0}^{1}L_j^{\dagger}L_j=I,
 \label{eq:rs:complete}
\end{equation}
because $V+T+A_{10}=I$ and $U+W+A_{04}=I$.

Define the reflections
\begin{equation}
\begin{aligned}
 f&=2(A_{01}+A_{02})-I,&F&=2(B_{10}+B_{11})-I,\\
 g&=2(A_{10}+A_{12}+A_{13})-I,&G&=2(B_{10}+B_{12}+B_{13})-I.
\end{aligned}
\label{eq:rs:reflections}
\end{equation}
The Bob reflections commute, and~\eqref{eq:rs:agreement} gives
$\norm{(f-F)\Omega},\norm{(g-G)\Omega}\le2\delta$.
Use the controlled unitaries
\begin{equation}
 (T^A_0,T^A_1,T^A_2,T^A_3)=(I,f,g,-fg),\qquad
 (T^B_0,T^B_1,T^B_2,T^B_3)=(I,F,G,-FG).
 \label{eq:rs:high-unitaries}
\end{equation}
For the fifth branch, we need two unitary changes of coordinates on Bob's space. For $v=h,k$ from~\eqref{eq:lines}, write its first four coordinates as $v=(s_0,s_1,s_2,s_3)/2$, where $s_i\in\{\pm1\}$, and set
\begin{equation}
  \begin{split}
    E(L) & =(I+\mathrm iL)/\sqrt2,\qquad L\text{ a reflection}, \\
    D_v & =s_0Q_0-\mathrm i s_1Q_1-\mathrm i s_2Q_2+s_3Q_3 +Q_4, \\
    V_v & =[D_vE(F)E(G)]^{\dagger}.
  \end{split}
  \label{eq:rt:charts}
\end{equation}
Every factor in $V_v$ is unitary on the full local space.
Define
\begin{equation}
 T^A_4=I-2A_{22},\qquad
 T^B_4=-V_h(I-2B_{12})V_h^{\dagger}.
 \label{eq:rs:low-unitaries}
\end{equation}
These are unitaries at every score, with polynomial entries in the measurement projections. We will show that, at perfect score, $V_h$ carries the occupied high line $h$ to coordinate zero and fixes the occupied low subspace. Its action on the high and low components is estimated in Appendices~\ref{app:coordinate-evaluation} and~\ref{app:conditional-vectors}, respectively.

Let $\ket{\mathrm g},\ket{\mathrm b_0},\ket{\mathrm b_1}$ be orthonormal local flag vectors. Define
\begin{equation}
\begin{aligned}
 \mathcal V_A\zeta
 &=\sum_{i=0}^{4}\ket i\otimes T_i^AK_i\zeta\otimes\ket{\mathrm g}
   +\sum_{j=0}^{1}\ket0\otimes L_j\zeta\otimes\ket{\mathrm b_j},\\
 \mathcal V_B\zeta
 &=\sum_{i=0}^{4}\ket i\otimes T_i^BQ_i\zeta.
\end{aligned}
\label{eq:rs:isometries}
\end{equation}
Both maps are isometries by~\eqref{eq:rs:complete} and $\sum_iQ_i=I$. The additional branches are retained in orthogonal flags, so the construction does not postselect. The output labels follow the reference coordinates in Tables~\ref{tab:alice} and~\ref{tab:bob}; Lemma~\ref{lem:rm:fixed-circuit} will verify the exact measurement actions in these coordinates.

\begin{lemma}
\label{lem:rs:labels}
For every projective strategy,
\begin{equation}
 \sum_{i=0}^{4}\norm{(K_i-Q_i)\Omega}^{2}
 +\sum_{j=0}^{1}\norm{L_j\Omega}^{2}\le4\delta^2.
 \label{eq:rs:label-distance}
\end{equation}
\end{lemma}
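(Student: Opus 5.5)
The plan is to compare each of Alice's sequential branches with a product of Bob projections that reproduces $Q_i$ exactly, and to peel off one coarse agreement from Table~\ref{tab:coarse} at a time. On Bob's side define the coarse projections
\[
 \bar U=Q_0+Q_1,\quad \bar W=Q_2+Q_3,\quad \bar A_{04}=Q_4,\qquad
 \bar V=Q_0+Q_2,\quad \bar T=Q_1+Q_3,\quad \bar A_{10}=Q_4,
\]
which by Table~\ref{tab:coarse} (rows with $y=0$, $x\in\{0,1\}$) are the agreement partners of $U,W,A_{04}$ and $V,T,A_{10}$. Since the $Q_i$ are orthogonal projections, we have $\bar V\bar U=Q_0$, $\bar T\bar U=Q_1$, $\bar V\bar W=Q_2$, $\bar T\bar W=Q_3$, and $\bar A_{10}\bar U=\bar A_{10}\bar W=0$. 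So every branch $K_i=XY$ with $i\le3$ (where $X\in\{V,T\}$, $Y\in\{U,W\}$) has Bob counterpart $\bar X\bar Y=Q_i$. Likewise $L_j=A_{10}Y$ has counterpart $0$, and $K_4=A_{04}$ has counterpart $Q_4$.

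For a product branch I would use cross-party commutation, $X\bar Y\Omega=\bar YX\Omega$, to write
\[
 (XY-\bar X\bar Y)\Omega=X(Y-\bar Y)\Omega+\bar Y(X-\bar X)\Omega,
\]
and then apply $\norm{a+b}^2\le2\norm a^2+2\norm b^2$. I would sum the first terms over $X\in\{V,T,A_{10}\}$ for fixed $Y$. These $X$ are orthogonal projections summing to $I$, so the sum is exactly $\norm{(Y-\bar Y)\Omega}^2$. I would sum the second terms over $Y$ for fixed $X$. Here $\bar U,\bar W$ are orthogonal projections, so the sum is at most $\norm{(X-\bar X)\Omega}^2$. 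The $L_j$ fit the same scheme because their counterpart $\bar A_{10}\bar Y$ vanishes. This gives
\[
 \text{LHS of \eqref{eq:rs:label-distance}}\le
 2\sum_{Y\in\{U,W\}}\norm{(Y-\bar Y)\Omega}^2+\norm{(A_{04}-Q_4)\Omega}^2
 +2\sum_{X\in\{V,T,A_{10}\}}\norm{(X-\bar X)\Omega}^2 .
\]

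The step needing care is converting these agreement errors into forbidden weight without overcounting. By \eqref{eq:rs:agreement}, each $\norm{(P-\bar P)\Omega}^2$ equals the total weight of the mismatch events of that row, and these events are forbidden. Within question pair $(0,0)$, a single forbidden event $(a,b)$ is a mismatch for at most two of the three rows: the row containing $a$ and the row containing $b$. Hence the three $x=0$ terms sum to at most $2\delta_0^2$, where $\delta_0^2$ is the forbidden weight at $(0,0)$. Similarly, the three $x=1$ terms sum to at most $2\delta_1^2$, with $\delta_1^2$ the forbidden weight at $(1,0)$, and $\delta_0^2+\delta_1^2\le\delta^2$ by \eqref{eq:rs:loss}. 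Since the coefficient on every term is at most $2$, the right-hand side is at most $2\cdot2\delta_0^2+2\cdot2\delta_1^2\le4\delta^2$, which is the claimed bound.
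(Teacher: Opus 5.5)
Your proposal is correct and is essentially the paper's own proof. Both use the same splitting $X(Y-\bar Y)\Omega+\bar Y(X-\bar X)\Omega$ of each branch against its Bob counterpart, the same summation via $\sum_X X=I$ and $\bar U+\bar W\le I$, and the same bound of $2\delta^2$ on the total coarse mismatch at question pairs $(0,0)$ and $(1,0)$. Your per-event double-counting remark simply spells out the paper's ``orthogonality therefore gives $2\delta^2$'' step.
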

\begin{proof}
Consider the projective partitions
\[
 \mathcal P=(U,W,A_{04}),\quad
 \mathcal Q=(Q_0+Q_1,Q_2+Q_3,Q_4),\quad
 \mathcal R=(V,T,A_{10}),\quad
 \mathcal S=(Q_0+Q_2,Q_1+Q_3,Q_4).
\]
Their mismatches consist of distinct forbidden events at question pairs
$(0,0)$ and $(1,0)$. Orthogonality therefore gives
\[
 \sum_{a=0}^{2}\norm{(\mathcal P_a-\mathcal Q_a)\Omega}^{2}
 +\sum_{b=0}^{2}\norm{(\mathcal R_b-\mathcal S_b)\Omega}^{2}
 \le2\delta^2.
\]
For $a=0,1$ and $b=0,1,2$, cross-party commutation gives
\[
 (\mathcal R_b\mathcal P_a-\mathcal S_b\mathcal Q_a)\Omega
 =\mathcal R_b(\mathcal P_a-\mathcal Q_a)\Omega
  +\mathcal Q_a(\mathcal R_b-\mathcal S_b)\Omega.
\]
The six Bob intersections are $(Q_0,Q_1,0,Q_2,Q_3,0)$ in
lexicographic order. Thus these six differences, together with
$\mathcal P_2-\mathcal Q_2$, are exactly the five $K_i-Q_i$ and
the two $L_j$, up to reordering.
Apply $\norm{u+v}^2\le2\norm u^2+2\norm v^2$ to each displayed
difference and sum. Since $\sum_b\mathcal R_b=I$ and
$\mathcal Q_0+\mathcal Q_1\le I$, adding the final branch
bounds the desired sum by twice the preceding mismatch bound,
which is $4\delta^2$.
\end{proof}

Replace $K_i\Omega$ by $Q_i\Omega$ and $L_j\Omega$ by zero in Alice's isometry. Orthogonality of the labels and flags bounds the error by $2\delta$. Bob's isometry preserves this error. Because his labels are orthogonal and commute with Alice's controlled unitaries, the comparison vector is exactly the sum in~\eqref{eq:rs:label-error}.

\subsection{Relations controlling the first four label subspaces}
\label{app:high-relations}

Our next goal is to show that conjugation by $F$ and $G$ approximately exchanges the appropriate label projections. This will align the first four extracted branches. We begin with supported versions of these reflections and derive their square and product relations. These will imply approximate anticommutation with the two label signs. Throughout, the projective strategy and purification are those fixed above.

We repeatedly transfer a right factor to the other party before estimating an error. If $E,C$ act on Bob, $c$ acts on Alice, and
\[
 \norm{E\Omega}\le e\delta,\qquad \norm E\le M,
 \qquad \norm c\le m,\qquad \norm{(C-c)\Omega}\le d\delta,
\]
then cross-party commutation gives
\begin{equation}
 \norm{EC\Omega}\le (me+Md)\delta,
 \qquad EC\Omega=E(C-c)\Omega+cE\Omega.
 \label{eq:robust-transfer}
\end{equation}
The same rule holds with the parties exchanged. For products of contractions, opposite-party approximations occur in reverse order and their errors add. All estimates below concern unnormalized vectors; no lower bound on a branch probability is used.

In addition to the extraction reflections $f,g,F,G$, define
\begin{align*}
  \widehat F & =I-2(B_{20}+B_{24}), & \widehat G & =2(B_{20}+B_{21})-I, \\
  \widehat f & =I-2(A_{00}+A_{02}), & \widehat g & =2(A_{12}+A_{14})-I.
\end{align*}
They are reflections, $[F,G]=[\widehat F,\widehat G]=0$, and
\begin{equation}
  \lVert(F-f)\Omega\rVert,\quad \lVert(G-g)\Omega\rVert,\quad \lVert(\widehat F-\widehat f)\Omega\rVert,\quad \lVert(\widehat G-\widehat g)\Omega\rVert \leq 2\delta.
  \label{eq:rt:high-reflections}
\end{equation}
The last two inequalities follow from the coarse pairs $A_{00}+A_{02}$ and $B_{20}+B_{24}$, and $A_{12}+A_{14}$ and $B_{20}+B_{21}$, respectively. The complete mismatch set in each row of Table~\ref{tab:coarse} has total probability at most $\delta^2$, and hence bounds the corresponding projection disagreement by $\delta$.

Write $Q_{ij}=Q_i+Q_j$ and put
\[
  P=Q_{01},\quad W_B=Q_{23},\quad V_B=Q_{02},\quad T_B=Q_{13}, \qquad H=Q_0+Q_1+Q_2+Q_3.
\]
\begin{align*}
  R & =PFP+W_BFW_B, & S & =V_BGV_B+T_BGT_B, \\
  J & =P-W_B-Q_4, & K & =V_B-T_B-Q_4, \\
  R' & =JR, & S' & =KS.
\end{align*}
Here $H$ is an orthogonal projection, $J,K$ are reflections, and $R,S,R',S'$ are self-adjoint contractions supported on $H$: each such operator $L$ satisfies $L=HLH$. The signs of $J$ are constant on each pair supporting a block of $R$, and the signs of $K$ are constant on each pair supporting a block of $S$.

The Alice approximations to $P,W_B,V_B,T_B$ are $U,W,V,T$, respectively, with error $\delta$. Each $Q_i$, $0\leq i<4$, has an Alice contraction approximation of error $2\delta$, namely the corresponding member of $(VU,TU,VW,TW)$. The Alice approximations to $J,K$ are
\[
  j_A=U-W-A_{04}=2U-I,\qquad k_A=V-T-A_{10}=2V-I,
\]
with error $2\delta$. Indeed, $J=2P-I$ and $K=2V_B-I$, so the agreements between $U,P$ and between $V,V_B$ give these bounds directly. Within Alice's individual measurements, $j_Af=fj_A=\widehat f$ and $k_Ag=gk_A=\widehat g$. Consequently
\begin{equation}
  \lVert(JF-\widehat F)\Omega\rVert,\qquad \lVert(KG-\widehat G)\Omega\rVert\leq6\delta.
  \label{eq:rt:high-signed-reflections}
\end{equation}

If $p\in\{P,W_B\}$ and $p_A\in\{U,W\}$ is its Alice approximation, then $[f,p_A]=0$ and cross-party commutation gives
\[
  [F,p]=[F-f,p-p_A].
\]
Using $\lVert F-f\rVert\leq2$ and $\lVert p-p_A\rVert\leq1$ to bound the two terms on the right gives $\lVert[F,p]\Omega\rVert\leq4\delta$. The same argument applies to $G$ and $p\in\{V_B,T_B\}$. The two summands in
\[
  D_R:=R-HF=P[F,P]+W_B[F,W_B]
\]
have orthogonal ranges, so $\lVert D_R\Omega\rVert\leq 4\sqrt2\delta\leq6\delta$. The analogous estimates are collected below:
\begin{equation}
  \begin{array}{c|c|c}
    \text{operator} & \text{operator norm bound} & \text{bound on its norm on }\Omega \\ \hline
    D_R=R-HF & 2 & 6\delta \\
    D_S=S-HG & 2 & 6\delta \\
    D_{R'}=R'-H\widehat F & 2 & 12\delta \\
    D_{S'}=S'-H\widehat G & 2 & 12\delta
  \end{array}
  \label{eq:rt:high-differences}
\end{equation}
The same argument bounds $D_S$. For the primed differences, use $D_{R'}=JD_R+H(JF-\widehat F)$ and its analogue for $S'$.

The identity $R^2-H=RD_R+D_RF$ and~\eqref{eq:robust-transfer} give
\[
  \lVert(R^2-H)\Omega\rVert\leq(6+6+4)\delta=16\delta.
\]
The same bound holds for $S^2-H$. Since $H-R^2$ and $H-S^2$ are positive contractions, right multiplication by any $Q_i$ with $0\le i<4$ adds at most $2\delta$. Every diagonal block of either square defect therefore has norm at most $18\delta$ when applied to $\Omega$.

Likewise,
\[
 RS-HFG=RD_S+D_RG
\]
has norm on $\Omega$ at most $16\delta$; exchanging $R,S$ gives
the same bound for $SR-HGF$. Using the primed differences in
\eqref{eq:rt:high-differences} gives $28\delta$ for each of
$R'S'-H\widehat F\widehat G$ and $S'R'-H\widehat G\widehat F$.

Two further relations follow from the forbidden outcomes. Set
\[
  N_1=\frac{I-FG}{2}=B_{11}+B_{12}+B_{13},\qquad N_2=\frac{I-\widehat F\widehat G}{2}=B_{20}+B_{22}+B_{23}.
\]
For $e=A_{20}$, $e'=A_{21}$ and $E=e+e'$, both terms in $e-EN_\ell=e(I-N_\ell)-e'N_\ell$ are forbidden, for $\ell=1,2$. They have orthogonal ranges and their squared norms sum to at most $\delta^2$. Hence $\lVert(e-EN_\ell)\Omega\rVert \leq\delta$. Subtracting the two relations gives an error $2\delta$. Replacing $E\Omega$ by $Q_{12}\Omega$ adds at most $\delta$, since $\lVert N_1-N_2\rVert\leq1$. It follows that
\begin{equation}
  \lVert(FG-\widehat F\widehat G)Q_{12}\Omega\rVert\leq6\delta.
  \label{eq:rt:high-first-product}
\end{equation}
For $A_{30},A_{31}$, the corresponding two projections are $I-N_1$ and $N_2$, and $A_{30}+A_{31}$ agrees with $Q_{03}$ on $\Omega$ up to $\delta$. The same calculation gives
\begin{equation}
  \lVert(FG+\widehat F\widehat G)Q_{03}\Omega\rVert\leq6\delta.
  \label{eq:rt:high-second-product}
\end{equation}
Both calculations use the forbidden rectangles in Table~\ref{tab:high}.

\subsection{Aligning the first four extracted branches}
\label{app:high-branch-alignment}
The product relations force the supported reflections to exchange the
appropriate labels. Put $D_H=JKH=Q_{03}-Q_{12}$.
The projection $H$ and the contraction $D_H$ have Alice contraction
approximations
\[
 I-A_{04},\qquad (A_{30}+A_{31})-(A_{20}+A_{21}),
\]
with errors $\delta$ and $2\delta$, respectively.
For the second approximation, a difference of two projections is a
contraction, even when the projections do not commute.
The two product comparisons above, followed by
\eqref{eq:rt:high-first-product}--\eqref{eq:rt:high-second-product}, give
\begin{align*}
 \norm{(RS+R'S'JK)\Omega}
 &\le\norm{(RS-HFG)H\Omega}
       +\norm{(R'S'-H\widehat F\widehat G)D_H\Omega}\\
 &\quad+\norm{H(FGH+\widehat F\widehat G D_H)\Omega}\\
 &\le(18+32+12)\delta=62\delta.
\end{align*}
The last term splits into the two tested subspaces $Q_{12}$ and $Q_{03}$.
The same estimate holds for $SR+S'R'JK$, using the product comparisons
in the reversed order and the commutation of $F,G$ and of
$\widehat F,\widehat G$.
Since $R$ commutes with $J$, $S$ commutes with $K$, and $J$ commutes
with $K$, the following are exact identities:
\[
 RS+R'S'JK=R(S+JSJ),\qquad
 SR+S'R'JK=S(R+KRK).
\]
Each of $R,S$ has an Alice contraction approximation with error
$9\delta$: use its $6\delta$ comparison with $HF$ or $HG$,
then transfer the two factors at cost $3\delta$.
The corresponding approximations to $KRK$ and $JSJ$ have error
$13\delta$. Since $H-R^2$ is a positive contraction whose norm on
$\Omega$ is at most $16\delta$, the identity
\[
 S+JSJ=R\bigl(RS+R'S'JK\bigr)+(H-R^2)S+(H-R^2)JSJ
\]
and~\eqref{eq:robust-transfer} imply
$\norm{(S+JSJ)\Omega}\le(62+25+29)\delta=116\delta$.
Interchanging the roles of the supported reflections gives
\begin{equation}
 \norm{(R+KRK)\Omega},\quad\norm{(S+JSJ)\Omega}\le116\delta.
 \label{eq:rt:high-anticommutation}
\end{equation}

For $i\in\{0,1,2,3\}$, let $i\oplus1$ and $i\oplus2$ denote
changing the last and first binary digits of $i$, respectively; set
$i\oplus3=(i\oplus1)\oplus2$.
Both $J$ and $K$ are scalar on each label subspace. Compressing
\eqref{eq:rt:high-anticommutation} by $Q_i$ therefore gives
\[
 \norm{Q_iRQ_i\Omega},\quad\norm{Q_iSQ_i\Omega}
 \le\tfrac12(116+4)\delta=60\delta,
\]
where right multiplication by $Q_i$ costs at most $4\delta$.
The $18\delta$ bounds on the compressed square defects then give
\begin{equation}
 \begin{aligned}
 \norm{(Q_iRQ_{i\oplus1}RQ_i-Q_i)\Omega}&\le78\delta,\\
 \norm{(Q_iSQ_{i\oplus2}SQ_i-Q_i)\Omega}&\le78\delta.
 \end{aligned}
 \label{eq:rt:high-diagonal-elimination}
\end{equation}
Indeed, the omitted term in each compressed square is the square of
its diagonal block, whose norm on $\Omega$ is at most $60\delta$.

\begin{lemma}[Conjugation of the label projections]
  \label{lem:rt:high-endpoints}
  For every $i\in\{0,1,2,3\}$,
  \begin{align}
    \lVert(FQ_iF-Q_{i\oplus1})\Omega\rVert & \leq400\delta, & \lVert(GQ_iG-Q_{i\oplus2})\Omega\rVert & \leq400\delta,
    \label{eq:rt:high-single-swaps}                                \\
    \lVert(FGQ_iGF-Q_{i\oplus3})\Omega\rVert & \leq800\delta.
    \label{eq:rt:high-double-swaps}
  \end{align}
\end{lemma}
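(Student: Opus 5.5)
The plan is to prove the single swaps \eqref{eq:rt:high-single-swaps} from the block structure of the supported reflections $R,S$ and the diagonal elimination~\eqref{eq:rt:high-diagonal-elimination}. The double swap~\eqref{eq:rt:high-double-swaps} then follows by conjugating one single swap by the other. I treat $F$ and $R$; the case of $G$ and $S$ is identical with $K$, $V_B,T_B$ and $i\oplus2$ in place of $J$, $P,W_B$ and $i\oplus1$.

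\emph{Step 1: reduce $F$ to $R$.} Write $FQ_iF=(HF+Q_4F)Q_i(HF+Q_4F)^{\dagger}$ and $HF=R-D_R$. Terms containing $Q_4FQ_i$ are small on $\Omega$. Since $Q_i\le P$ or $Q_i\le W_B$ and $Q_4P=Q_4W_B=0$, we have $Q_4FQ_i=Q_4[F,p]Q_i$ with $p\in\{P,W_B\}$. The right factor $Q_i$ is moved to Alice via~\eqref{eq:robust-transfer}, using its $2\delta$ approximation $VU,\dots$, and the $4\delta$ commutator bound is then applied. Terms containing $D_R$ are handled the same way, using the $6\delta$ entry of~\eqref{eq:rt:high-differences}. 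This reduces~\eqref{eq:rt:high-single-swaps} to bounding $\norm{(RQ_iR-Q_{i\oplus1})\Omega}$, at a cost of a few dozen $\delta$.

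\emph{Step 2: the block computation.} Because $R$ is block diagonal on the pairs $\{Q_0,Q_1\}$ and $\{Q_2,Q_3\}$, we have $RQ_i=Q_iRQ_i+X$ with $X=Q_{i\oplus1}RQ_i$. Hence
\[
RQ_iR=XX^{\dagger}+X(Q_iRQ_i)+(Q_iRQ_i)X^{\dagger}+(Q_iRQ_i)^2 .
\]
The main term satisfies $XX^{\dagger}=Q_{i\oplus1}RQ_iRQ_{i\oplus1}$. By~\eqref{eq:rt:high-diagonal-elimination} applied at the label $i\oplus1$, it equals $Q_{i\oplus1}$ on $\Omega$ up to $78\delta$. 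The three terms containing the diagonal block $Q_iRQ_i$ should be $O(60\delta)$ on $\Omega$ by the $60\delta$ diagonal bound. In each case the diagonal block must be the factor applied to $\Omega$ first, or be brought there by transfer. For $(Q_iRQ_i)X^{\dagger}\Omega$, I would replace $X^{\dagger}$ by its Alice approximation, obtained from the $9\delta$ approximation of $R$ and the $2\delta$ approximations of the $Q$'s. The diagonal block then acts directly on $\Omega$, at a cost of roughly $(60+13+\dots)\delta$. Summing all contributions should fit within $400\delta$ with room to spare.

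\emph{Step 3: the double swap.} Write
\[
FGQ_iGF-Q_{i\oplus3}=F(GQ_iG-Q_{i\oplus2})F+(FQ_{i\oplus2}F-Q_{i\oplus3}).
\]
The second term is a single swap at label $i\oplus2$, bounded by $400\delta$. In the first term the outer $F$ is unitary and does not change the norm, but the inner operator acts on $F\Omega$ rather than $\Omega$. I would replace $F\Omega$ by $f\Omega$, which costs $2\delta$ times $\norm{GQ_iG-Q_{i\oplus2}}\le2$. Then $f$ is commuted past Bob's operators and dropped, leaving the $400\delta$ single-swap bound.

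\emph{Main obstacle.} The main difficulty is the bookkeeping of constants. Step 3 as written gives $800\delta+4\delta$, so the single swaps must actually be proved with a slightly smaller constant, for instance $398\delta$. Step 2 must therefore be organized so that every transfer is charged once, which I expect to be feasible given the slack between $78+3\cdot60$ plus transfer costs and $400$. The delicate point is Step 2's cross terms. There the diagonal block is not adjacent to $\Omega$, so the order of the transfers has to be chosen carefully to avoid the reverse-order error accumulation noted after~\eqref{eq:robust-transfer}.
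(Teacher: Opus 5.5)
Your proposal is correct and is essentially the paper's proof. The main term comes from the off-diagonal block $X=Q_{i\oplus1}FQ_i$, with $XX^{\dagger}$ controlled by~\eqref{eq:rt:high-diagonal-elimination}; the remaining terms are controlled by the $60\delta$ diagonal-block and commutator-leakage bounds; and the double swap follows by conjugating one single swap by the other reflection, using the unrounded single-swap constant.

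The paper organizes the remainder more economically. It puts the diagonal block and the leakage into one term $L=FQ_i-X$, uses the exact identity $FQ_iF-Q_{i\oplus1}=LF+XL^{\dagger}+(XX^{\dagger}-Q_{i\oplus1})$ with $L^{\dagger}=Q_i[p,F]+Q_iFQ_i$, and obtains $214\delta$ for a single swap, hence $430\delta$ for the double swap. This avoids your separate $F\to R$ reduction, whose $D_R^{\dagger}\Omega$ terms are not covered by the $6\delta$ entry for $D_R\Omega$ and need their own small commutator estimate.
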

\begin{proof}
  Consider $X=Q_0FQ_1$, $B=Q_1FQ_1$, and $L=FQ_1-X$. The leakage outside $P=Q_{01}$ obeys
  \[
    \lVert(I-P)FQ_1\Omega\rVert \leq\lVert[F,P]Q_1\Omega\rVert\leq(4+4)\delta.
  \]
  Thus $\lVert L\Omega\rVert\leq68\delta$ and $\lVert L\rVert\leq2$. Separately,
  \[
    L^{\dagger}=Q_1F(I-P)+B=Q_1[P,F]+B
  \]
  gives $\lVert L^{\dagger}\Omega\rVert\leq64\delta$. The identity
  \[
    FQ_1F-Q_0=LF+XL^{\dagger}+(XX^{\dagger}-Q_0)
  \]
  gives the bound $(72+64+78)\delta=214\delta$. The same calculation applies to every directed horizontal or vertical pair of label subspaces. For the composition of the two swaps,
  \[
    FGQ_iGF-Q_{i\oplus3} =G(FQ_iF-Q_{i\oplus1})G +(GQ_{i\oplus1}G-Q_{i\oplus3}).
  \]
  The first parenthesis is a difference of projections and hence has norm at most one. Right multiplication by $G$ adds at most $2\delta$. The unrounded bound for a single pair therefore gives $(214+2+214)\delta=430\delta$, proving both claims.
\end{proof}

Recall $T_i^A=(I,f,g,-fg)_i$ and $T_i^B=(I,F,G,-FG)_i$ for $0\leq i<4$. Alice's $T_i^A$ approximates $(T_i^B)^{\dagger}$ with error at most $4\delta$. Cross-party commutation therefore gives
\[
  \left\lVert T_i^AT_i^BQ_i\Omega -T_i^BQ_i(T_i^B)^{\dagger}\Omega\right\rVert\leq4\delta.
\]
In particular, for the extracted branch vectors $\eta_i=T_i^AT_i^BQ_i\Omega$,
\begin{equation}
  \eta_0=Q_0\Omega,\qquad \lVert\eta_i-\eta_0\rVert\leq804\delta\leq10\,000\delta \quad(1\leq i<4).
  \label{eq:rt:high-branches}
\end{equation}

\subsection{Evaluating the coordinate changes for the fifth branch}
\label{app:coordinate-evaluation}

The high-branch estimate is now established. To compare the fifth branch with the first, we must express Bob's two pair projections in a common coordinate system. Define $v_i=T_i^BQ_0\Omega$ in the original purified device space. These four vectors need not be orthogonal; the comparison matrices below serve only to track operator actions on them. The commuting reflections act by the exact rules
\begin{equation}
  \begin{array}{c|rrrr}
    & v_0 & v_1 & v_2 & v_3 \\ \hline
    F & v_1 & v_0 & -v_3 & -v_2 \\
    G & v_2 & -v_3 & v_0 & -v_1
  \end{array}
  \label{eq:rt:high-four-vectors}
\end{equation}
By Lemma~\ref{lem:rt:high-endpoints}, the difference of projections $T_i^BQ_0(T_i^B)^{\dagger}-Q_i$ has norm on $\Omega$ at most $800\delta$. The right factor $T_i^BQ_0$ has an Alice contraction approximation with error at most $6\delta$. Hence
\begin{equation}
  \lVert(I-Q_i)v_i\rVert\leq806\delta\leq\kappa\delta, \qquad \kappa=10\,000.
  \label{eq:rt:high-support}
\end{equation}
For any diagonal phase unitary $\Delta=\sum_{i=0}^4d_iQ_i$, this implies
\begin{equation}
  \lVert\Delta v_i-d_iv_i\rVert\leq2\kappa\delta.
  \label{eq:rt:high-phases}
\end{equation}
The operator $\Delta-d_iI$ vanishes on the range of $Q_i$ and has norm at most two.

After right multiplication by $T_i^BQ_0$, either difference in \eqref{eq:rt:high-signed-reflections} has error at most $(6+2\cdot6)\delta=18\delta$. Apply~\eqref{eq:rt:high-phases} to $J$ or $K$, after the exact actions in~\eqref{eq:rt:high-four-vectors}. The resulting approximate actions are
\begin{equation}
  \begin{array}{c|rrrr}
    & v_0 & v_1 & v_2 & v_3 \\ \hline
    \widehat F & v_1 & v_0 & v_3 & v_2 \\
    \widehat G & v_2 & v_3 & v_0 & v_1
  \end{array}
  \qquad \text{with each error at most }50\,000\delta.
  \label{eq:rt:high-other-actions}
\end{equation}
Here the actual bound is $(2\kappa+18)\delta$.

To define the four-dimensional comparison matrices, replace each $v_i$ in the displayed actions by the standard basis vector $e_i$ of $\mathbb C^4$. On this comparison space, $Q_i$ is replaced by $e_ie_i^{\dagger}$ for $0\le i<4$, while $Q_4$ is replaced by zero. Put
\[
  h=\tfrac12(-1,1,-1,-1),\qquad k=\tfrac12(1,1,-1,-1).
\]
The Bob projections and their four-dimensional comparison matrices are
\begin{equation}
  \begin{array}{c|c|c}
    \text{projection} & \text{exact operator formula} & \text{comparison matrix} \\ \hline
    C_1=B_{12}+B_{13} & (I-F)(I+G)/4 & hh^{\dagger} \\
    C_2=B_{22}+B_{23} & (I+\widehat F)(I-\widehat G)/4 & kk^{\dagger} \\
  \end{array}
  \label{eq:rt:high-pair-projections}
\end{equation}
For $y=1$, the comparison action on the four vectors is exact. For $y=2$, expand the product and telescope its quadratic term using \eqref{eq:rt:high-other-actions}. Each linear term contributes at most $50\,000\delta$ and the quadratic term at most $100\,000\delta$; the factor $1/4$ gives a total error of $50\,000\delta$ per column. The signed permutations above determine the comparison matrices because each sign vector spans the simultaneous eigenspace for the prescribed pair of eigenvalues.

We now evaluate the coordinate changes $V_v$ defined in~\eqref{eq:rt:charts}. For $v=h$ or $k$, the comparison matrices satisfy
\[
  E(F)E(G)e_0=\tfrac12(e_0+ie_1+ie_2+e_3), \qquad D_vE(F)E(G)e_0=v,
\]
and hence $V_vv=e_0$ in the comparison space. The physical operator $V_v$ is unitary for every strategy. In the product $V_v=E(G)^{\dagger}E(F)^{\dagger}D_v^{\dagger}$, the diagonal factor acts first. Its error is bounded by \eqref{eq:rt:high-phases}, while the subsequent actions of $E(F),E(G)$ are exact. Thus the action of $V_v$ on each $v_i$ differs from the comparison action by at most $2\kappa\delta=20\,000\delta$.

The absolute values in each column of a rank-one projection onto one of these two sign vectors sum to one. Combining the preceding estimates therefore gives
\begin{equation}
  \left\lVert V_vC_yv_i-c_iv_0\right\rVert \leq
  \begin{cases}
    20\,000\delta, & y=1, \\
    70\,000\delta, & y=2,
  \end{cases}
  \label{eq:rt:high-coordinate-action}
\end{equation}
where $v$ is the sign vector assigned to $C_y$ in \eqref{eq:rt:high-pair-projections}, and $c_i\in\{1/2,-1/2\}$ is defined by $v=(c_0,c_1,c_2,c_3)$. The comparison term is therefore a scalar multiple of $v_0=Q_0\Omega$.

\subsection{Putting the four conditional vectors in common coordinates}
\label{app:conditional-vectors}

For $x=2,3$ and $y=1,2$, put
\begin{equation}
  U_2=I,\qquad U_3=g,\qquad r_2=0,\qquad r_3=2,
  \qquad V_1=V_h,\qquad V_2=V_k.
  \label{eq:rt:coordinates}
\end{equation}
Write $\eta_0=Q_0\Omega$ and define
\begin{equation}
  E_x=A_{x2}+A_{x3},\qquad C_y=B_{y2}+B_{y3},
  \qquad w_{xy}=U_xV_yE_xC_y\Omega.
  \label{eq:rt:low-pairs}
\end{equation}
The definitions of $C_1,C_2$ here agree with the pair projections in~\eqref{eq:rt:high-pair-projections}.
The vectors $w_{xy}$ are the unnormalized components for answers $\{2,3\}$ on both sides, after the local coordinate changes. We estimate their high and low parts separately. The signs that will enter the binary tests are
\begin{equation}
  s_{21}=-1,\qquad s_{31}=-1,\qquad s_{22}=+1,\qquad s_{32}=-1.
  \label{eq:rt:signs}
\end{equation}

\begin{lemma}[High parts of the conditional vectors]
  \label{lem:rt:conditional}
  For $x=2,3$ and $y=1,2$,
  \begin{equation}
    \left\lVert U_xV_yC_yQ_{r_x}\Omega
        -\frac{s_{xy}}2Q_0\Omega\right\rVert\leq\tau,
    \qquad \tau=100\,000\delta.
    \label{eq:rt:high-targets}
  \end{equation}
\end{lemma}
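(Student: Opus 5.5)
The plan is to reduce both cases to the coordinate-action estimate~\eqref{eq:rt:high-coordinate-action}. That estimate already controls $V_vC_yv_i$ against $c_iv_0$, with $v_0=Q_0\Omega=\eta_0$, for the four vectors $v_i=T_i^BQ_0\Omega$. What remains is to write $U_xQ_{r_x}\Omega$ in terms of one of the $v_i$, up to an error of order $\delta$. Alice's factor $U_x$ must then be moved past Bob's operators using cross-party commutation. Throughout, $h=\tfrac12(-1,1,-1,-1)$ is assigned to $y=1$ and $k=\tfrac12(1,1,-1,-1)$ to $y=2$, as in~\eqref{eq:rt:high-pair-projections}.

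\emph{Case $x=2$.} Here $U_2=I$ and $r_2=0$, so the vector to control is $V_yC_yQ_0\Omega=V_yC_yv_0$. Applying~\eqref{eq:rt:high-coordinate-action} with $i=0$ gives $\norm{V_yC_yv_0-c_0v_0}\le70\,000\delta$. Reading off $c_0$ from the sign vectors gives $c_0=-\tfrac12$ for $h$ and $c_0=+\tfrac12$ for $k$. These match $s_{21}=-1$ and $s_{22}=+1$ in~\eqref{eq:rt:signs}, and the error is below $\tau$.

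\emph{Case $x=3$.} Here $U_3=g$ and $r_3=2$. I first relate $Q_2\Omega$ to $v_2=GQ_0\Omega$. By Lemma~\ref{lem:rt:high-endpoints}, $\norm{(GQ_0G-Q_2)\Omega}\le400\delta$. Next, $GQ_0G\Omega=GQ_0(G-g)\Omega+gGQ_0\Omega$, because $g$ acts on Alice and commutes with Bob's operators. The agreement bound~\eqref{eq:rs:agreement}, in the reflection form $\norm{(G-g)\Omega}\le2\delta$, therefore gives $\norm{Q_2\Omega-gv_2}\le402\delta$.

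Now apply the unitary $gV_yC_y$, whose norm is at most one. Using $g^2=I$ and commutation again, $gV_yC_y\,gv_2=V_yC_yv_2$. Hence
\[
\norm{gV_yC_yQ_2\Omega-V_yC_yv_2}\le402\delta .
\]
Then~\eqref{eq:rt:high-coordinate-action} with $i=2$ bounds $\norm{V_yC_yv_2-c_2v_0}$ by $70\,000\delta$. For both $h$ and $k$ we have $c_2=-\tfrac12$, which matches $s_{31}=s_{32}=-1$. The total error is at most $70\,402\delta\le\tau$.

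The only delicate point is the bookkeeping in the second case. One must take the reflection $G$ corresponding to the label swap $0\leftrightarrow2$, so that $T_2^B=G$ and $v_2=GQ_0\Omega$ without a sign, and use the fact that Alice's $g$ is an exact involution commuting with Bob. No bound on branch probabilities is needed, because every step is a norm estimate on unnormalized vectors.
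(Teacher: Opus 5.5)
Your proof is correct and follows essentially the paper's route: transfer Alice's $g$ across to Bob's $G$ at cost $2\delta$, apply the conjugation bound of Lemma~\ref{lem:rt:high-endpoints}, and finish with the coordinate-action estimate~\eqref{eq:rt:high-coordinate-action} and the signs of $h,k$. The only difference is bookkeeping. The paper first commutes $U_x$ onto $\Omega$ and uses $\norm{(GQ_2G-Q_0)\Omega}$ with its uniform $800\delta$ bound. You instead use $\norm{(GQ_0G-Q_2)\Omega}\le400\delta$ and transfer $G$ to $g$ inside, which gives the slightly sharper total $70\,402\delta$.
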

\begin{proof}
  The Bob approximations to $U_2,U_3$ are $I,G$, with errors zero and $2\delta$. Commute $U_x$ to the right and replace it by that approximation. The resulting vector is $V_yC_yQ_{r_x}T_{r_x}^B\Omega$. Since each $T_r^B$ is self-adjoint, Lemma~\ref{lem:rt:high-endpoints} gives
  \[
    \lVert(Q_rT_r^B-T_r^BQ_0)\Omega\rVert
    =\lVert(T_r^BQ_rT_r^B-Q_0)\Omega\rVert\leq800\delta.
  \]
  Thus the vector differs from $V_yC_yv_{r_x}$ by at most $802\delta$. Equation~\eqref{eq:rt:high-coordinate-action} bounds the remaining error by $70\,000\delta$. The coordinate $r_x$ of $h$ or $k$ is $s_{xy}/2$, proving the assertion.
\end{proof}

\paragraph{Low components.}
The coarse agreements give
\begin{gather}
  \lVert(A_{04}-Q_4)\Omega\rVert\leq\delta,
  \qquad \lVert(A_{10}-Q_4)\Omega\rVert\leq\delta,
  \label{eq:rt:low-singleton-sync}\\
  \lVert(E_x-Q_{r_x}-Q_4)\Omega\rVert\leq\delta.
  \label{eq:rt:low-pair-sync}
\end{gather}
The exact identities $fA_{04}=-A_{04}$ and $gA_{10}=A_{10}$ imply
\begin{equation}
  \lVert(F+I)Q_4\Omega\rVert\leq4\delta,
  \qquad \lVert(G-I)Q_4\Omega\rVert\leq4\delta.
  \label{eq:rt:low-scalar-restrictions}
\end{equation}
For the first inequality, replacing $Q_4\Omega$ by $A_{04}\Omega$ costs at most $2\delta$, and
\[
  (F+I)A_{04}\Omega=A_{04}(F-f)\Omega
\]
costs at most another $2\delta$. The proof of the second inequality is the same, using $A_{10}$.

For $E(L)=(I+\mathrm iL)/\sqrt2$ from~\eqref{eq:rt:charts}, a two-term telescoping estimate now gives
\[
  \lVert(E(F)E(G)-I)Q_4\Omega\rVert\leq4\sqrt2\delta.
\]
Indeed, on this vector each factor differs by at most $2\sqrt2\delta$ from its scalar value $(1-\mathrm i)/\sqrt2$ or $(1+\mathrm i)/\sqrt2$, whose product is one. Since $D_vQ_4=Q_4$, multiplication by the inverse of $D_vE(F)E(G)$ gives
\begin{equation}
  \lVert(V_y-I)Q_4\Omega\rVert\leq6\delta.
  \label{eq:rt:low-bob-coordinate-action}
\end{equation}
On Alice's side $U_2=I$, and
\[
  \lVert(g-I)Q_4\Omega\rVert
  =\lVert(g-I)(Q_4-A_{10})\Omega\rVert\leq2\delta.
\]
Consequently,
\begin{equation}
  \lVert(U_xV_y-I)Q_4\Omega\rVert\leq8\delta.
  \label{eq:rt:low-joint-coordinate-action}
\end{equation}
When Alice answers $4$ at question zero, all Bob answers outside $\{2,3\}$ at questions one and two are forbidden. Thus $\lVert A_{04}(I-C_y)\Omega\rVert\leq\delta$. Replacing $A_{04}\Omega$ by $Q_4\Omega$ gives
\begin{equation}
  \lVert(I-C_y)Q_4\Omega\rVert\leq2\delta.
  \label{eq:rt:low-bob-pair-leakage}
\end{equation}

In $w_{xy}$, commute $E_x$ past $C_y$ and replace $E_x\Omega$ by $(Q_{r_x}+Q_4)\Omega$ at cost $\delta$. Lemma~\ref{lem:rt:conditional} controls the high part. Removing $C_y$ from the low part costs $2\delta$, and the local coordinate changes cost $8\delta$. We have therefore proved
\begin{equation}
  \left\lVert w_{xy}-\left(\frac{s_{xy}}2\eta_0+Q_4\Omega\right)\right\rVert
  \leq\tau+11\delta.
  \label{eq:rt:low-conditional}
\end{equation}

\subsection{Using the four binary tests to align the fifth branch}
\label{app:fifth-branch-alignment}

The conditional vectors now have the required common form. We first isolate the consequence of the three agreement tests and one disagreement test, then apply it with the errors just obtained. The following estimate uses a local reflection that approximately reverses the sign of the high component while preserving the low component. It does not divide by either component's weight. A reflection means a self-adjoint unitary.

\begin{lemma}
\label{lem:robust-two-subspaces}
Let $h,\phi$ be vectors, set $z_\pm=\pm h+\phi$, and let
$J_A,R,R'$ be Alice reflections and $C,D'$ Bob reflections.
Suppose that $\norm{J_Az_--z_+}\le b$ and each of
\begin{equation}
 (R-C)z_-,\qquad(R'-C)z_-,\qquad(R+D')z_+,\qquad(R'-D')z_-
 \label{eq:rs:binary-quartet}
\end{equation}
has norm at most $a$. Then $\norm{RC\phi+h}\le2a+b$.
\end{lemma}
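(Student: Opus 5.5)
The plan is to write $\phi$ and $h$ in terms of the two conditional vectors, $\phi=\tfrac12(z_++z_-)$ and $h=\tfrac12(z_+-z_-)$. This gives
\[
 RC\phi+h=\tfrac12\bigl[(RCz_-- z_-)+(RCz_++z_+)\bigr].
\]
It then suffices to show $\norm{RCz_--z_-}\le a$ and $\norm{RCz_++z_+}\le3a+2b$, since half their sum is $2a+b$. Throughout I use two facts. First, Alice's operators $R,R',J_A$ commute with Bob's operators $C,D'$. Second, every reflection is unitary, so multiplying an error vector by a reflection does not change its norm. No division by the weights of $h$ or $\phi$ occurs.

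\emph{The $z_-$ estimate.} Apply $C$ to the first vector in~\eqref{eq:rs:binary-quartet}. Using $C^2=I$ and $CR=RC$, we get $C(R-C)z_-=RCz_--z_-$. Its norm is at most $a$.

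\emph{The $z_+$ estimate.} This is the only step that needs a chain of relations, because the single test at $z_+$ involves $D'$ rather than $C$. I first compare $C$ and $D'$ on $z_-$. Subtracting the fourth vector in~\eqref{eq:rs:binary-quartet} from the second gives $(D'-C)z_-$, so $\norm{(C-D')z_-}\le2a$. I then transport this comparison to $z_+$ with $J_A$:
\[
 (C-D')z_+=(C-D')(z_+-J_Az_-)+J_A(C-D')z_-.
\]
Here $J_A$ commutes with $C-D'$, and $\norm{C-D'}\le2$. Hence $\norm{(C-D')z_+}\le2b+2a$.

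Next, apply $R$ to the third vector in~\eqref{eq:rs:binary-quartet}. This gives $R(R+D')z_+=z_++RD'z_+$, so $\norm{RD'z_++z_+}\le a$. Combining the two bounds,
\[
 \norm{RCz_++z_+}\le\norm{R(C-D')z_+}+\norm{RD'z_++z_+}\le(2a+2b)+a=3a+2b.
\]

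Substituting both estimates into the displayed identity gives $\norm{RC\phi+h}\le\tfrac12\bigl(a+3a+2b\bigr)=2a+b$. The main obstacle is the $z_+$ estimate. It is the only place where the flip $J_A$ and the equality $C=D'$ must be combined, and the order of the commutations has to be arranged so that each error is multiplied only by a unitary or by $C-D'$, whose norm is at most $2$.
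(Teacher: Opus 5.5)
Your proof is correct and is essentially the paper's argument. The paper also bounds $(C-D')z_-$ by subtracting the second and fourth residuals and transports this to $z_+$ with $J_A$, giving $\norm{(R+C)z_+}\le3a+2b$. It then concludes from $2(RC\phi+h)=R\bigl[(R+C)z_+-(R-C)z_-\bigr]$, which is exactly your decomposition, since your $C(R-C)z_-$ equals $-R(R-C)z_-$ by cross-party commutation.
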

\begin{proof}
Subtracting the second and fourth residuals gives
$\norm{(C-D')z_-}\le2a$. Since $J_A$ commutes with Bob's reflections,
\[
 \norm{(C-D')z_+}
 \le\norm{J_A(C-D')z_-}+\norm{(C-D')(z_+-J_Az_-)}
 \le2a+2b.
\]
Thus $e_-=(R-C)z_-$ and $e_+=(R+C)z_+$ satisfy
$\norm{e_-}\le a$ and $\norm{e_+}\le3a+2b$.
The identity $2(RC\phi+h)=R(e_+-e_-)$, using $R^2=I$,
proves the claim.
\end{proof}

For two-outcome restrictions of projective measurements, write $E=A_a+A_b$, $C_0=B_c+B_d$, $r=I-2A_a$, and $s=I-2B_c$. Orthogonality of the four joint outcomes gives
\begin{align}
  \lVert(r-s)EC_0\Omega\rVert^2
    &=4\bigl(\lVert A_aB_d\Omega\rVert^2+\lVert A_bB_c\Omega\rVert^2\bigr),
    \label{eq:rt:low-binary-minus}\\
  \lVert(r+s)EC_0\Omega\rVert^2
    &=4\bigl(\lVert A_aB_c\Omega\rVert^2+\lVert A_bB_d\Omega\rVert^2\bigr).
    \label{eq:rt:low-binary-plus}
\end{align}
For example, $r-s$ has eigenvalues $-2,2$ on the two unequal outcomes and vanishes on the other two. Both identities persist after the corresponding local unitary changes of coordinates.

Define reflections
\begin{align*}
  \mathsf R&=U_2(I-2A_{22})U_2^{\dagger},
  &\mathsf R'&=U_3(I-2A_{32})U_3^{\dagger},\\
  \mathsf C&=V_1(I-2B_{12})V_1^{\dagger},
  &\mathsf D&=V_2(I-2B_{22})V_2^{\dagger}.
\end{align*}
Each row of Table~\ref{tab:low} has forbidden probability at most $\delta^2$. Equations~\eqref{eq:rt:low-binary-minus}--\eqref{eq:rt:low-binary-plus} therefore bound each vector
\begin{equation}
  (\mathsf R-\mathsf C)w_{21},\quad
  (\mathsf R'-\mathsf C)w_{31},\quad
  (\mathsf R+\mathsf D)w_{22},\quad
  (\mathsf R'-\mathsf D)w_{32}
  \label{eq:rt:low-actual}
\end{equation}
by $2\delta$.

Apply the lemma directly with
\[
 h=\tfrac12\eta_0,\qquad\phi=Q_4\Omega,\qquad
 z_\pm=\pm\tfrac12\eta_0+Q_4\Omega,\qquad J_A=I-2U.
\]
The agreement $\norm{(U-Q_0-Q_1)\Omega}\le\delta$ gives
\[
 \norm{(I-U)Q_0\Omega}\le\delta,\qquad
 \norm{UQ_4\Omega}\le\delta,
 \qquad\norm{J_Az_--z_+}\le3\delta.
\]
Indeed, the first two bounds follow by multiplying the agreement
by $Q_0$ and $Q_4$, respectively, and
$J_Az_--z_+=-(I-U)Q_0\Omega-2UQ_4\Omega$.
By~\eqref{eq:rt:low-conditional}, replacing each $w_{xy}$
in~\eqref{eq:rt:low-actual} by its corresponding $z_\pm$ costs
at most $2(\tau+11\delta)$. Thus the lemma applies with
$a=2\tau+24\delta$ and $b=3\delta$.
Since $T_4^A=\mathsf R$ and $T_4^B=-\mathsf C$, it yields
\begin{equation}
 \boxed{\left\lVert T_4^AT_4^BQ_4\Omega-\frac12\eta_0\right\rVert
 \le4\tau+51\delta\le400\,200\delta.}
 \label{eq:rt:low-final}
\end{equation}
No normalization or division by a component weight is used.

\Needspace{10\baselineskip}
\section{Exact identification of the strategy}
\label{sec:proof}

We prove Theorem~\ref{thm:strategy-exact} directly for arbitrary POVMs and mixed states. The proof has four steps. Coarse agreements give five matching subspaces; relations within the first four identify their state components and measurement effects; the remaining binary tests fix the fifth component and its relative amplitude; and positivity extends the resulting classification from the reduced-state supports to the original effects. This argument constructs canonical local coordinates. Appendix~\ref{app:robust-measurements} then shows that the fixed circuit of Section~\ref{sec:robust-state} realizes these coordinates and uses that fact to prove robustness.

\subsection{Agreement relations and five matching subspaces}
\label{app:exact-supports}

Fix a perfect finite-dimensional strategy and a purification $\Omega\in\mathcal H_A\otimes\mathcal H_B\otimes\mathcal H_E$ of its state. Compress each local effect to the support of the corresponding reduced state, and write the compressed effects as $A_{xa}$ and $B_{yb}$. This preserves the probabilities and the POVM sums. Throughout this appendix, the symbols $A_{xa},B_{yb}$ refer to these compressed effects until we return to the original POVMs in Appendix~\ref{app:exact-effects}. On the compressed spaces the reduced states are faithful: for any local operator $L$, the equality $L\Omega=0$ implies $L=0$.

Equation~\eqref{eq:exact-agreement} shows that exact agreement of opposite-party effects forces them to be projections on these supports. We call the resulting identity $E\Omega=F\Omega$ an \emph{agreement relation}.
Products transfer across an agreement relation in reverse order. For example, if $E_i\Omega=F_i\Omega$ for $i=1,2$, then $E_1E_2\Omega=F_2F_1\Omega$. Faithfulness promotes the resulting vector identities to local operator identities. We also use the following consequence of positivity. If two coarse sums of one POVM are projections, they commute, and their product is the sum over the intersection of the two outcome sets. Indeed, every effect is supported in each coarse projection containing it and is orthogonal to each coarse projection whose outcome set excludes it.

Apply these facts to Table~\ref{tab:coarse}, using the coarse operators $U,W,V,T$ defined in~\eqref{eq:coarse-four}. They are projections agreeing with Bob's coarse sums over $\{0,1\}$, $\{2,3\}$, $\{0,2\}$, and $\{1,3\}$ at question zero. Intersecting those sums shows that all five $Q_i=B_{0i}$ are projections. Their Alice counterparts are
\begin{equation}
  \begin{gathered}
    P_0=UV,\qquad P_1=UT,\qquad P_2=WV,\qquad P_3=WT,
    \qquad P_4=A_{04}=A_{10},\\
    P_i\Omega=Q_i\Omega,\qquad
    P_iP_j=\delta_{ij}P_i,\qquad \sum_{i=0}^{4}P_i=I.
  \end{gathered}
  \label{eq:five-sectors}
\end{equation}
The factors in each product commute, by their agreement with commuting Bob projections. The other rows involving Bob's question zero determine the following pair sums:
\begin{equation}
  \begin{array}{c|ccc}
    x&\text{first group}&\text{second group}&\text{third group}\\ \hline
    0&\{0,1\}:P_0+P_1&\{2,3\}:P_2+P_3&\{4\}:P_4\\
    1&\{1,2\}:P_0+P_2&\{3,4\}:P_1+P_3&\{0\}:P_4\\
    2&\{0,1\}:P_1+P_2&\{2,3\}:P_0+P_4&\{4\}:P_3\\
    3&\{0,1\}:P_0+P_3&\{2,3\}:P_2+P_4&\{4\}:P_1
  \end{array}
  \label{eq:pair-supports}
\end{equation}
Here an entry $S:P$ means $\sum_{a\in S}A_{xa}=P$.

\subsection{Four equal components and their measurement effects}
\label{app:exact-high}

The pair supports are now fixed. We next determine the differences within the pairs on the first four subspaces. Their off-diagonal blocks will supply the local coordinate identifications.

\begin{lemma}
  \label{lem:four-components}
  There are local identifications, on the reduced-state supports,
  \[
    \mathcal H_X=(\C^4\otimes\mathcal K_X)\oplus\mathcal L_X,
    \qquad X=A,B,
  \]
  under which $P_i,Q_i$ are the coordinate projections and
  \begin{equation}
    \Omega=\sum_{i=0}^{3}\ket{ii}\eta+\phi,
    \qquad
    \eta\in\mathcal K_A\otimes\mathcal K_B\otimes\mathcal H_E,
    \quad
    \phi\in\mathcal L_A\otimes\mathcal L_B\otimes\mathcal H_E.
    \label{eq:four-components}
  \end{equation}
  Alice's effects at questions zero and one that are supported in the first four components are the corresponding projections in Table~\ref{tab:alice}, tensored with the residual identity. Their remaining effects are the projection onto $\mathcal L_A$. The same high-component identification holds for outcomes zero and one at questions two and three.
\end{lemma}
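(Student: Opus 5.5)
The plan is to run the exact analogue of the quantitative argument of Appendix~\ref{app:high-relations}--\ref{app:high-branch-alignment}, now with every error equal to zero and with faithfulness available on the compressed supports.

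\textbf{Step 1: exact reflections and agreement relations.} By the agreement relations of Appendix~\ref{app:exact-supports}, the reflections $f,F,g,G,\widehat f,\widehat F,\widehat g,\widehat G$ of~\eqref{eq:rs:reflections} and the hatted ones are genuine reflections. They satisfy $f\Omega=F\Omega$, $g\Omega=G\Omega$, and similarly for the hatted ones. Since $F,G$ and $\widehat F,\widehat G$ commute on Bob's side, the reverse-order transfer rule gives $fg=gf$. Faithfulness then upgrades this to an operator identity.

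\textbf{Step 2: reflections exchange labels exactly.} The leakage bounds, the product identities~\eqref{eq:rt:high-first-product}--\eqref{eq:rt:high-second-product}, and the anticommutations~\eqref{eq:rt:high-anticommutation} all become exact. Concretely, $[F,P]\Omega=0$ for $P\in\{Q_{01},Q_{23}\}$, and faithfulness gives $[F,P]=0$. Likewise $G$ commutes with $Q_{02},Q_{13}$. The forbidden rectangles of Table~\ref{tab:high} give $FG=\widehat F\widehat G$ on $Q_{12}$ and $FG=-\widehat F\widehat G$ on $Q_{03}$. Combining these with $JF=\widehat F$ and $KG=\widehat G$ on the state (from~\eqref{eq:rt:high-signed-reflections} at $\delta=0$) yields $F$ anticommuting with $K$ on $H$, and $G$ anticommuting with $J$.

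It follows that $F$ has zero diagonal blocks on the pairs $\{0,1\}$ and $\{2,3\}$, and $G$ has zero diagonal blocks on $\{0,2\}$ and $\{1,3\}$. Together with $F^2=G^2=I$, this forces $FQ_iF=Q_{i\oplus1}$ and $GQ_iG=Q_{i\oplus2}$. These are the $\delta=0$ versions of Lemma~\ref{lem:rt:high-endpoints}. In the same way, $F$ and $G$ act as $-I$ and $+I$ respectively on $Q_4$, by~\eqref{eq:rt:low-scalar-restrictions}.

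\textbf{Step 3: coordinates.} Set $\mathcal K_B=\operatorname{ran}Q_0$ and $\mathcal L_B=\operatorname{ran}Q_4$. Identify $\operatorname{ran}Q_i$ with $\mathcal K_B$ through $T_i^B=(I,F,G,-FG)_i$. Because $F$ and $G$ commute and square to $I$, the restrictions of $F,G$ to $\operatorname{ran}(Q_0+\dots+Q_3)$ become the fixed signed permutations of~\eqref{eq:rt:high-four-vectors}, tensored with $I_{\mathcal K_B}$. Do the same on Alice's side with $f,g$ and $P_i$.

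The relation $T_i^AT_i^BQ_i\Omega=Q_0\Omega$ holds exactly, since $f\Omega=F\Omega$ and $g\Omega=G\Omega$. It gives $\Omega=\sum_{i<4}\ket{ii}\eta+\phi$ with $\eta=Q_0\Omega$ and $\phi=Q_4\Omega$. The off-diagonal components $P_iQ_j\Omega$ with $i\ne j$ vanish because $P_i\Omega=Q_i\Omega$.

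\textbf{Step 4: the effects.} The Alice effects at $x=0,1$ are obtained from the pair sums~\eqref{eq:pair-supports} and the differences $f$ and $g$ restricted to each pair, using~\eqref{eq:binary-recovery}. For example, $A_{01}-A_{00}$ equals $f$ on $P_0+P_1$, and this is the swap $\ket0\bra1+\ket1\bra0$ in the new coordinates. That yields the projections onto $(e_0\pm e_1)/\sqrt2$. The remaining effects $A_{04}$ and $A_{10}$ equal $P_4$, the projection onto $\mathcal L_A$.

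For $x=2,3$, outcomes $0$ and $1$ are supported on $P_1+P_2$ and $P_0+P_3$ respectively. Their differences are $\widehat f\widehat g$-type combinations, which the transfer rule identifies with $\pm\widehat F\widehat G=\pm FG$ compressed to these pairs. This produces the lines $(e_1\pm e_2)/\sqrt2$ and $(e_0\pm e_3)/\sqrt2$ with the labels fixed by Table~\ref{tab:high}.

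\textbf{Main obstacle.} I expect the hardest step to be the sign and label bookkeeping in Step~4 for $x=2,3$. One must verify that the forbidden events of Table~\ref{tab:high} fix which of $A_{x0},A_{x1}$ gets the plus line, in the same coordinates already fixed by $F$ and $G$. I would handle this by computing $(A_{x1}-A_{x0})\Omega$ through its Bob counterpart $N_1$ or $I-N_1$. Faithfulness then promotes that vector identity to an operator identity.
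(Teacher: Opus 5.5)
Your proof is correct, but it is organized differently from the paper's. You run the $\delta=0$ specialization of Appendix~\ref{app:high-relations}--\ref{app:high-branch-alignment} on Bob's side, with $F,G,\widehat F,\widehat G,J,K$. The paper runs the same cancellation on Alice's side, with $r,s,r',s',Z_1,Z_2$; these are the restrictions to $H$ of $f,g,\widehat f,\widehat g,j_A,k_A$.

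The two choices differ in which relations come for free. On Alice's side, $r'=Z_1r$, $s'=Z_2s$ and $[r,Z_1]=[s,Z_2]=0$ hold inside single measurements. There, $[r,s]=[r',s']=0$ and \eqref{eq:high-product-identities} must be transferred from Bob. On your side, $[F,G]=[\widehat F,\widehat G]=0$ are free. However, $JF=\widehat F$, $KG=\widehat G$, $[F,Q_{01}]=0$ and $[G,Q_{02}]=0$ need transfer plus faithfulness.

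Your step ``combining these yields'' uses all of these, including $[\widehat F,\widehat G]=0$. From $FGQ_{12}=JFKGQ_{12}=KGJFQ_{12}$ one gets $GQ_{12}=JKGQ_{12}$ and $FQ_{12}=KJFQ_{12}$. The same holds with a minus sign on $Q_{03}$. Then $[F,Q_4]=[G,Q_4]=0$ completes the exchange of $Q_{12}$ and $Q_{03}$.

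The main gain of your route is the state step. The identity $T_i^AT_i^BQ_i\Omega=Q_0\Omega$ gives \eqref{eq:four-components} at once. It avoids the matrix-unit transfer \eqref{eq:matrix-unit-agreement} and its well-definedness argument. Your coordinates are also already those of the fixed circuit. They coincide with the paper's, since $\widehat f\xi=f\xi$, $\widehat g\xi=g\xi$ and $\widehat f\widehat g\xi=-fg\xi$ for $\xi\in P_0\mathcal H_A$. The paper checks this only later, in Lemma~\ref{lem:rm:fixed-circuit}. The paper's one-sided version instead produces Bob's matrix units as transferred words, which is the form it reuses in Appendix~\ref{app:rm:commuting}.

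The sign bookkeeping you flagged works out. Because $T_3=-fg$, $f$ and $g$ act as negative swaps on $\{2,3\}$ and $\{1,3\}$, respectively. Hence $A_{03}-A_{02}=-f$ and $A_{14}-A_{13}=-g$ are positive swaps there. Likewise, Table~\ref{tab:high} gives $A_{21}-A_{20}=(P_1+P_2)fg$ and $A_{31}-A_{30}=-(P_0+P_3)fg$. Both are positive swaps, matching Table~\ref{tab:alice}.
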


\begin{proof}
  We first obtain commuting coarse projections, then use their block relations to identify the four subspaces. The last four agreements in Table~\ref{tab:coarse} give the projections
  \begin{equation}
    \begin{aligned}
      F_1&=A_{01}+A_{02},&G_1&=A_{10}+A_{12}+A_{13},\\
      F_2&=A_{00}+A_{02},&G_2&=A_{12}+A_{14}.
    \end{aligned}
    \label{eq:four-coarse-projections}
  \end{equation}
  Their Bob counterparts are, in the same order,
  \[
    B_{10}+B_{11},\quad B_{10}+B_{12}+B_{13},\quad
    B_{20}+B_{24},\quad B_{20}+B_{21}.
  \]
  The two counterparts at each question commute. Transferring their products gives $[F_1,G_1]=[F_2,G_2]=0$.

  Restricting $F_1$ to the first two pair supports in~\eqref{eq:pair-supports} shows that $A_{01}$ and $A_{02}$ are projections. Restricting $G_1$ to $P_0+P_2$ and $P_1+P_3$ gives the same conclusion for $A_{12}$ and $A_{13}$. Their complements within each pair are projections as well. These restrictions are products of commuting projections with agreement relations, so every effect at questions zero and one has an agreement relation.

  Set $H=P_0+P_1+P_2+P_3$. Define the four differences
  \[
    r_{01}=A_{01}-A_{00},\qquad r_{23}=A_{03}-A_{02},\qquad
    s_{02}=A_{12}-A_{11},\qquad s_{13}=A_{14}-A_{13}.
  \]
  Each squares to the projection onto its indicated pair. On $H\mathcal H_A$, put
  \begin{equation}
    r=r_{01}-r_{23},\qquad s=s_{02}-s_{13},\qquad
    r'=r_{01}+r_{23},\qquad s'=s_{02}+s_{13}.
    \label{eq:high-reflections}
  \end{equation}
  These are reflections. Since, on this subspace,
  \[
    r=2F_1-H,\qquad s=2G_1-H,\qquad
    r'=H-2F_2,\qquad s'=2G_2-H,
  \]
  we have $[r,s]=[r',s']=0$.

  To use the high-component constraints, define
  \[
    C_1=F_1(I-G_1)+(I-F_1)G_1,\qquad
    C_2=F_2G_2+(I-F_2)(I-G_2).
  \]
  These projections agree with $B_{11}+B_{12}+B_{13}$ and $B_{20}+B_{22}+B_{23}$, respectively. On $H$, they equal $(H-rs)/2$ and $(H-r's')/2$.

  If an effect $E$ has zero probability with the complement of a Bob projection agreeing with a projection $C$, then
  $\sqrt E(I-C)\Omega=0$. Faithfulness gives $E(I-C)=(I-C)E=0$, so $E$ is supported in $C$. If its complementary effect within a pair is supported in $I-C$, their sum commutes with $C$, and $E$ is their sum multiplied by $C$. Applying this observation to both outcomes in each indicated pair in Table~\ref{tab:high} gives
  \begin{equation}
    \begin{aligned}
      A_{20}&=(P_1+P_2)C_1=(P_1+P_2)C_2,\\
      A_{30}&=(P_0+P_3)(I-C_1)=(P_0+P_3)C_2,\\
      (P_1+P_2)(rs-r's')&=0,\qquad
      (P_0+P_3)(rs+r's')=0.
    \end{aligned}
    \label{eq:high-product-identities}
  \end{equation}
  In the first two lines, every product is a product of commuting projections.

  The product identities give the coordinate exchanges directly. On $H\mathcal H_A$, define
  \[
    Z_1=P_0+P_1-P_2-P_3,\qquad Z_2=P_0-P_1+P_2-P_3.
  \]
  The pair supports imply $[r,Z_1]=[s,Z_2]=0$ and $r'=Z_1r$, $s'=Z_2s$. Combining the last two identities in~\eqref{eq:high-product-identities} gives
  \[
    r's'=-Z_1Z_2rs.
  \]
  Substitute $r'=Z_1r$ and $s'=Z_2s$, then cancel $Z_1$ on the left and $s$ on the right. This yields $rZ_2=-Z_2r$. Since $r's'=s'r'$ and $rs=sr$, the same equation reads $Z_2sZ_1r=-Z_2Z_1sr$, giving $sZ_1=-Z_1s$. Thus
  \begin{equation}
    \{r,Z_2\}=0,\qquad \{s,Z_1\}=0.
    \label{eq:high-anticommutation}
  \end{equation}
  Hence $r'$ exchanges $P_0\leftrightarrow P_1$ and $P_2\leftrightarrow P_3$, while $s'$ exchanges $P_0\leftrightarrow P_2$ and $P_1\leftrightarrow P_3$. Set $\mathcal K_A=P_0\mathcal H_A$ and use $r',s',r's'$ to identify coordinates one, two, and three with coordinate zero. Their commutation makes all four pair differences positive swaps with identity off-diagonal blocks. The first two lines of~\eqref{eq:high-product-identities} then give the projections onto $e_1-e_2$ and $e_0-e_3$, tensored with $I_{\mathcal K_A}$. Their complements are the projections onto the corresponding positive sums. This determines Alice's effects in the statement.

  It remains to transfer these coordinates to Bob and identify the state. The four coordinate projections and these swaps generate the standard matrix units $E^A_{ij}$ on $\C^4\otimes\mathcal K_A$. Each generator has a self-adjoint Bob counterpart agreeing on $\Omega$. Transfer each product in reverse order to define $E^B_{ji}$ with
  \begin{equation}
    E^A_{ij}\Omega=E^B_{ji}\Omega.
    \label{eq:matrix-unit-agreement}
  \end{equation}
  This definition is independent of the polynomial chosen: a vanishing polynomial on Alice transfers to an operator annihilating $\Omega$ on Bob, which vanishes by faithfulness. The same argument transfers multiplication and adjoints, so the $E^B_{ij}$ are matrix units with diagonal projections $Q_i$. They identify Bob's first four subspaces with $\C^4\otimes\mathcal K_B$.

  The diagonal agreements in~\eqref{eq:five-sectors} give
  $\Omega=\sum_{i=0}^{3}\ket{ii}\eta_i+\phi$.
  For each $i,j$, equation~\eqref{eq:matrix-unit-agreement} equates the coefficients of $\ket{ij}$ and gives $\eta_i=\eta_j$. Thus all four vectors equal a common $\eta$, proving~\eqref{eq:four-components}.
\end{proof}

The same coarse projections determine Bob's effects except for one binary splitting at each of his last two questions. Their joint eigenspaces are
\begin{equation}
  \begin{array}{c|cccc}
    y&\{0\}&\{1\}&\{2,3\}&\{4\}\\ \hline
    1&c_0\mathcal K_B&c_1\mathcal K_B&h\mathcal K_B\oplus\mathcal L_B&h'\mathcal K_B\\
    2&d_0\mathcal K_B&d_1\mathcal K_B&k\mathcal K_B\oplus\mathcal L_B&k'\mathcal K_B
  \end{array}
  \label{eq:bob-pair-supports}
\end{equation}
Here the vectors are those in Table~\ref{tab:bob}, and $v\mathcal K_B$ denotes $\{v\otimes\xi:\xi\in\mathcal K_B\}$. To verify the table, let $\overline F_i,\overline G_i$ be the Bob counterparts of~\eqref{eq:four-coarse-projections}. At question one the four projections are
\[
  \overline F_1\overline G_1,\quad
  \overline F_1(I-\overline G_1),\quad
  (I-\overline F_1)\overline G_1,\quad
  (I-\overline F_1)(I-\overline G_1).
\]
At question two they are
\[
  \overline F_2\overline G_2,\quad
  (I-\overline F_2)\overline G_2,\quad
  (I-\overline F_2)(I-\overline G_2),\quad
  \overline F_2(I-\overline G_2).
\]
Substitution of the positive swaps gives the high-coordinate lines displayed in~\eqref{eq:bob-pair-supports}. On the low subspace, the eigenvalues of $(\overline F_1,\overline G_1)$ are $(0,1)$, and those of $(\overline F_2,\overline G_2)$ are $(0,0)$. Thus both binary pairs contain the entire low subspace and the singleton effects contain none of it.

\subsection{The fifth component and the remaining binary differences}
\label{app:exact-low}

Only the differences within the pairs $\{2,3\}$ at Alice's questions two and three and Bob's questions one and two remain undetermined. The next lemma identifies all four differences and the fifth state component at once.

\begin{lemma}
  \label{lem:fifth-component}
  There are unitaries $U_A:\mathcal L_A\to\mathcal K_A$ and $U_B:\mathcal L_B\to\mathcal K_B$ such that
  \begin{equation}
    (U_A\otimes(-U_B)\otimes I_E)\phi=\eta/2.
    \label{eq:fifth-amplitude}
  \end{equation}
  In the pair coordinates specified below, Alice's two remaining binary differences are both
  $\left(\begin{smallmatrix}0&U_A\\U_A^{\dagger}&0\end{smallmatrix}\right)$,
  and Bob's are both
  $\left(\begin{smallmatrix}0&U_B\\U_B^{\dagger}&0\end{smallmatrix}\right)$.
\end{lemma}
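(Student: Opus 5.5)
The proof follows the outline of Section~\ref{sec:measurement-exact}, now carried out on the compressed spaces using Lemma~\ref{lem:four-components} and~\eqref{eq:bob-pair-supports}.

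\emph{Pair coordinates.} By~\eqref{eq:pair-supports}, Alice's pair $\{2,3\}$ at $x=2$ has support $P_0+P_4$ and at $x=3$ support $P_2+P_4$. We identify both high parts with $\mathcal K_A$: the first via coordinate zero, the second via the swap $s'$ carrying $P_2$ to $P_0$. The low part $\mathcal L_A$ is kept fixed. For Bob, the pair $\{2,3\}$ at $y=1,2$ is $h\mathcal K_B\oplus\mathcal L_B$ and $k\mathcal K_B\oplus\mathcal L_B$, respectively. We identify $h\otimes\xi$ and $k\otimes\xi$ with $\xi\in\mathcal K_B$.

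\emph{Conditional vectors.} Next we project~\eqref{eq:four-components} onto each of the four product supports and express the result in these coordinates. The high coefficient is $\langle e_{r_x},v\rangle$ with $v\in\{h,k\}$, which gives $z_-,z_-,z_+,z_-$ with $z_\pm=\pm\eta/2+\phi$ at $(x,y)=(2,1),(3,1),(2,2),(3,2)$. Each local marginal of $z_\pm$ is (one quarter of the marginal of $\eta$) $\oplus$ (marginal of $\phi$). These marginals are faithful on the pair spaces, because $\Omega$ is faithful on the whole support and the compressions of $\Omega$ to $P_0$ and $P_4$ have exactly these marginals.

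\emph{Binary relations.} We apply~\eqref{eq:exact-agreement} to the rows of Table~\ref{tab:low}, with the pair-restricted effects in place of $E,F$. In the disagreement row we first replace one effect by its complement within the pair. This shows that $A_{x2},A_{x3},B_{y2},B_{y3}$ are projections on the pairs. Their differences $R,R',C,D'$ are therefore reflections, and they satisfy~\eqref{eq:four-binary-relations}. Faithfulness of the $z_-$ marginals then gives the operator identities $R=R'$, and from these $C=D'$.

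\emph{Anticommutation and the amplitude.} With $J_X=(-I_{\mathcal K_X})\oplus I_{\mathcal L_X}$ we have $J_Az_+=J_Bz_+=z_-$. Cross-party commutation gives~\eqref{eq:measurement-anticommutation}, and faithfulness of $z_+$ yields $\{R,J_A\}=0$ and $\{C,J_B\}=0$. Hence both reflections are block off-diagonal. Self-adjointness and $R^2=I$ make the off-diagonal block $U_A$ unitary, so $\dim\mathcal K_A=\dim\mathcal L_A$; the same holds for $U_B$. Finally we compare the (high Alice, low Bob) components of $Rz_-=Cz_-$. On the left this is $U_A\phi$. On the right it is $U_B^\dagger$ applied to the high--high component $-\eta/2$. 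This gives~\eqref{eq:fifth-amplitude}.

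\emph{Main obstacle.} The delicate step is the bookkeeping of the pair coordinates, specifically checking the signs $s_{xy}$ from $\langle e_0,h\rangle,\langle e_2,h\rangle,\langle e_0,k\rangle,\langle e_2,k\rangle$. We also need the two Alice identifications (via coordinate $0$ and via $s'$) to be compatible, so that $R$ and $R'$ act on the same space and $R=R'$ is meaningful. The same compatibility is needed for $C$ and $D'$. I would verify these directly from~\eqref{eq:lines} and Lemma~\ref{lem:four-components} before running the reflection argument.
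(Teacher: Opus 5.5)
Your proposal is correct and follows essentially the same route as the paper. It uses the same pair coordinates and reads off the conditional vectors $z_-,z_-,z_+,z_-$ from the overlaps of $e_0,e_2$ with $h,k$; it applies the agreement identity to the restricted binary POVMs to obtain reflections satisfying the four relations, uses anticommutation with $J_X$ to force unitary off-diagonal blocks, and takes the high-Alice, low-Bob component of $Rz_-=Cz_-$ for the amplitude (the compatibility you flag is automatic, since the pair coordinates come from the identifications of Lemma~\ref{lem:four-components} in which $\Omega=\sum_{i<4}\ket{ii}\eta+\phi$).
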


\begin{proof}
  By~\eqref{eq:pair-supports} and~\eqref{eq:bob-pair-supports}, Alice's pairs $\{2,3\}$ act on $e_0\mathcal K_A\oplus\mathcal L_A$ and $e_2\mathcal K_A\oplus\mathcal L_A$ at questions two and three. Bob's corresponding pairs act on $h\mathcal K_B\oplus\mathcal L_B$ and $k\mathcal K_B\oplus\mathcal L_B$. Identify each high component with its residual space, keeping the low space fixed. Write the differences, second effect minus first, as
  \[
    R=A_{23}-A_{22},\qquad R'=A_{33}-A_{32},\qquad
    C=B_{13}-B_{12},\qquad D'=B_{23}-B_{22}
  \]
  in these pair coordinates.

  Projecting~\eqref{eq:four-components} onto each pair of supports gives an unnormalized conditional vector on
  $(\mathcal K_A\oplus\mathcal L_A)\otimes
  (\mathcal K_B\oplus\mathcal L_B)\otimes\mathcal H_E$.
  Since $\langle e_0,h\rangle=\langle e_2,h\rangle=\langle e_2,k\rangle=-1/2$ and $\langle e_0,k\rangle=1/2$, the vectors at question pairs $(2,1),(3,1),(2,2),(3,2)$ are, respectively,
  \begin{equation}
    z_-,\quad z_-,\quad z_+,\quad z_-,
    \qquad z_\pm=\pm\eta/2+\phi.
    \label{eq:conditional-signs}
  \end{equation}
  Both $z_+$ and $z_-$ have full local support on these pair spaces. Indeed, their marginals are direct sums of one quarter of the corresponding marginal of $\eta$ and the marginal of $\phi$. Those component marginals are faithful on $\mathcal K_X$ and $\mathcal L_X$, because the original reduced states are faithful.

  Table~\ref{tab:low} forbids unequal binary outcomes at the first, second, and fourth question pairs, and equal outcomes at the third. Applying~\eqref{eq:exact-agreement} to the restricted binary POVMs makes all four differences reflections and gives exactly~\eqref{eq:four-binary-relations}. We have now verified the support, sign, and faithfulness hypotheses used in Section~\ref{sec:measurement-exact}. The calculation~\eqref{eq:measurement-anticommutation} forces the block forms~\eqref{eq:measurement-binary-blocks}, and~\eqref{eq:measurement-amplitude} proves~\eqref{eq:fifth-amplitude}.
\end{proof}

The unitary identifications between the high and low spaces also rule out vanishing summands, so no positive outcome probability was assumed.

\subsection{The canonical strategy and the original POVM effects}
\label{app:exact-effects}

We now assemble the coordinate identifications and recover the individual effects from their pair sums and differences. We then show that the original effects preserve the occupied supports.

Define local unitaries $\mathcal C_X$ from the reduced-state supports to $\C^5\otimes\mathcal K_X$. Keep the first four coordinates and map the low subspaces by
\[
  \xi\in\mathcal L_A\longmapsto\ket4\otimes U_A\xi,
  \qquad
  \zeta\in\mathcal L_B\longmapsto\ket4\otimes(-U_B)\zeta.
\]
Equations~\eqref{eq:four-components} and~\eqref{eq:fifth-amplitude} give
\begin{equation}
  (\mathcal C_A\otimes\mathcal C_B\otimes I_E)\Omega
   =\left(\sum_{i=0}^{3}\ket{ii}+\tfrac12\ket{44}\right)\eta
   =\ket{\psistar}\otimes\chi,
  \qquad \chi=\frac{\sqrt{17}}2\eta.
  \label{eq:canonical-state}
\end{equation}
Normalization gives $\norm{\eta}^{2}=4/17$, so $\chi$ is normalized.

The same identifications determine every labeled effect. Lemma~\ref{lem:four-components} gives Alice's first two rows and her high pairs at questions two and three. Lemma~\ref{lem:fifth-component} makes her two low-pair differences positive swaps in the new coordinates, while~\eqref{eq:pair-supports} fixes their sums. Their ordered effects are therefore the projections onto $e_0-e_4,e_0+e_4$ and $e_2-e_4,e_2+e_4$, respectively. The last effects are the singleton projections onto $e_3$ and $e_1$.

On Bob's side,~\eqref{eq:bob-pair-supports} fixes the singleton effects and both pair sums. The minus sign in his low-space identification makes the binary differences negative swaps. If the high line is $v$, a negative swap gives the ordered rank-one projections onto $v+e_4$ and $v-e_4$. Thus these are precisely the effects in Table~\ref{tab:bob}. The equalities $R=R'$ and $C=D'$ ensure that one low-space identification works at both questions on each side. We have proved, with all labels unchanged,
\begin{equation}
  \mathcal C_A A_{xa}\mathcal C_A^{\dagger}
    =A^\star_{xa}\otimes I_{\mathcal K_A},\qquad
  \mathcal C_B B_{yb}\mathcal C_B^{\dagger}
    =B^\star_{yb}\otimes I_{\mathcal K_B}.
  \label{eq:canonical-effects}
\end{equation}

It remains to pass from compressed effects back to the original POVMs. Every compressed effect in~\eqref{eq:canonical-effects} is a projection. If $0\leq E\leq I$, $s$ is a support projection, and $sEs$ is a projection on $s\mathcal H$, then
\begin{equation}
  ((I-s)Es)^{\dagger}((I-s)Es)
    =sE^2s-(sEs)^2
    \leq sEs-(sEs)^2=0.
  \label{eq:no-leakage}
\end{equation}
Thus $E$ preserves the support, and its compression is its restriction. Enlarge the residual spaces if necessary and extend $\mathcal C_A,\mathcal C_B$ to isometries $W_A,W_B$ on the original local Hilbert spaces. Equations~\eqref{eq:canonical-state}--\eqref{eq:no-leakage} give the state and measurement-action identities of Theorem~\ref{thm:strategy-exact}, including products of the effects. The local identifications depend only on the strategy and preserve the purification system. Tracing out that system yields the mixed-state factorization with the residual density operator $\sigma=\Tr_E\ket\chi\bra\chi$. This completes the proof.

\Needspace{10\baselineskip}
\section{Simultaneous robustness of the measurement actions}
\label{app:robust-measurements}

We supply the exact identities needed for the compactness argument in Section~\ref{sec:measurement-compactness}. First we verify that the fixed extraction circuit of~\eqref{eq:rs:isometries} implements the classification in Appendix~\ref{sec:proof}, then extend its zero-loss identities to all commuting representations. The final subsection lifts the resulting projective-strategy self-test to mixed states and general POVMs, completing Theorem~\ref{thm:strategy-robust}.

Throughout this appendix, $\Psi_\star=\ket{\psistar}\!\bra{\psistar}$. Until the final subsection, measurement symbols denote projective measurements on the full device spaces, rather than the compressed POVMs used at the start of Appendix~\ref{sec:proof}.

\subsection{The fixed circuit at perfect score}
\label{app:rm:fixed}

The canonical unitaries $\mathcal C_X$ in Appendix~\ref{app:exact-effects} were constructed only on the occupied supports. The fixed maps $\mathcal V_A,\mathcal V_B$ are defined on the full device spaces at every score, with noncommutative polynomial coefficients. We first verify that they realize the same extracted coordinates at perfect score, with the coordinate-zero spaces and Alice's fixed flag included in the residual systems. This fixes the extraction prescription before passing to limits.

\begin{lemma}\label{lem:rm:fixed-circuit}
  For a perfect projective strategy, the fixed isometries satisfy, on the respective local reduced-state supports,
  \begin{equation}
    \mathcal V_A A_{xa}=(A^\star_{xa}\otimes I)\mathcal V_A,
    \qquad
    \mathcal V_B B_{yb}=(B^\star_{yb}\otimes I)\mathcal V_B.
    \label{eq:rm:fixed-intertwining}
  \end{equation}
  Their joint action on the state extracts $\ket{\psistar}$.
\end{lemma}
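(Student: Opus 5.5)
The plan is to compare the fixed circuit with the canonical classification of Appendix~\ref{sec:proof}, working on the local reduced-state supports. For a perfect projective strategy the compressed effects are the original projections restricted to the supports, by~\eqref{eq:no-leakage}. The canonical unitaries $\mathcal C_A,\mathcal C_B$ therefore exist and satisfy~\eqref{eq:canonical-state} and~\eqref{eq:canonical-effects}. Every coefficient of $\mathcal V_X$ is a polynomial in the projections, so it preserves the support. We may thus compute every coefficient in canonical coordinates, where each effect is $A^\star_{xa}\otimes I$ or $B^\star_{yb}\otimes I$.

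Take Alice first. At perfect score, \eqref{eq:five-sectors} gives $K_i=P_i$, and the extra branches $L_0,L_1$ vanish on the support because $A_{10}U=P_4U=0$. Hence Alice's flag is constantly $\ket{\mathrm g}$, and $\mathcal V_A\zeta=\sum_i\ket i\otimes T_i^AP_i\zeta\otimes\ket{\mathrm g}$. In canonical coordinates the reflections are explicit. By Lemma~\ref{lem:four-components}, $f=2(A_{01}+A_{02})-I$ equals $-r'$ on $H$ and $-I$ on the low space. Similarly $g$ equals $s'$ on $H$ and $+I$ on the low space. By Lemma~\ref{lem:fifth-component}, $T^A_4=I-2A_{22}$ is the positive swap $e_0\leftrightarrow e_4$ on its pair, plus a sign elsewhere. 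Then I would check that each $v_i:=T_i^AP_i$ equals $\ket0\bra i\otimes I_{\mathcal K_A}$ up to a sign that is constant on the range. For example, $fP_1=\pm\ket0\bra1$, $gP_2=\ket0\bra2$, and $-fgP_3$ should give $\ket0\bra3$, where the minus sign in~\eqref{eq:rs:high-unitaries} compensates. Finally $T^A_4P_4$ gives $\ket0\bra4$ because of the swap. The same computation for Bob uses $F,G,-FG$ and the Bob matrix units of~\eqref{eq:matrix-unit-agreement}. For the fifth branch it uses $T^B_4=-V_h(I-2B_{12})V_h^\dagger$: the conjugation by $V_h$ moves the line $h$ to $e_0$ (cf.~\eqref{eq:rt:high-coordinate-action} at $\delta=0$), and the leading minus sign implements Bob's $-U_B$ from~\eqref{eq:fifth-amplitude}.

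Once every $v_i$ equals $\ket0\bra i\otimes I$ exactly, the $e_{ij}=v_i^\dagger v_j$ are the canonical matrix units. Then~\eqref{eq:measurement-circuit-units} holds, and linearity together with~\eqref{eq:canonical-effects} gives~\eqref{eq:rm:fixed-intertwining}. Alice's map is $\mathcal V_A=(\mathcal C_A\text{ followed by }\ket{\cdot}\mapsto\ket{\cdot}\otimes\ket0\text{-residual},\ket{\mathrm g})$ up to this identification, and similarly for Bob. Applying $\mathcal V_A\otimes\mathcal V_B$ to $\Omega$ then reproduces~\eqref{eq:canonical-state}, giving $\ket{\psistar}\otimes\chi\otimes\ket{\mathrm g}$. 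The same conclusion also follows from~\eqref{eq:rs:total-error} at $\delta=0$.

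The main obstacle is sign bookkeeping. The coefficients must be exactly $+\ket0\bra i$ rather than $\pm$: a wrong global sign on one branch would spoil the matrix units, because it would replace $e_{ij}$ by $-e_{ij}$ and flip the intertwined effect differences. To handle this I would fix the canonical identifications in Lemma~\ref{lem:four-components} using $-f,g$ (equivalently $r',s'$) on Alice and their Bob counterparts, and then absorb any residual signs into the choice of $\mathcal C_X$ on each coordinate. Canonical coordinates are determined only up to such choices, and~\eqref{eq:canonical-effects} survives them provided the relative signs between the two swaps and between the low swaps are preserved. Verifying this consistency for Bob's fifth branch is the step I would check most carefully. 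There the sign comes from the composite $V_h$, and the sign of $\langle e_0,h\rangle=-1/2$ must match the choice $-U_B$.
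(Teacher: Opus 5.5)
Your route is the paper's: evaluate each coefficient $v_i^A=T_i^AP_i$, $v_i^B=T_i^BQ_i$ in the canonical coordinates of Appendix~\ref{sec:proof}, show it equals $\ket0\bra i\otimes I$, and conclude with~\eqref{eq:measurement-circuit-units}. The proposal has a genuine gap, though, because the lemma's content is that these coefficients carry sign $+1$, and you neither establish this nor get the ingredients right.

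Your identification $f|_H=-r'$ is wrong. Since $f=2F_1-I$, the relation $r=2F_1-H$ in the proof of Lemma~\ref{lem:four-components} gives $f|_H=r=Z_1r'$. This is $+r'$ on $P_0+P_1$ and $-r'$ on $P_2+P_3$. Likewise $g|_H=s=Z_2s'$, not $s'$. Taken literally, your formula gives $fP_1=-\ket0\bra1\otimes I$. The correct one gives $fP_1=\ket0\bra1\otimes I$, $gP_2=\ket0\bra2\otimes I$ and $gP_3=-\ket1\bra3\otimes I$, hence $-fgP_3=\ket0\bra3\otimes I$.

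Your fallback of absorbing leftover signs into $\mathcal C_X$ is not available. The circuit is fixed and $v_0^X$ is the coordinate-zero projection. If $v_i=\epsilon_i\ket0\bra i\otimes I$, the circuit intertwines $A_{xa}$ with $\diag(\epsilon)A^\star_{xa}\diag(\epsilon)$. The off-diagonal entries of the ideal effects connect all five coordinates, so this equals $A^\star_{xa}$ for every label only if all $\epsilon_i=1$. For instance, $\epsilon_1=-1$ exchanges the labels of $A_{00}$ and $A_{01}$. Equivalently,~\eqref{eq:canonical-effects} with unchanged labels determines $\mathcal C_X$ only up to $I\otimes u$. A wrong sign would therefore falsify the lemma, and ``up to a constant sign'' is not enough.

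The same problem affects Bob's fifth branch: you never check that $V_h$ acts trivially on the low coordinate, and any leftover phase would survive in $v_4^B$. There $F=-I$ and $G=I$, so $E(F)E(G)$ contributes $\frac{1-\mathrm i}{\sqrt2}\cdot\frac{1+\mathrm i}{\sqrt2}=1$, and $D_hQ_4=Q_4$. In canonical coordinates $B_{13}-B_{12}$ is the negative swap of $h$ and $e_4$. Suppressing the residual factor, this gives $(I-2B_{12})e_4=-h$. Then $V_h$ sends $h$ to $e_0$, and the leading minus yields $v_4^B=\ket0\bra4\otimes I$. Citing~\eqref{eq:rt:high-coordinate-action} at $\delta=0$ does not replace this, since it concerns vectors built from $\Omega$ and the high part only.

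The simplest repair uses your own first observation. Every coefficient is a polynomial in effects that equal $A^\star_{xa}\otimes I$ and $B^\star_{yb}\otimes I$ in canonical coordinates, so the lemma reduces to a finite computation on the reference strategy. For example, with $\pi(v)$ as in~\eqref{eq:red:compression}:
\begin{itemize}
\item $\pi(c_0)+\pi(c_1)=\pi(e_0+e_1)+\pi(e_2-e_3)$;
\item $\pi(c_0)+\pi(h)=\pi(e_0+e_2)+\pi(e_1-e_3)$.
\end{itemize}
Hence $F$ and $G$ are the same signed permutations as $f$ and $g$. Carrying this out gives exactly the sign facts the paper's proof asserts: positive identity blocks for $i<4$, cancellation by the minus sign at $i=3$, and the trivial low phase of $V_h$.
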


\begin{proof}
  The classification in Appendix~\ref{sec:proof} makes every local support invariant under the measurement projections. At perfect score, Alice's label operators obey $K_i=P_i$ and her bad-flag operators vanish on that support. In the coordinates of Lemma~\ref{lem:four-components}, the first four transport products $T_i^AP_i$ and $T_i^BQ_i$ send coordinate $i$ to coordinate zero with positive identity block. For $i=3$, the minus sign in $-fg$ and $-FG$ cancels the negative block of the product of the two reflections.

  Alice's last transport $I-2A_{22}$ has block $U_A$ from the low space to coordinate zero. On Bob's support, the chart $V_h$ sends $h$ to $e_0$ and fixes the low coordinate: the two low-coordinate phases of $E(F)E(G)$ multiply to one, as does the low entry of $D_h$. Consequently $-V_h(I-2B_{12})V_h^{\dagger}$ has low-to-zero block $-U_B$. These are exactly the two low-space identifications used in~\eqref{eq:canonical-state}.

  On the respective supports, the coefficients $v_i^A=T_i^AP_i$ and $v_i^B=T_i^BQ_i$ therefore implement the canonical coordinate identifications. Their products $e_{ij}^X=(v_i^X)^\dagger v_j^X$ are matrix units, and the classified effects have the ideal matrices as coefficients. The calculation~\eqref{eq:measurement-circuit-units} proves the two intertwining identities. Alice's fixed good flag belongs to the residual space, and~\eqref{eq:canonical-state} gives the state assertion.
\end{proof}

\subsection{The same identities in every commuting representation}
\label{app:rm:commuting}

The compactness argument requires exact identities in a space that contains limits of strategies of unbounded dimension. We therefore extend the preceding lemma to arbitrary commuting representations. The only support property needed is faithfulness in the local support corners; no tensor decomposition or finite-dimensional inverse is required.

Let $\mathfrak A$ be the universal unital $C^*$-algebra generated by projections $a_{xa}$ and $b_{yb}$, for $0\leq x\leq3$, $0\leq y\leq2$, and $0\leq a,b\leq4$, subject to
\[
  \begin{gathered}
    \sum_a a_{xa}=I,\qquad \sum_b b_{yb}=I,
    \qquad [a_{xa},b_{yb}]=0,\\
    a_{xa}a_{xc}=0\ (a\ne c),\qquad
    b_{yb}b_{yd}=0\ (b\ne d).
  \end{gathered}
\]
Thus every pair of commuting projective measurement families gives a representation of $\mathfrak A$. Denote its two local subalgebras by $\mathfrak A_A$ and $\mathfrak A_B$, and define
\begin{equation}
  h_{\mathrm{loss}}=\frac1{12}
    \sum_{\mathsf w(x,y,a,b)=0}a_{xa}b_{yb},
  \qquad 0\leq h_{\mathrm{loss}}\leq I.
  \label{eq:rm:universal-loss}
\end{equation}
The products are positive because the local projections commute. A state on this algebra is a positive linear functional taking the value one at the identity.

\begin{lemma}\label{lem:rm:perfect-commuting}
  Let $\omega$ be any state on $\mathfrak A$ satisfying
  $\omega(h_{\mathrm{loss}})=0$, and let
  $(\pi,\mathcal H,\Omega)$ be its Gelfand--Naimark--Segal representation. Form the joint isometry column $\mathcal V$ by multiplying the commuting coefficients of $\mathcal V_A$ and $\mathcal V_B$. Then
  \begin{align}
    (\Psi_\star\otimes I)\mathcal V\Omega&=\mathcal V\Omega,
      \label{eq:rm:perfect-state}\\
    [\mathcal V a_{xa}-(A^\star_{xa}\otimes I)\mathcal V]\Omega&=0,
      \label{eq:rm:perfect-alice}\\
    [\mathcal V b_{yb}-(I\otimes B^\star_{yb}\otimes I)\mathcal V]\Omega&=0
      \label{eq:rm:perfect-bob}
  \end{align}
  for every label. Here the representation symbol is suppressed, the ideal matrices act on the respective extracted registers, and the final identity includes the residual and flag spaces. No finite-dimensionality assumption is made on $\mathcal H$.
\end{lemma}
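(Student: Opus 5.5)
The plan is to rerun the exact classification of Appendix~\ref{sec:proof} inside the GNS representation. Finite-dimensional faithfulness of reduced states is replaced by faithfulness on local support corners. Let $s_A$ be the projection onto $\overline{\pi(\mathfrak A_B)'\Omega}$-type support: concretely, $s_A$ is the smallest projection in $\pi(\mathfrak A_A)''$ with $s_A\Omega=\Omega$, and $s_B$ is defined analogously. The key tool is the standard fact that for $L\in\pi(\mathfrak A_A)''$, $L\Omega=0$ implies $Ls_A=0$. This holds because $\Omega$ is cyclic for $\pi(\mathfrak A)$, so $L\Omega=0$ gives $L\pi(\mathfrak A_B)\pi(\mathfrak A_A)\Omega=0$. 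Hence $L$ vanishes on $\overline{\pi(\mathfrak A_A)'\Omega}$, whose projection lies in $\pi(\mathfrak A_A)''$ and dominates $s_A$. Every use of ``a local operator annihilating $\Omega$ vanishes'' in Appendix~\ref{sec:proof} then becomes ``vanishes on the support corner $s_X$''. All local operators we manipulate are polynomials in the generators, so they lie in $\pi(\mathfrak A_X)''$, and the argument stays within these corners.

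First, zero loss gives $\omega(a_{xa}b_{yb})=0$ for every forbidden pair, hence $a_{xa}b_{yb}\Omega=0$. The identity~\eqref{eq:exact-agreement} then yields the agreement relations $E\Omega=F\Omega$ for all rows of Tables~\ref{tab:coarse}--\ref{tab:low}; projectivity is automatic here. Transfer in reverse order across these relations gives the five sectors~\eqref{eq:five-sectors} as identities on $\Omega$. Faithfulness on corners upgrades them to operator identities on $s_A\mathcal H$ and $s_B\mathcal H$. Next, repeat Lemma~\ref{lem:four-components} with $r,s,r',s',Z_1,Z_2$ compressed to the corners, obtaining the anticommutations~\eqref{eq:high-anticommutation}. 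Rather than constructing tensor decompositions, I will phrase the conclusion purely algebraically. On the corners, the circuit coefficients $v_i^X=T_i^XK_i$ (with $K_i=P_i$ and $L_j=0$ on the support) satisfy the relations that make $e^X_{ij}=(v_i^X)^\dagger v_j^X$ matrix units. The effects equal $\sum_{ij}(A^\star_{xa})_{ij}e^A_{ij}$ there, and the matrix units transfer across parties as in~\eqref{eq:matrix-unit-agreement}. For the fifth branch, I will check that Lemma~\ref{lem:robust-two-subspaces} with $a=b=0$ already gives $T_4^AT_4^BQ_4\Omega=\tfrac12 Q_0\Omega$, using only relations on $\Omega$. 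The exact version of Lemma~\ref{lem:rt:conditional} and~\eqref{eq:rt:low-conditional} follows by setting $\delta=0$ in Appendix~\ref{app:robust-transport}. Those estimates are purely vector estimates using cross-party commutation, so they hold verbatim in any commuting representation. This is the cheapest route, since it yields~\eqref{eq:rm:perfect-state} directly: with $\delta=0$,~\eqref{eq:rs:total-error} gives $\mathcal V\Omega=(\sum_{i<4}\ket{ii}+\tfrac12\ket{44})\otimes\eta_0\otimes\ket{\mathrm g}$.

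For~\eqref{eq:rm:perfect-alice}, I will compute $\mathcal V a_{xa}\Omega=\sum_k\ket k\otimes v_k^Aa_{xa}\Omega$ (Bob's part commutes through). I then need $v_k^A a_{xa}s_A=\sum_j(A^\star_{xa})_{kj}v^A_j s_A$ on the corner. By~\eqref{eq:measurement-circuit-units} this reduces to the matrix-unit relations $v_k^Ae^A_{ij}=\delta_{ki}v^A_j$ and the expansion of $a_{xa}$ in the $e^A_{ij}$, both on $s_A\mathcal H$. Applying these to $\Omega=s_A\Omega$ gives the identity. Bob's case is symmetric, with the $-U_B$ sign on the fifth coordinate coming from $T_4^B=-\mathsf C$ as in Lemma~\ref{lem:rm:fixed-circuit}.

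The main obstacle is establishing the matrix-unit relations and the effect expansions as corner operator identities without a finite-dimensional decomposition $\C^4\otimes\mathcal K\oplus\mathcal L$. In particular, the fifth-coordinate block requires showing that $R=A_{23}-A_{22}$ is an off-diagonal unitary between $P_0s_A$ and $P_4s_A$. In Appendix~\ref{sec:proof} this used faithfulness of the marginals of $z_\pm$ on the pair spaces. Here I would derive $\{R,J_A\}z_+=0$ as in~\eqref{eq:measurement-anticommutation} and conclude $\{R,J_A\}s_A=0$ by corner faithfulness. This needs the support of $z_\pm$ under $\pi(\mathfrak A_A)''$ to be the full pair corner. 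I expect to obtain that from $z_\pm=(\pm\tfrac12 v_0^{A\dagger}\ldots)$, i.e.\ by exhibiting $Q_0\Omega$ and $Q_4\Omega$ as local images of $z_\pm$ under bounded Alice and Bob operators (e.g.\ $Q_4z_\pm=Q_4\Omega$, and $(I-Q_4)z_\pm$ related to $Q_0\Omega$ via $T_{r_x}$). Injectivity of the relevant Alice operators then follows from faithfulness on $P_0s_A$ and $P_4s_A$ separately.
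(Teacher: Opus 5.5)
Your plan for \eqref{eq:rm:perfect-alice}--\eqref{eq:rm:perfect-bob} follows the paper: support projections $s_X$, corner faithfulness, a rerun of Lemma~\ref{lem:four-components}, and pair faithfulness via $Q_4z_\pm=Q_4\Omega$, which is \eqref{eq:rm:pair-faithfulness}. But it omits a step the paper needs, namely that the \emph{original} generators map $s_X\mathcal H$ into itself. You justify this by saying the operators ``lie in $\pi(\mathfrak A_X)''$, and the argument stays within these corners''. Membership in $\pi(\mathfrak A_X)''$ does not give commutation with $s_X$, which is not known to be central a priori.

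This matters because the fixed-circuit coefficients and the effects in \eqref{eq:rm:perfect-alice} are polynomials in the uncompressed generators, while your classification yields identities on the corner. To pass from $v_k^Ae^A_{ij}=\delta_{ki}v^A_j$ and the effect expansions on $s_A\mathcal H$ to $v_k^Aa_{xa}\Omega$, you need $a_{xa}$ and the inner factors of each product to keep vectors in $s_A\mathcal H$.

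For a local operator $X$ with an opposite-party counterpart on $\Omega$ this is automatic: $X$ maps $\pi(\mathfrak A_A)'\Omega$ into itself, and $s_A\mathcal H=\overline{\pi(\mathfrak A_A)'\Omega}$. This covers the coarse sums, the label projections, and $f,g,F,G$. It does not cover the binary splits $A_{22},A_{23},A_{32},A_{33}$ and $B_{12},B_{13},B_{22},B_{23}$, which have no such counterpart. Yet these enter both the effects and the fifth transports $T_4^A=I-2A_{22}$ and $T_4^B=-V_h(I-2B_{12})V_h^{\dagger}$.

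The paper closes this with \eqref{eq:no-leakage}. Once each compressed effect $s_Xa\,s_X$ is shown to be a projection, positivity forces $(I-s_X)a\,s_X=0$, so polynomials in the original generators restrict to the same polynomials in the compressed ones. You need this step, or an equivalent argument using the binary relations and pair faithfulness, before your final matrix-unit computation is justified.

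Relatedly, your proof of the faithfulness tool is wrong as written. $L\Omega=0$ does not give $L\pi(\mathfrak A_B)\pi(\mathfrak A_A)\Omega=0$; by cyclicity that would force $L=0$, i.e.\ $\Omega$ separating for $\pi(\mathfrak A_A)''$, which is not available before the classification. The correct reason is $L\pi(\mathfrak A_A)'\Omega=\pi(\mathfrak A_A)'L\Omega=0$, and the conclusion $Ls_A=0$ stands.

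For \eqref{eq:rm:perfect-state} you take a genuinely different and valid route. Every estimate in Appendix~\ref{app:robust-transport} uses only projectivity, cross-party commutation and vectors built from $\Omega$, so it holds verbatim in the GNS representation. At zero loss $\delta=0$, and \eqref{eq:rs:total-error} becomes the exact identity $\mathcal V\Omega=(\sum_{i<4}\ket{ii}+\tfrac12\ket{44})\otimes\eta_0\otimes\ket{\mathrm g}$, with $\norm{\eta_0}^2=4/17$ because $\mathcal V$ is an isometry. This makes the state assertion literally the $\varepsilon=0$ case of the proof of Theorem~\ref{thm:state-robust}, with no support projections and no identification of circuit coefficients.

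The paper instead reads the state off the corner matrix units, \eqref{eq:rm:high-units} and \eqref{eq:rm:low-unit}. Since it must identify the circuit coefficients with those matrix units anyway for the effect identities, its route gets the state at no extra cost. Your shortcut, by contrast, saves nothing for \eqref{eq:rm:perfect-alice}--\eqref{eq:rm:perfect-bob}, where the invariance issue above must still be resolved.
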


\begin{proof}
  We first restrict to local support corners, then reproduce the high- and low-component identifications, and finally evaluate the fixed circuit. Put $\mathcal M_X=\pi(\mathfrak A_X)''$ for $X=A,B$. These von Neumann algebras commute. Let $s_X\in\mathcal M_X$ be the support projection of the normal vector functional
  $T\mapsto\langle\Omega,T\Omega\rangle$ on $\mathcal M_X$. In particular, $s_X\Omega=\Omega$, and the functional is faithful on the corner $s_X\mathcal M_Xs_X$:
  \begin{equation}
    T\in s_X\mathcal M_Xs_X,\quad T\Omega=0
      \quad\Longrightarrow\quad T=0.
    \label{eq:rm:corner-faithfulness}
  \end{equation}
  Indeed, the support is the complement of the largest zero-expectation projection. A positive element of the corner with zero expectation has zero spectral projection on $[t,\infty)$ for every $t>0$, and hence is zero. Apply this to $T^{\dagger}T$. This argument uses normal vector functionals, not a reduced density operator or an assumption that the GNS vector is separating on the original algebra.

  On $s_As_B\mathcal H$, the restrictions of the two corners commute and have common identity $s_As_B$. Their representations there are faithful: an element vanishing on this space annihilates $\Omega$, so~\eqref{eq:rm:corner-faithfulness} applies. Compress every local measurement generator to its corner. The compressed families are POVMs with the same joint probabilities, since cross-party commutation and $s_A\Omega=s_B\Omega=\Omega$ give
  \[
    \langle\Omega,(s_Aa_{xa}s_A)(s_Bb_{yb}s_B)\Omega\rangle
       =\langle\Omega,a_{xa}b_{yb}\Omega\rangle.
  \]
  We next establish the exact classification within these corners.

  The identity~\eqref{eq:exact-agreement} applies to commuting positive contractions on a common Hilbert space. Together with local corner faithfulness, it gives the same agreement and projectivity conclusions as in Appendix~\ref{app:exact-supports}. They produce five projections $P_i,Q_i$ with
  \[
    \sum_{i=0}^4P_i=s_A,\qquad \sum_{i=0}^4Q_i=s_B,
    \qquad P_i\Omega=Q_i\Omega=:\Omega_i.
  \]
  The high-component proof also uses only bounded-operator identities: the cancellations giving~\eqref{eq:high-anticommutation} are valid in the corners, and the resulting commuting reflections identify all four high subspaces. They therefore give matrix units $e_{ij}^X$, $0\leq i,j<4$. Transferring words in reverse order across $\Omega$ is well-defined by~\eqref{eq:rm:corner-faithfulness}, and gives
  \begin{equation}
    e_{ij}^A\Omega=e_{ji}^B\Omega,
    \qquad e_{0i}^Ae_{0i}^B\Omega_i=\Omega_0
       \quad(0\leq i<4).
    \label{eq:rm:high-units}
  \end{equation}
  The diagonal entries are $P_i$ and $Q_i$. These identities replace the tensor-coordinate expression~\eqref{eq:four-components} in the commuting setting. The same joint eigenspaces give Bob's pair supports in~\eqref{eq:bob-pair-supports}.

  To justify the low-component argument, put $p=P_0+P_4$ and $q=Q_0+Q_4$. An Alice pair with high coordinate $r\in\{0,2\}$ and a Bob pair with high line $v\in\{h,k\}$ are mapped into these common pair corners by
  \[
    u_A=e_{0r}^A+P_4,
    \qquad u_B=\sum_{i<4}v_i e_{0i}^B+Q_4.
  \]
  They are commuting partial isometries with final projections $p,q$. The matched-label identities and~\eqref{eq:rm:high-units} give
  $u_Au_B\Omega=v_r\Omega_0+\Omega_4$.
  Thus the four conditional vectors are again
  $z_-,z_-,z_+,z_-$, where
  $z_\pm=\pm\Omega_0/2+\Omega_4$.
  Their local functionals are faithful on the pair corners. For example, if $T\in p\mathcal M_Ap$, the orthogonal opposite-party projections $Q_0,Q_4$ imply
  \begin{equation}
    \norm{Tz_\pm}^{2}
       =\tfrac14\norm{T\Omega_0}^{2}+\norm{T\Omega_4}^{2}.
    \label{eq:rm:pair-faithfulness}
  \end{equation}
  If this vanishes, then $T\Omega=T(\Omega_0+\Omega_4)=0$, so $T=0$ by~\eqref{eq:rm:corner-faithfulness}. The Bob argument is identical.

  Apply~\eqref{eq:exact-agreement} to the four restricted binary POVMs. The differences in their common pair corners are reflections $R,R',C,D'$ satisfying exactly~\eqref{eq:four-binary-relations}, with the present $z_\pm$. Pair faithfulness gives $R=R'$ and $C=D'$. Since
  $z_-=(-P_0+P_4)z_+=(-Q_0+Q_4)z_+$,
  the same calculation gives
  \[
    \{R,-P_0+P_4\}=0,\qquad
    \{C,-Q_0+Q_4\}=0.
  \]
  Consequently $t_A=P_0RP_4$ and $t_B=Q_0CQ_4$ obey
  \[
    t_A^{\dagger}t_A=P_4,\quad t_At_A^{\dagger}=P_0,
    \qquad t_B^{\dagger}t_B=Q_4,\quad t_Bt_B^{\dagger}=Q_0.
  \]
  Taking the high-Alice, low-Bob part of $Rz_-=Cz_-$ and then applying $-t_B$ gives
  \begin{equation}
    t_A(-t_B)\Omega_4=\tfrac12\Omega_0.
    \label{eq:rm:low-unit}
  \end{equation}

  Set $v_i^X=e_{0i}^X$ for $i<4$, $v_4^A=t_A$, and $v_4^B=-t_B$. The operators
  $e_{ij}^X=(v_i^X)^{\dagger}v_j^X$, now for $0\leq i,j\leq4$, form complete matrix-unit systems on the local corners. The pair sums and differences recover every labeled compressed effect as
  \begin{equation}
    a_{xa}^{\mathrm c}=\sum_{i,j=0}^4(A^\star_{xa})_{ij}e_{ij}^A,
    \qquad
    b_{yb}^{\mathrm c}=\sum_{i,j=0}^4(B^\star_{yb})_{ij}e_{ij}^B.
    \label{eq:rm:corner-effects}
  \end{equation}
  Each compressed effect is therefore a projection, so~\eqref{eq:no-leakage} shows that each original generator preserves its local support. Restricting a polynomial in the original generators is therefore the same as evaluating it in the compressed effects.

  The block computation in Lemma~\ref{lem:rm:fixed-circuit} now applies in the local corners. The fixed circuit has coefficients $v_i^X$ there, with Alice's good flag, and
  $v_k^Xe_{ij}^X=\delta_{ki}v_j^X$ proves the intertwining identities on the entire local support. Matched labels,~\eqref{eq:rm:high-units}, and~\eqref{eq:rm:low-unit} give
  \[
    \mathcal V\Omega
      =\left(\sum_{i<4}\ket{ii}+\tfrac12\ket{44}\right)
         \otimes\Omega_0\otimes\ket{\mathrm g}.
  \]
  Its norm is one because $\mathcal V$ is an isometry. In particular,
  $\norm{\Omega_0}^{2}=4/17$, and the extracted factor is $\ket{\psistar}$. This proves all three assertions.
\end{proof}

\subsection{Lifting the measurement and purity assumptions}
\label{app:rm:povm}

Lemma~\ref{lem:rm:perfect-commuting} verifies the zero-loss implication used in Section~\ref{sec:measurement-compactness}. The estimates~\eqref{eq:rm:projective-errors}--\eqref{eq:measurement-common-error} therefore establish a robust self-test for pure projective strategies. We now invoke the assumption-lifting theorem of Baptista et al.~\cite[Theorem 4.1(a)]{BaptistaEtAl}: a robust self-test for pure projective strategies with a pure, full-Schmidt-rank target is a robust self-test for arbitrary mixed states and POVMs. The reference strategy is perfect and has Schmidt rank five on $\C^5\otimes\C^5$, so the theorem applies. Its local-dilation definition~\cite[Definition 2.5]{BaptistaEtAl} controls the state and all single-effect actions simultaneously while leaving the purification environment untouched.

Consequently, there is a dimension-independent modulus $s(\varepsilon)\to0$ for these actions. At zero loss we take $s(0)=0$ by Theorem~\ref{thm:strategy-exact}. Choose the isometries using a minimal purification. Every other purification is obtained by an isometry on the environment, so the same local isometries work for all purifications.

Joint effects require only one further triangle inequality. Write $\xi=\mathcal W\Omega$, $\Phi=\ket{\psistar}\otimes\chi$, and, for fixed labels, set
\[
  X=W_AM_{a|x}W_A^{\dagger},\quad
  Y=W_BN_{b|y}W_B^{\dagger},\quad
  P=A^\star_{xa},\quad Q=B^\star_{yb},
\]
with identities on the other systems suppressed. These are contractions and opposite-party operators commute. Thus
\begin{align}
  \norm{XY\xi-PQ\Phi}
  &\leq\norm{X(Y\xi-Q\Phi)}
      +\norm{Q(X\xi-P\Phi)}+\norm{XQ(\Phi-\xi)}\nonumber\\
  &\leq3s(\varepsilon).
  \label{eq:rm:joint-effects}
\end{align}
Taking
\begin{equation}
  r(\varepsilon)=\min\{2,3s(\varepsilon)\}
  \label{eq:rm:final-modulus}
\end{equation}
proves Theorem~\ref{thm:strategy-robust}. The lifting theorem supplies local isometries for the original POVMs; the fixed-circuit statement above concerns projective strategies.


\begin{thebibliography}{99}
  \raggedright
  \small
  \bibitem{BrassardBroadbentTapp}
  G.~Brassard, A.~Broadbent, and A.~Tapp, \emph{Quantum pseudo-telepathy}, Foundations of Physics \textbf{35}, 1877--1907 (2005). \href{https://doi.org/10.1007/s10701-005-7353-4}{doi:10.1007/s10701-005-7353-4}.

  \bibitem{JungePalazuelos}
  M.~Junge and C.~Palazuelos, \emph{Large violation of Bell inequalities with low entanglement}, Communications in Mathematical Physics \textbf{306}, 695--746 (2011). \href{https://doi.org/10.1007/s00220-011-1296-8}{doi:10.1007/s00220-011-1296-8}.

  \bibitem{VidickWehner}
  T.~Vidick and S.~Wehner, \emph{More nonlocality with less entanglement}, Physical Review A \textbf{83}, 052310 (2011). \href{https://doi.org/10.1103/PhysRevA.83.052310}{doi:10.1103/PhysRevA.83.052310}.

  \bibitem{Mancinska}
  L.~Man\v{c}inska, \emph{Maximally entangled states in pseudo-telepathy games}, \href{https://arxiv.org/abs/1506.07080}{arXiv:1506.07080} (2015).

  \bibitem{RennerEtAl}
  M.~J.~Renner, E.~P.~Lobo, A.~Konderak, R.~Augusiak, and A.~Ac\'in, \emph{All pure entangled states can lead to fully nonlocal correlations}, \href{https://arxiv.org/abs/2604.26605}{arXiv:2604.26605} (2026).

  \bibitem{Lalonde}
  O.~Lalonde, \emph{Maximally entangled states are not complete for pseudo-telepathy}, \href{https://arxiv.org/abs/2608.05378v1}{arXiv:2608.05378v1} (5 August 2026).

  \bibitem{MayersYao}
  D.~Mayers and A.~Yao, \emph{Self testing quantum apparatus}, Quantum Information and Computation \textbf{4}, 273--286 (2004). \href{https://arxiv.org/abs/quant-ph/0307205}{arXiv:quant-ph/0307205}.

  \bibitem{SupicBowles}
  I.~\v{S}upi\'c and J.~Bowles, \emph{Self-testing of quantum systems: a review}, Quantum \textbf{4}, 337 (2020). \href{https://doi.org/10.22331/q-2020-09-30-337}{doi:10.22331/q-2020-09-30-337}.

  \bibitem{McKagueYangScarani}
  M.~McKague, T.~H.~Yang, and V.~Scarani, \emph{Robust self-testing of the singlet}, Journal of Physics A: Mathematical and Theoretical \textbf{45}, 455304 (2012). \href{https://doi.org/10.1088/1751-8113/45/45/455304}{doi:10.1088/1751-8113/45/45/455304}.

  \bibitem{BampsPironio}
  C.~Bamps and S.~Pironio, \emph{Sum-of-squares decompositions for a family of Clauser--Horne--Shimony--Holt-like inequalities and their application to self-testing}, Physical Review A \textbf{91}, 052111 (2015). \href{https://doi.org/10.1103/PhysRevA.91.052111}{doi:10.1103/PhysRevA.91.052111}.

  \bibitem{ColadangeloGohScarani}
  A.~Coladangelo, K.~T.~Goh, and V.~Scarani, \emph{All pure bipartite entangled states can be self-tested}, Nature Communications \textbf{8}, 15485 (2017). \href{https://doi.org/10.1038/ncomms15485}{doi:10.1038/ncomms15485}.

  \bibitem{CuiEtAl}
  D.~Cui, A.~Mehta, H.~Mousavi, and S.~S.~Nezhadi, \emph{A generalization of CHSH and the algebraic structure of optimal strategies}, Quantum \textbf{4}, 346 (2020). \href{https://doi.org/10.22331/q-2020-10-21-346}{doi:10.22331/q-2020-10-21-346}.

  \bibitem{WuEtAl}
  X.~Wu, J.-D.~Bancal, M.~McKague, and V.~Scarani, \emph{Device-independent parallel self-testing of two singlets}, Physical Review A \textbf{93}, 062121 (2016). \href{https://doi.org/10.1103/PhysRevA.93.062121}{doi:10.1103/PhysRevA.93.062121}.

  \bibitem{ColadangeloStark}
  A.~Coladangelo and J.~Stark, \emph{Robust self-testing for linear constraint system games}, \href{https://arxiv.org/abs/1709.09267v2}{arXiv:1709.09267v2} (2019).

  \bibitem{PaddockEtAl}
  C.~Paddock, W.~Slofstra, Y.~Zhao, and Y.~Zhou, \emph{An operator-algebraic formulation of self-testing}, Annales Henri Poincar\'e \textbf{25}, 4283--4319 (2024). \href{https://doi.org/10.1007/s00023-023-01378-y}{doi:10.1007/s00023-023-01378-y}.

  \bibitem{Zhao}
  Y.~Zhao, \emph{Robust self-testing for nonlocal games with robust game algebras}, \href{https://arxiv.org/abs/2411.03259}{arXiv:2411.03259} (2024).

  \bibitem{BaptistaEtAl}
  P.~Baptista, R.~Chen, J.~Kaniewski, D.~R.~Lolck, L.~Man\v{c}inska, T.~G.~Nielsen, and S.~Schmidt, \emph{A mathematical foundation for self-testing: Lifting common assumptions}, Annales Henri Poincar\'e (2025). \href{https://doi.org/10.1007/s00023-025-01642-3}{doi:10.1007/s00023-025-01642-3}. Also available as \href{https://arxiv.org/abs/2310.12662}{arXiv:2310.12662}, whose theorem numbering we use.

  \bibitem{MancinskaSchmidt}
  L.~Man\v{c}inska and S.~Schmidt, \emph{Counterexamples in self-testing}, Quantum \textbf{7}, 1051 (2023). \href{https://doi.org/10.22331/q-2023-07-11-1051}{doi:10.22331/q-2023-07-11-1051}.
\end{thebibliography}
\end{document}